\newcommand{\corr}[1]{\langle {#1} \rangle}
\newcommand{\half}{\frac{1}{2}} \newcommand{\vac}{|0\rangle}
\newcommand{\bp}{{\mathbf p}} \newcommand{\bs}{\mathbf{s}}
\newcommand{\bt}{{\bf t}}  \newcommand{\bT}{\mathbf{T}}
\newcommand{\bX}{\mathbf{X}} 
 \newcommand{\cA}{{\mathcal A}}
\newcommand{\cB}{{\mathcal B}} \newcommand{\cH}{{\mathcal H}}
\newcommand{\cC}{{\mathcal C}} \newcommand{\cL}{{\mathcal L}}
 \newcommand{\cP}{{\mathcal P}} \newcommand{\cQ}{{\mathcal Q}}
 \newcommand{\cS}{{\mathcal S}}
  \newcommand{\cZ}{{\mathcal Z}}
 \newcommand{\bR}{{\mathbb R}}
\newcommand{\bC}{{\mathbb C}}
 \newcommand{\fh}{{\mathfrak h}}
\newcommand{\pd}{{\partial}}
\DeclareMathOperator{\ev}{ev}
\newcommand{\be}{\begin{equation}}
\newcommand{\ee}{\end{equation}}
\newcommand{\ben}{\begin{eqnarray*}}
\newcommand{\een}{\end{eqnarray*}}
\newcommand{\bea}{\begin{eqnarray}}
\newcommand{\eea}{\end{eqnarray}}
\newcommand{\Brac}[2]{\begin{Bmatrix} {#1} \\ {#2} \end{Bmatrix}}
\definecolor{yellow}{rgb}{1,1,0}
\definecolor{orange}{rgb}{1,.7,0}
\definecolor{red}{rgb}{1,0,0}
\definecolor{white}{rgb}{1,1,1}
\definecolor{A}{rgb}{.75,1,.75}
\DeclareMathOperator{\Li}{Li}
\newtheorem{Theorem}{Theorem}[section]
\newtheorem{Theorem/Definition}{Theorem/Definition}[section]
\newtheorem{Proposition}{Proposition}[section]
\newtheorem{Corollary}{Corollary}[section]
\begin{document}
\title[Mathematics Related to the Interpolating Statistics]
{On Some Mathematics Related to the Interpolating Statistics}

\author{Jian Zhou}
\address{Department of Mathematical Sciences\\Tsinghua University\\Beijng, 100084, China}
\email{jzhou@math.tsinghua.edu.cn}

\begin{abstract}
Motivated by fractional quantum Hall effects, we introduce a universal space of statistics
interpolating Bose-Einstein statistics and Fermi-Dirac statistics.
We connect the interpolating statistics to umbral calculus
and use it as a bridge to study the interpolation statistics by
the principle maximum entropy by deformed entropy functions.
On the one hand this connection makes it possible to relate fractional quantum
Hall effects to many different mathematical objects, including formal group
laws, complex bordism theory, complex genera, operads, counting trees, spectral curves
in Eynard-Orantin topological recursions, etc.
On the other hand,
this also suggests to reexamine umbral calculus from the point of view of quantum mechanics
and statistical mechanics.
\end{abstract}


\maketitle


\section{Introduction}

Because statistical mechanics generally involves counting numbers of states,
it is naturally expected that it is closely related to enumerative combinatorics.
Traditionally,
three different rules of counting states
give rise to Boltzmann-Gibbs statistics, Bose-Einstein statistics, and Fermi-Dirac statistics£¬
respectively.
The number of states of $n_i$ identical particles of type $i$ occupying a group of $g_i$ states is, respectively, given by
\ben
W_{BE}(\{n_i\}) & = & \prod_i \frac{(g_i + n_i -1)!}{n_i! (g_i -1)!},  \\
W_{FD}(\{n_i\}) & = & \frac{g_i!}{n_i! (g_i-n_i)!}, \\
W_{BG}(\{n_i\}) & = & \frac{g_i^{n_i}}{n_i!}.
\een
(See e.g. \cite[(8.40)-(8.42)]{Huang}.)
However£¬
 these counting problems   can be solved by elementary methods,
hence  more advanced combinatorial theories have not been applied in their study.

In 1982, the fractional quantum Hall effect \cite{Tsui-Stormer-Gossard} was discovered
by Tsui, Stormer and Gossard.
This leads to many new  progresses in statistical physics.
Following the original work of Laughlin \cite{Laughlin} describing the ground state of the
FQHE for a filling fraction $\nu = \frac{1}{m}$ (where $m$ is odd),
Halperin \cite{Hal} conjectured and Arovas, Shrieffer and Wilczek \cite{Aro-Sch-Wil}
showed that the quasiparticles and quasiholes in quantum Hall systems not only have
fractional charges, they also obey fractional statistics in the sense that
they are anyons in the sense of Wilczek \cite{Wil} (see also Leinnaas-Myrheim \cite{Lei-Myr}).
Suppose that we have two identical particles in two dimensions.
Then when one particle is exchanged in a counterclockwise manner with the other, the wavefunction can
change by an arbitrary phases
\be
\psi(\vec{r}_1,\vec{r}_2) \to e^{i\theta}\psi (\vec{r}_1,\vec{r}_2),
\ee
The special cases $\theta = 0, \pi$ correspond to bosons and fermions,
respectively.
Particles with other values of the angle $\theta = \alpha \pi$ are called anyons.
Fractional statistics in this sense involves geometric phases \cite{Berry},
and it indicates the topological nature of the fractional quantum Hall system.
In fact, fractional quantum Hall effect is an example of topological order,
a notion introduced
by Wen \cite{Wen} in 1989.

In 1991, Haldane \cite{Hal} reformulated the  concept of ``fractional statistics"
as a generalization of the Pauli exclusion principle.
He considered the situations in which
the number $g_i$ of independent states of a single particle  of type $i$ is finite
and extensive. This means the quasiparticles and qausiholes are  confined to a finite
region, and their number is proportional
to the size of region.
The generalized Pauli exclusion principle proposed by Haldane is the following linear relation:
\be
\Delta g_i = - \sum_j g_{ij} \Delta n_j.
\ee
Let $d_i$ be the value of $g_i$ when $n_i =1$,
then the number of $\{g_i\}$ for general $\{n_i\}$ are
\be
g_i = d_i + n_i-1 - \sum_j g_{ij}(n_j -\delta_{ij}).
\ee
For state-counting purposes at fixed particle numbers
the particles can be regarded as bosons with a Fock space of dimension $g_i$,
or fermions with a Fock space of dimension $g_i+n_i-1$,
so the number of states for given $\{n_i\}$ is given by
the following formula in Wu \cite[(4)]{Wu}:
\be
W(\{n_i\})
= \prod_i \frac{(g_i+(n_i-1)-\sum_j g_{ij}(n_j-\delta_{ij})]!}{n_i!
(g_i-1-\sum_j g_{ij}(n_j-\delta_{ij})]!}.
\ee
When $g_{ij}=\beta \delta_{ij}$,
it becomes
\be
W(\{n_i\})
= \prod_i \frac{(g_i+(n_i-1)(1-\beta)]!}{n_i!(g_i-1-\beta(n_i-1)]!}.
\ee
According to the fundamental principles of quantum statistical mechanics,
the grand partition function is given by
\be
Z = \sum_{\{n_i\}} W(\{n_i\})\exp \biggl[\sum_i (\mu_i-\epsilon_i)/kT \biggr],
\ee
where $T$ is the temperature, $k$ is the Boltzmann constant,
$\epsilon_i$ is the energy level of a particle of species $i$,
and $\mu_i$ is the chemical potential for species $i$.
Then one has
\be
Z = \prod_i Z_i,
\ee
where $Z_i$  is given by
\be
Z_i = \sum_{n_i} \frac{(g_i+(n_i-1)(1-\beta)]!}{n_i!(g_i-1-\beta(n_i-1)]!}
\exp \biggl[(\mu_i-\epsilon_i)/kT \biggr].
\ee
This leads to the notion of statistics interpolating between Bose and Fermi statistics
\cite{Cha-Sri}.
We will refer to them as interpolating statistics.
For a comparative study of various interpolating statistics,
see \cite{Cha-Sri}.

As our point of departure,
we will  consider the universal space $\cS$ of all possible interpolating statistics
and study some mathematics related to it.
This is consistent with a common mathematical practice: Form a set by collecting all mathematical
objects with the same properties, and then study the mathematical structures on this set.
This is also inspired by Wilson's renormalization theory in which he considered
the space of all coupling constants and the dynamics on it.
The basic mathematical objects in our study are formal power series,
so ultimately  the mathematical setting should be that of formal algebraic geometry.
However,
on the outset we will first establish a link between interpolating statistics to umbral calculus in
combinatorics,
and we will use this as a stepping stone to applications from the point of formal algebraic geometry
in subsequent work.

Before 1970s, umbral calculus refered to some powerful techniques to prove combinatorial identities
introduced by Blissard in 1861.
In the 1970s, Rota and his collaborators developed the mathematical
foundations of umbral calculus by means of linear functionals on the space of polynomials
\cite{Rota et al, Roman-Rota, Roman}.
Another approach to this theory is through the techniques of
Hopf algebras \cite{Joni-Rota}.
Clearly a connection between umbral calculus and quantum mechanics was known to Rota and his collaborators,
but this has not been fully elaborated.
The link with interpolating statistics provides more stimulus to interpret umbral calculus
from the point of view of quantum mechanics.
In the text we will reexamine umbral calculus using representations of Heisenberg commutation relations,
coherent states, Segal-Bargmann transform, Wick quantization, etc.

An application of the link between the interpolating statistics and the umbral calculus
is the study of interpolating statistics by the principle of maximum entropy.
This principle was first expounded by Jaynes in two papers in 1957 \cite{Jaynes1, Jaynes2}
where he emphasized a natural correspondence between statistical mechanics and information theory.
According to this principle,
Boltzmann distribution can be derived by maximizing the Shannon entropy  under
the constraint of a fixed expectation value of energy.
To derive the Bose-Eisnstein distribution and Fermi-Dirac distribution,
or more generally, all the interpolating statistics,
one needs to find suitable generalized entropy functions to be maximized.
In 1967, Havrda-Charv\'at introduced
structural $\alpha$-entropy \cite{Hav-Cha}  within information theory.
In 1988, Tsallis \cite{Tsallis88} rediscovered this entropy and proposed to use it
to study nonextensive statistical mechanics.
For more information,
see his monograph \cite{Tsallis09} published in 2009.
The Tsallis $q$-entropy is defined by
\be
S = \frac{1- \sum_i p_i^q}{q-1}
= -\sum_i p_i^q \cdot \frac{p_i^{1-q}-1}{1-q}.
\ee
In 1994, Tsallis \cite{Tsallis94} proposed to use the $q$-logarithm function
\be
\log_q x:= \frac{x^{1-q}-1}{1-q}
\ee
and the $q$-exponential function
\be
\exp_q(x):= [1+(1-q)x]^{\frac{1}{1-q}}
\ee
to unify the expression in traditional Boltzmann-Gibbs statistical mechanics
and his nonextensive statistical mechanics.
This idea was generalized by Naudts \cite{Naudts02} in 2002.
He introduced more general deformed exponentials and logarithms in generalized statistical physics.
See also his monograph \cite{Naudts11} published in 2011.

We make a connection between umbral calculus and the generalized entropy based on the following observation:
In umbral calculus,
one deals with the deformed power functions,
and so their exponential generating functions can be regarded as deformed exponential functions.
As a consequence of the connections of umbral calculus to both the interpolating statistics
and the generalized entropy,
we establish a connection between interpolating statistics and generalized entropy.
Our main result is that the generalized entropy function is essentially
the Legendre transform of the free energy (cf. Theorem \ref{thm:Main}),
as in ordinary thermodynamics.
This demonstrates the surprising power of applying umbral calculus to study
various aspects of interpolating statistics.
Such applications will be further pursued in subsequent work.

The rest of this paper is arranged as follows.
In Section \ref{sec:Statistics} we recall some definitions related to the statistics
interpolating Bose-Einstein statistics and Fermi-Dirac statistics.
In a very short Section \ref{sec:Space} we introduce the space
of interpolating statistics and a sequence of group structures on it.
We also introduce two different but related ways to associated spectral curves
to interpolating statistics.
We relate interpolating statistics to umbral calculus in Section \ref{sec:UC}.
We give umbral interpretation of the group structure of the space
$\cS$ of interpolating statistics.
Next  we relate umbral calculus to quantum mechanics in Section \ref{sec:QM}.
In particular,
generating series of associated series in umbral calculus are interpreted as
coherent states in quantum mechanics.
In Section \ref{sec:Deformed-Entropy} we interpret such generating series in
umbral calculus as deformed exponential functions
and this motivates us to relate interpolating statistics to generalized entropy functions.
Some concluding remarks are made in Section \ref{sec:Conclusions}.
We present many examples in Appendix A. Most of them are from Roman \cite[Chapter 4]{Roman},
but we present some computations related to their relationship with generalized entropy.
Some new examples are also presented there.
Many sequences on The On-Line Encyclopedia of Integer Sequences \cite{Sloane}
appear in these examples,
indicating very rich combinatorial interpretations to be explored.

\section{Statistics Interpolating Bose and Fermi Statistics}

\label{sec:Statistics}

In this Section we recall the general framework
for interpolations between Bose-Einstein and Fermi-Dirac statistics
at the level of grand canonical partition functions as developed by
 Chaturvedi and Srinivasan \cite{Cha-Sri}.
We will also define the entropy associated with the one-particle free enrrgy
by Legendre transformation.

\subsection{Grand canonical partition functions of noninteracting particles}

Assume that for a system of $N$ identical particles each of which can occupy
$n$ states corresponding to energies $E_1 <  \dots < E_n$,
the canonical partition function has the following structure:
\be
Z_N(x) = \sum_{\sum_{i=1}^n N_i = N} W_{N_1} \cdots W_{N_n} x_1^{N_1} \cdots x_n^{N_n},
\ee
where $x=x_1, \dots, x_n$; $x_i = \exp(-\beta E_i)$.
This assumption means that there is no interactions between particles at different energy levels,
and there are $W_{N_i}$ particles at energy level $E_i$ for each $i=1, \dots, n$.
Here $W_0 = 1$ is always assumed.
In most of our example,
we will also assume that $W_1 =1$.

Under the above assumption,
the grand partition canonical function has the following factorized form:
\be \label{eqn:GCP}
\cZ(X_1, \dots, X_n) = \sum_{N \geq 0}  e^{N\mu \beta} Z_N(x)
= \prod_{i=1}^n z (X_i),
\ee
where $X_i = \exp(-\beta(E_i-\mu))$ and
\be
z(X) = \sum_{n=0}^\infty W_n X^n
\ee
is called the {\em one-particle partition function}.

\subsection{Examples}

The Bose-Einstein, Maxwell-Boltzmann and Fermi-Dirac statistics are given by
\begin{align}
z^{BE}(X) & = \frac{1}{1 - X}, &
z^{FD}(X)& =1 + X, & z^{BG}(X) & = e^X,
\end{align}
respectively.

\subsection{Occupation numbers}

In the following expansion:
\be
z(X)^N = \sum_{k \geq 0} W_k(N) X^k,
\ee
the coefficients $W_k(N)$ are the numbers of $N$ particles occupying $k$ states,
and they are called the occupation numbers.

From the fact that $z(X)^{N_1} \cdot z(X)^{N_2} = z(X)^{N_1+N_2}$,
one gets the following recursion relations for occupation numbers:
\be \label{eqn:Occ}
W_k(N_1+N_2) = \sum_{i=0}^k W_i(N_1) W_{k-i}(N_2).
\ee

\subsection{Mean number of particles}

The expression for the mean number of particles in the
energy state $E_i$ for this class of statistics reads
\be
w(X_i) = X_i \frac{\pd}{\pd X_i} \log \cZ(W) = X_i \frac{\pd}{\pd X_i} \log z(X_i).
\ee
It will be called the {\em weight function} in this paper.
For Bose-Einstein, Maxwell-Boltzmann and Fermi-Dirac statistics we have
\begin{align}
w^{BE}(X) & = \frac{X}{1 - X}, &
w^{FD}(X) & = \frac{X}{1 + X}, &
w^{BG}(X) & = X,
\end{align}
respectively.
When $X = e^{-\beta(E-\mu)}$,
one gets the familiar expressions:
\begin{align}
w^{BE} & = \frac{1}{e^{\beta(E-\mu)} - 1}, &
w^{FD} & = \frac{1}{e^{\beta(E-\mu)} + 1}, &
w^{BG} & = e^{-\beta(E-\mu)}.
\end{align}

\subsection{Cluster coefficients}

We now expand the one-particle free energy as follows:
\be
F(X):=\log z(X) = \sum_{n=1}^\infty \frac{w_n}{n} X^n,
\ee
and so
\be
w(X) = \sum_{n=1}^\infty w_n X^n.
\ee

The coefficients $w_n$'s are called the {\em cluster coefficients}.
One can easily find the following combinatorial formula expressing the $W_n$'s as
polynomials in the  $w_n$'s:
\be
W_n = \sum_{\sum m_ii = n} \prod_i \frac{w_i^{m_i}}{i^{m_i}m_i!}.
\ee
For example,
\ben
&&W_1 = w_1, \\
&&W_2 = \frac{w_2}{2}+\frac{w_1^2}{2}, \\
&&W_3 = \frac{w_3}{3} + \frac{w_2w_1}{2} + \frac{w_1^3}{3!}.
\een
Conversely,
the sequence $\{w_n\}_{n \geq 1}$ is obtained from the sequence $\{W_n\}_{n \geq 1}$by:
\be
\sum_{n=1}^\infty \frac{w_n}{n} X^n
= \log (1+\sum_{n=1}^\infty W_n X^n).
\ee
For example,
\ben
&&w_1 = W_1, \\
&&w_2 = W_2-\frac{W_1^2}{2}, \\
&&w_3 = W_3 - W_2W_1 + \frac{W_1^3}{3}.
\een

\subsection{Entropy associated with the one-particle partition function}
\label{sec:Entropy}

The entropy function associated to the one-particle partition function $z=e^{F(x)}$
is obtained by the Legendre transformation of the free energy $F(X)$ as follows:
\be
H(X) := F(X) - \log X \cdot \frac{d F(X)}{d \log X}  =F(X) - \log X \cdot w(X).
\ee
The entropy is also called the effective action or one-particle irreducible partition correlation function.
As usual,
if one can invert the function $w=w(X)$ to get $X = X(w)$,
then $H$ can be expressed in terms of $w$:
\be
H = - w \log X(w) + F(X(w)).
\ee
For example,
we have
\ben
H^{BG} & = & - X \log X + X = -w^{BG} \log w^{BG} +w^{BG}, \\
H^{BE} & = & - \frac{X}{1-X} \cdot \log X + \log \frac{1}{1-X} \\
& = & -w^{BE} \log \frac{w^{BE}}{1+w^{BE}} + \log (1+w^{BE}), \\
H^{FD} & = & -\frac{X}{1+X} \cdot \log X + \log (1+X) \\
& = & - w^{FD} \cdot \log \frac{w^{FD}}{1-w^{FD}} + \log \frac{1}{1-w^{FD}}.
\een

\section{Space of Interpolating Statistics and Its Group Structures}

\label{sec:Space}

As our point of departure,
we will introduce in this Section the space $\cS$ of interpolating statistics,
on it an involution that interchanges the Bose-Einstein statistics
with the Fermi-Dirac statistics,
and a sequence of group structures on $\cS$.
Furthermore,
we define two versions of spectral curves associated with
elements in $\cS$.

\subsection{Space of interpolating statistics}

Since we have $w_1 = W_1 = 1$,
$w(X)$ is  of the form
\be
w(X) = X+ \sum_{n \geq 2} w_n X^n.
\ee
Given any series of this form,
one gets a statistics
whose grand canonical partition function is given by \eqref{eqn:GCP}.
So we now introduce a {\em space of interpolating statistics}:
\be
\cS = \{ w(X) = X + \sum_{n \geq 2} w_n X^n\}.
\ee

\subsection{Dual cluster coefficients and dual statistics}
\label{sec:Dual-Stat}

Given any formal sequence $w = X + \sum_{n \geq 2} w_n X^n \in \cS$,
by Lagrangian inversion,
one can show that
\be \label{eqn:X-in-w}
X = w + \sum_{n \geq 2} \hat{w}_n w^n
\ee
for some weighted homogeneous polynomials $\hat{w}_n$ in $\{w_2, w_2, w_n\}$ of degree $n-1$,
where $\deg w_j = j-1$.
The coefficients $\hat{w}_n$ will be called the {\em dual cluster coefficients}.
Therefore,
one can define the dual sequence of $w$ by:
\be
\hat{w}(X) = X + \sum_{n \geq 2} \hat{w}_n X^n.
\ee
The map $w \mapsto \hat{w}$ defines an involution $\sigma$ on $\cS$.
It is clear that $w_{BG}(X) = X$ is the only fixed point of this involution.
One can easily check that
\begin{align}
\hat{w}_{BE}(X) & = \frac{X}{1+X} = w_{FD}(X), &
\hat{w}_{FD}(X) & = \frac{X}{1-X} = w_{BE}(X).
\end{align}
In other words,
the involution $\sigma: \cS \to \cS$ interchanges the Bose-Einstein statistics with the Fermi-Dirac statistics,
and so we will refer to it as the {\em generalized boson-fermion duality}.

\subsection{A sequence of group structures on $\cS$}
\label{sec:Group1}

The construction in last subsection inspires us to introduce a sequence of group structures
on $\cS$.
Given two elements $w(X) = X + \sum_{n > 1} w_n X^n$ and $v(X) = X + \sum_{n > 1} v_n X^n$
in $\cS$,
one can define $v \circ w$ by composition of series:
\ben
(v\circ w) (X) & = & w(X) + \sum_{n \geq 1} v_n w(X)^n \\
& = & (X + \sum_{k > 1} w_k X^k)
+ \sum_{n > 1} v_n (X + \sum_{k>1} w_k X^k)^n.
\een
Clearly,
$(\cS, \circ)$ is a group.
For $m \geq 1$, and $w = X  + \sum_{N > 1} w_n X^n$,
define
\ben
&& w^{(m)}(X) = X + \sum_{n > 1} n^m w_n X^m,
\een
and define
\ben
v\circ_m w = v^{(m)} \circ w^{(m)}.
\een
We call it the the $m$-th group multiplication on $\cS$.
These group structures will not play a role in this paper.
We expect they are part of bigger symmetry structure on $\cS$
to be discovered in the future.

\subsection{Spectral curves associated with interpolating statistics}

Consider the curve in $(X,z)$-plane defined by
\be
z = z(X)
\ee
or the curve in the $(X, Y)$-plane defined by:
\be
e^Y = z(X).
\ee
They will be called the {\em spectral curves} associated
with the corresponding interpolating statistics.
They are inspired by Eynard-Orantin topological recursion.
See Appendix \ref{sec:Abel} and Appendix \ref{sec:Gould-Special}
for some interesting examples.

\section{Interpolating Statistics and Umbral Calculus}

\label{sec:UC}

In this Section we relate interpolating statistics
to umbral calculus.
This makes powerful combinatorial results accessible to the study of
interpolating statistics.
We will present some examples in Section \ref{sec:Examples}.
We will also give umbral interpretation of the group structure of the space
$\cS$ of interpolating statistics introduced in \S \ref{sec:Group1}.

\subsection{Occupation numbers as deformed binomial coefficients}

Recall the generating series of the occupation numbers
is given by:
\ben
\sum_{k =0}^\infty  W_k(N) X^k
= z(X)^N = e^{NF(X)} = \exp (N \sum_{n=1}^\infty \frac{w_n}{n} X^n).
\een
After expanding the right-hand side of the last equality
as a series in $X$,
one sees that each $W_k(N)$ is a polynomial of degree $k$ in $N$.

One can rewrite $W_k(N)$ as $\Brac{N}{k}$
and understand it as a {\em deformed binomial coefficient}.
This is because the following identities clearly hold for $n \geq 0$
by \eqref{eqn:Occ}:
\be
\sum_{i+j=n} \Brac{x}{i} \Brac{y}{j} = \Brac{x+y}{n}.
\ee
This is understood as the {\em deformed Chu-Vandermonde identities}.

\subsection{Occupation numbers, deformed power functions, and polynomial sequence of binomial type}

Let us write the occupation numbers as follows:
\be
W_k(N) = \frac{1}{k!} \gamma_k(N),
\ee
where the factor $\frac{1}{k!}$ is the Gibbs overcounting correction factor.
Then we have by \eqref{eqn:Occ}:
\be
\gamma_k(N_1+N_2) = \sum_{i=0}^k \binom{k}{i} \gamma_i(N_1) \gamma_{k-i}(N_2).
\ee

Recall a sequence of polynomials $\{\gamma_n(x) \}$
is said to be of {\em binomial type} \cite{Kung-Rota-Yan}
if and only if
\bea
&& \gamma_1(x) =x, \\
&& \gamma_n (x+y)= \sum^n_{j=0} \binom{n}{j} \gamma_j(x) \gamma_{n-j}(y)
\label{eqn:Binomial-Seq}
\eea
for all $x$, $y$  and $n$.
So we are led to the theory of polynomial sequence of binomial type
developed in the setting of umbral calculus \cite{Roman}.

\subsection{Examples of classical umbral calculus}

Umbral calculus is a symbolic method used by mathematicians in the 19th century
to magically ``prove" some identities.
For example,
starting with the binomial formula:
\be \label{eqn:Binomial}
(x+y)^n = \sum_{k=0}^n \binom{n}{k} x^k y^{n-k},
\ee
one gets:
\be
(x+y)_n = \sum_{k=0}^n \binom{n}{k} (x)_k(y)_{n-k},
\ee
where
\be
(x)_n = \frac{\Gamma(x+n)}{\Gamma(x)}.
\ee
Another example involves the Bernoulli polynomials defined by:
\be
B_n(x) = \sum_{k=0}^n \binom{n}{k} B_{n-k} x^k.
\ee
Here is a ``proof" that
\be
\frac{d}{dx} B_n(x) = n \cdot B_{n-1}(x),
\ee
it involves the interchanging of $B^n$ with $B_n$:
\ben
\frac{d}{dx}B_n(x)
&= &\frac{d}{dx} \sum_{k=0}^n \binom{n}{k} B_{n-k} x^k
\sim \frac{d}{dx} \sum_{k=0}^n \binom{n}{k} B^{n-k} x^k  \\
& = & \frac{d}{dx}(B+x)^n = n \cdot (B+x)^{n-1} \\
& = & n \cdot \sum_{k=0}^{n-1}\binom{n-1}{k}B^{n-1-k}x^k \\
& \sim &  n \cdot \sum_{k=0}^{n-1}\binom{n-1}{k}B_{n-1-k}x^k \\
 & = & n \cdot B_{n-1}(x).
\een
Here is one more example \cite[Proposition 4.2.1]{Kung-Rota-Yan}.
Suppose that a sequence $\{b_n\}_{n \geq 0}$is related to $\{a_n\}_{n\geq 0}$ by
the recursion relations:
\ben
&&b_n = \sum_{k=0}^n \binom{n}{k} a_k
\een
for all $n$, then raising the subscripts one gets:
\ben
&&b^n = \sum_{k=0}^n \binom{n}{k} a^k = (a+1)^n,
\een
and so
\ben
&&a^n = (b-1)^n = \sum_{k=0}^n (-1)^{n-k}\binom{n}{k} b^k,
\een
then, lowering the superscripts, one obtains:
\ben
&&a_n = \sum_{k=0}^n (-1)^{n-k}\binom{n}{k} b_k
\een
for all $n$.

Now in \eqref{eqn:Binomial},
changing $(x+y)^n$ to $\gamma_n(x+y)$, $x^k$ to $\gamma_k(x)$, and $y^{n-k}$ to $\gamma_{n-k}(y)$,
one gets \eqref{eqn:Binomial-Seq}.
So umbral calculus is the right setting for studying polynomial sequences of binomial type.

\subsection{Hopf algebra in the modern classical umbral calculus}
Since 1960s,
Rota and his collaborators have constructed a rigorous mathematical
foundation for umbral calculus based on the language of linear functionals,
linear operators and their adjoint operators.
We will present a brief survey in this and the next Subsections.
For  more details,
see Roman \cite{Roman}.

Note the change from $x^n$ to $a_n$ for $n \geq 0$ defines a linear map
\be
L:\bC[x] \to \bC, \quad  x^n \to a_n.
\ee
Denote $L(p(x))$ by $\corr{L|p(x)}$.
Let $D=\frac{d}{dx}$, then one has
\be
D^m x^n =(n)_m x^{n-m},
\ee
also consider the evaluation map:
\be
\ev: \bC[x] \to \bC, \quad p(x) \mapsto p(0).
\ee
Then one has
\be \label{eqn:D-x}
\corr{D^k|x^n} = k!\delta_{k,n},
\ee
and so
\be
\corr{L|p(x)} = \ev(f_L(D)(p(x))),
\ee
where
\be
f_L(t) = \sum_{n=0}^\infty \frac{a_n}{n!} t^n
\ee
is a formal power series.
So one gets a linear isomorphism:
\be
(\bC[x])^* \cong \bC[[t]], \quad L\mapsto f_L(t),
\ee
where  $(\bC[x])^*$ denotes the space of linear functionals on $\bC[x]$.

The space of formal power series is an algebra with multiplication
given by the binomial convolution of the coefficients:
\be
\sum_{k=0}^\infty \frac{a_k}{k!} t^k \cdot \sum_{l=0}^\infty \frac{b_l}{l!} t^l
= \sum_{n=0}^\infty \frac{t^n}{n!} \sum_{k+l=n} \binom{n}{k} a_k b_l.
\ee
It is called the {\em umbral algebra}.
For $f(t) \in \bC[[t]]$,
$f(D): \bC[x] \to \bC[x]$ is an operator on $\bC[x]$.
It can be characterized as follows.
For $a\in \bR$,
let $E^a: \bC[x] \to \bC[x]$ be
the translation operator defined by:
$$p(x) \mapsto p(x+a).$$
An operator $P$ on $\bC[x]$ is said to be translation
invariant if $PE^a = E^aP$ for all $a\in \bR$.
It is clear that $f(D)$ is translation invariant.
Conversely,
any translation-invariant  operator is of this form.
Therefore,
the umbral algebra is isomorphic to the algebra of translation invariant
linear operators
on $\bC[x]$.

Because both $\bC[x]$ and its dual, $\bC[[t]]$, have structures of
commutative algebras,
one gets a bialgebra structure on $\bC[x]$.
It is called the {\em binomial bialgebra} in one variable \cite{Joni-Rota}.
Note by \eqref{eqn:D-x},
for $f(t) \in \bC[[t]]$, one has
\begin{align} \label{eqn:f-expansion}
f(D) & = \sum_{k=0}^\infty \frac{1}{k!}
\corr{f(D)|x^k} D^k,
\end{align}
it follows that
\ben
f(D)g(D) & = & \sum_{n=0}^\infty \frac{1}{n!}
\sum_{k+l=n}
\binom{n}{k} \corr{f(D)|x^k} \corr{g(D)|x^l} D^n,
\een
and therefore,
\ben
\corr{f(D)g(D)|x^n} & = & \sum_{k+l=n}
\binom{n}{k} \corr{f(D)|x^k} \corr{g(D)|x^l}.
\een
This means the comultiplication on $\bC[x]$ is given by:
\be
\Delta x^n = \sum_{k=0}^n \binom{n}{k} x^k \otimes x^{n-k}.
\ee
Define an isomorphism $\varphi:\bC[x] \otimes \bC[y] \to \bC[x,y]$ by
$$\varphi(x^m \otimes y^n)= x^my^n,$$
then we have:
\be
\varphi(\Delta(x^n)) = (x+y)^n,
\ee
and in general,
\be
\varphi(\Delta(p(x))) = p(x+y).
\ee
The counit is $1 \in \bC[[t]]$:
\be
\corr{1|p(x)} = p(0) = \ev (p(x)).
\ee

In fact, one actually gets a structure of a Hopf algebra on $\bC[x]$.
The antipode is given by:
\be
S(p(x)) = p(-x).
\ee

\subsection{Polynomial sequences of binomial type, coalgebra isomorphisms, and umbral operators}

A coalgebra isomorphism $U$ is a one-to-one onto linear operator on
$\bC[x]$ such that
\be \label{def:UmbralOp}
\Delta U(x^n) = \sum_{k=0}^n \binom{n}{k} U(x^k) \otimes U(x^{n-k}).
\ee
Write $U(x^n) = p_n(x)$.
Such an operator is  called an {\em umbral operator} by Mullin and Rota.
Applying the isomorphism $\varphi$ on both sides of \eqref{def:UmbralOp},
one gets:
\be
p_n(x + y)= \sum_{k=0}^n \binom{n}{k}p_k(x)p_{n-k}(y).
\ee
I.e., the sequence $p_n(x)= U(x^n)$ is a polynomial sequence of binomial type for
 an umbral operator $U$.
Converse,
given a polynomial sequence $\{p_n(x)\}$  of binomial type,
setting $U(x^n)=p_n(x)$ defines an umbral operator (i.e. a coalgebra isomorphism) on $\bC[x]$.

\subsection{Delta series and conjugate sequences}

To study the umbral operators on $\bC[x]$,
it is more natural to study in the dual picture,
i.e., to study isomorphisms of   the dual algebra $\bC[[t]]$.
Given a linear operator $\lambda: \bC[x] \to \bC[x]$,
its adjoint operator $\lambda^*: \bC[[t]] \to \bC[[t]]$ is defined by:
\be
\corr{\lambda^*f(D)|p(x)} = \corr{f(D)|\lambda p(x)}
\ee
for all $p(x) \in \bC[x]$.
The adjoint of an umbral operator is an automorphism of the umbral algebra, and conversely, with
due respect to suitable topology on $\bC[[t]]$.
It follows that $U^*$ is determined by $U^*(t)$ which is of the form
\be
U^*(t) = F(t) = \sum_{n=1}^\infty \frac{a_n}{n!} t^n
\ee
with $a_1 \neq 0$.
Note for $p(x) \in \bC[x]$,
\be \label{eqn:p-expansion}
p(x) = \sum_{k\geq 0} \frac{1}{k!} \corr{D^k|p(x)} x^k,
\ee
Therefore,
\ben
p_n(x) & = & \sum_{k \geq 0}\frac{1}{k!} \corr{D^k|p_n(x)}x^k
=  \sum_{k \geq 0}\frac{1}{k!} \corr{D^k|U(x^n)}x^k \\
& = & \sum_{k \geq 0}\frac{1}{k!} \corr{U^*(D^k)|x^n}x^k
= \sum_{k \geq 0}\frac{1}{k!} \corr{(U^*(D))^k|x^n}x^k \\
& = & \sum_{k \geq 0}\frac{1}{k!} \corr{(U^*(D))^k|x^n}x^k \\
& = & \sum_{k \geq 0}\frac{1}{k!} \corr{F(D)^k|x^n}x^k.
\een
Given a series of the form $F(t) = \sum_{n=1}^\infty \frac{a_n}{n!}t^n$ with $a_1 \neq 0$,
the sequence of polynomials
\be
p_n(x) = \sum_{k \geq 0}\frac{1}{k!} \corr{F(D)^k|x^n}x^k
\ee
is called the {\em conjugate sequence} of the delta series $F(t)$.

\begin{Proposition}
The exponential generating series of the conjugate sequence $\{p_n(x)\}_{n\geq 0}$
of the delta series $F(t)$ is
\be
\sum_{n=0}^\infty p_n(x) \frac{X^n}{n!} = e^{xF(X)}.
\ee
\end{Proposition}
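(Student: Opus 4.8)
The plan is to compute the exponential generating series directly from the definition of the conjugate sequence and then reorganize the resulting double sum so that the summation over $n$ reconstructs the powers of $F$. The one ingredient I would isolate first is the coefficient-extraction formula for the duality pairing: for any series $g(t) = \sum_{m\geq 0} \frac{b_m}{m!} t^m \in \bC[[t]]$ one has $\corr{g(D)|x^m} = b_m$. This follows immediately from $\corr{D^k|x^n} = k!\delta_{k,n}$ in \eqref{eqn:D-x} together with the expansion $g(D) = \sum_k \frac{b_k}{k!} D^k$. Equivalently, $\corr{g(D)|x^n}$ is $n!$ times the coefficient of $t^n$ in $g(t)$, so one recovers $g$ from its pairings against the basis $\{x^n\}$ by $g(X) = \sum_{n\geq 0} \frac{X^n}{n!}\corr{g(D)|x^n}$.

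Next I would record that the umbral-algebra product is nothing but the ordinary product of formal power series in $t$: the binomial-convolution formula displayed earlier is exactly what one obtains by multiplying $\sum_k \frac{a_k}{k!}t^k$ and $\sum_l \frac{b_l}{l!}t^l$ termwise and regrouping by total degree. Consequently $F(D)^k$ is dual to the honest power $F(t)^k$, and the extraction formula of the previous step applies to $g(t) = F(t)^k$, giving $\sum_{n\geq 0}\frac{X^n}{n!}\corr{F(D)^k|x^n} = F(X)^k$.

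With these two observations the computation is short. Substituting the definition $p_n(x) = \sum_{k\geq 0}\frac{1}{k!}\corr{F(D)^k|x^n}x^k$ into the left-hand side and interchanging the summations over $n$ and $k$ yields
\ben
\sum_{n=0}^\infty p_n(x)\frac{X^n}{n!}
&=& \sum_{k\geq 0}\frac{x^k}{k!}\sum_{n=0}^\infty \frac{X^n}{n!}\corr{F(D)^k|x^n}
= \sum_{k\geq 0}\frac{x^k}{k!}F(X)^k
= e^{xF(X)}.
\een

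Finally, I would justify that the interchange of the two infinite sums is legitimate in the formal sense, and this is the only point that needs a word of care. Because $F$ is a delta series, the hypothesis $a_1 \neq 0$ gives $F(X) = O(X)$, hence $F(X)^k = O(X^k)$; thus for each fixed power of $X$ only finitely many values of $k$ contribute, and the right-hand side is a well-defined element of $\bC[[X]]$ with polynomial coefficients in $x$. I do not anticipate a genuine obstacle: the entire argument is clean bookkeeping of the pairing, and the delta-series condition handles the formal convergence and the reordering automatically.
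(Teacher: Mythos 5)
Your proposal is correct and follows essentially the same route as the paper: both rest on the coefficient-extraction identity $\corr{g(D)|x^n}=b_n$ (which the paper packages as $\corr{g(D)|e^{Xx}}=g(X)$) and then swap the two sums so the inner one produces $F(X)^k$. Your extra remark on why the interchange is formally legitimate (since $F(X)^k=O(X^k)$ for a delta series) is a small point of rigor the paper leaves implicit, but it is not a different argument.
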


\begin{proof}
We first show that for any series $g(t) = \sum_{k=0}^\infty \frac{b_k}{k!}t^k$,
we have
\be
\corr{g(D)|e^{Xx}} = g(X).
\ee
This is because we  have
\ben
\corr{g(D)|x^n} = \sum_{k\geq 0} \frac{b_k}{k!} \corr{D^k|x^n}
= \sum_{k \geq 0} \frac{b_k}{k!} \cdot k!\delta_{k,n} = b_n,
\een
and so
\ben
\corr{g(D)|e^{Xx}} & = & \sum_{n\geq 0} \corr{g(D)|x^n} \cdot
\frac{X^n}{n!}
= \sum_{n \geq 0} b_n \frac{X^n}{n!} = g(X).
\een
Therefore,
\ben
\sum_{n=0}^\infty p_n(x) \frac{X^n}{n!}
& = &
\sum_{n=0}^\infty \frac{X^n}{n!}  \sum_{k \geq 0}\frac{1}{k!} \corr{F(D)^k|x^n}x^k \\
& = &   \sum_{k \geq 0}\frac{1}{k!} \corr{F(D)^k|\sum_{n=0}^\infty \frac{X^n}{n!} x^n}x^k \\
& = & \sum_{k \geq 0}\frac{1}{k!} \corr{F(D)^k|e^{Xx}}x^k \\
& = & \sum_{k\geq 0} \frac{1}{k!} F(X)^kx^k  = e^{xF(X)}.
\een
\end{proof}

\subsection{Umbral compositions}

Let us return to the point of view that the adjoints of umbral operators are
the elements in the automorphism group of the umbral algebra $\bC[[t]]$.
Given two umbral operators $U$ and $V$, consider the composition $U^*\circ V^*$ of
their adjoints $U^*$ and $V^*$.
One has
\be
\corr{U^*\circ V^* (f(D))|p(x)} = \corr{f(D)|V(U(p(x)))}
\ee
for all $f(t) \in \bC[[t]]$ and $p(x) \in \bC[x]$.
Suppose that
\begin{align}
U(x^n) & = p_n(x) = \sum_{k=0}^n a_{n,k}x^k, &
V(x^n) & = q_n(x), \\
U^*(t) & = F(t) = \sum_{n\geq 1} \frac{a_n}{n!}t^n,
& V^*(t) & = G(t) = \sum_{n\geq 1} \frac{b_n}{n!}t^n,
\end{align}
i.e., $\{p_n(x)\}$ and $\{q_n(x)\}$ are the conjugate sequences of $f(t)$ and $g(t)$
respectively.
Then
\be
r_n(x) = V(U(x^n)) = V(p_n(x)) = \sum_{k=0}^n a_{n,k} V(x^k)
= \sum_{k=0}^n a_{n,k} q_k(x)
\ee
is the conjugate sequence of
\ben
U^*(V^*(t)) & = & U^*(G(t)) = U^*(\sum_{n\geq 1} \frac{a_n}{n!}t^n)
= \sum_{n\geq 1} \frac{a_n}{n!}(U^*(t))^n \\
& = & \sum_{n\geq 1} \frac{a_n}{n!}F(t)^n = G(F(t)).
\een
The sequence $\{r_n(x)\}$ is called the {\em umbral composition} of the sequences
$\{p_n(x)\}$ and $\{q_n(x)\}$.

\subsection{Group structures on $\cS$ in terms of umbral calculus}

Let $\{\gamma_n\}$  and $\{\varphi_n \}$  be two polynomial sequences
of binomial type,
defined by two elements $G(X), F(X) \in \cS$,
respectively.
In other words,
\ben
&& \sum_{n=0}^\infty \gamma_n(x) \frac{X^n}{n!}  = \exp (x G(X)), \\
&& \sum_{n=0}^\infty \varphi_n(x) \frac{X^n}{n!}  = \exp (x F(X)).
\een
Then the sequence $\theta_n$   obtained by
the ``umbral'' substitution of $\varphi_n$   into $\gamma_n$,
i.e.,
\be
\theta_n(x)= \gamma_n (x)|_{x^k \mapsto \varphi_k(x)}
\ee
is a polynomial sequence of polynomial type defined by $F(G(X))$.

\section{Umbral Calculus, Quantum Mechanics, and  Coherent States}

\label{sec:QM}

In this Section we relate umbral calculus to quantum mechanics,
especially to Stone-von Neumann Theorem, Wick quantization, and coherent states.

\subsection{Compositional inverse series and connection coefficients}

Suppose that $\{p_n(x)\}$ and $\{q_n(x)\}$ are the conjugate sequences of $F(t)$ and $G(t)$
respectively.
Then
\be
q_n(x) = \sum_{k=0}^n c_{n,k} p_k(x)
\ee
for some coefficients $c_{n,k}$ called the {\em connection coefficients}.
To find these coefficients,
rewrite the above equations as
\be
V(x^n) = \sum_{k=0}^n c_{n,k} U(x^k),
\ee
where $U$ and $V$ are the umbral operators adjoint to $F(D)$ and $G(D)$ respectively.
Now apply $U^{-1}$ on both sides to get:
\be
(U^{-1}\circ V)(x^n) = \sum_{k=0}^n c_{n,k}x^k.
\ee
This means $\{r_n(x) = \sum_{k=0}^n c_{n,k}x^k\}$ is the conjugate sequence of
\be
(U^{-1}\circ V)^*(t)=V^*((U^{-1})^*(t)) = G(f(t)),
\ee
where $f(t)$ is the {\em compositional inverse series} of $F(t)$, i.e.,
\be
F(f(t)) = f(F(t)) = t.
\ee

\subsection{Delta series and associated sequences}

Now one can apply the inverse umbral map
\be
x^n = U^{-1}(p_n(x))
\ee
in many of the formulas above.
By \eqref{eqn:D-x} we get:
\ben
\corr{D^k|U^{-1}(p_n(x))} = k!\delta_{n,k}.
\een
The left-hand side is equal to
\ben
&& \corr{(U^{-1})^*(D^k)|p_n(x)}
= \corr{f(D)^k|p_n(x)},
\een
so we get:
\be \label{eqn:f-k-p}
\corr{f(D)^k|p_n(x)} = k! \delta_{k,n}.
\ee
Recall $\{p_n(x)\}$ is the sequence of deformations of $\{x^n\}$,
in the dual picture, $\{D^n\}$is deformed to $\{f(D)^n\}$.
In the literature on umbral calculus,
$\{p_n(x)\}$ is called the {\em associated sequence}
of the delta series $f(t)$.

Now we prove the theorems in \cite[Section 2.4]{Roman}
in a similar fashion.

\begin{Proposition} (The Expansion Theorem \cite[Theorem 2.4.1]{Roman} )
For $h(t) \in \bC[[t]]$,
\be \label{eqn:Expansion}
h(D) = \sum_{k=0}^\infty \frac{1}{k!}
\corr{h(D)|p_k(x)} f(D)^k
\ee
where $f(t)$ is a delta series and $\{p_n(x)\}$ is its associated sequence.
\end{Proposition}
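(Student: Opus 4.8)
The plan is to establish \eqref{eqn:Expansion} by a biorthogonality argument that exactly parallels the derivation of \eqref{eqn:f-expansion}, with the associated sequence $\{p_n(x)\}$ taking over the role played there by the monomials $\{x^n\}$. Both sides of \eqref{eqn:Expansion} are elements of the umbral algebra $\bC[[t]]$, that is, translation-invariant operators on $\bC[x]$, and hence linear functionals via the pairing $\corr{\cdot|\cdot}$. Two such functionals coincide as soon as they agree when paired against every member of a linear basis of $\bC[x]$. Because $f$ is a delta series, its associated sequence satisfies $\deg p_n = n$, so $\{p_n(x)\}$ is indeed such a basis; this is the basis I would use to test the identity.

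First I would record that the right-hand side of \eqref{eqn:Expansion} is a well-defined element of $\bC[[t]]$. Since $f(t)$ is a delta series, the lowest-order term of $f(D)^k$ is a nonzero multiple of $D^k$, so the coefficient of any fixed power $D^m$ in the sum receives contributions only from the finitely many indices $k \leq m$. The series therefore converges in the natural order topology on $\bC[[t]]$, the same topology invoked in the discussion of automorphisms of the umbral algebra above.

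Next I would pair the right-hand side against $p_n(x)$ for an arbitrary $n \geq 0$. The coefficients $\tfrac{1}{k!}\corr{h(D)|p_k(x)}$ are scalars, so by linearity of the pairing together with the biorthogonality relation \eqref{eqn:f-k-p}, namely $\corr{f(D)^k|p_n(x)} = k!\,\delta_{k,n}$, every term with $k \neq n$ drops out and the single surviving term reduces to $\corr{h(D)|p_n(x)}$. Thus the right-hand side and the left-hand side $h(D)$ have identical pairings with each basis element $p_n(x)$, whence the two operators coincide, which is precisely \eqref{eqn:Expansion}.

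The computation is routine once \eqref{eqn:f-k-p} is in hand, and I do not anticipate a serious obstacle; the one point demanding care is that both the well-definedness of the infinite sum and the collapse of the pairing rest on the same structural fact, namely that $f$ being a delta series forces $f(D)^k$ to begin in order exactly $k$. Everything else is carried by the already-established orthogonality \eqref{eqn:f-k-p}, which is the true engine of the argument, exactly as \eqref{eqn:D-x} drives the special case \eqref{eqn:f-expansion}.
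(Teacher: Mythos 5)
Your proof is correct, but it takes a different route from the paper's. The paper proves the Expansion Theorem by transport of structure: it starts from the monomial expansion \eqref{eqn:f-expansion} applied to $U^*(h(D))$ and conjugates by the adjoint of the umbral operator, using $\corr{U^*(h(D))|x^k}=\corr{h(D)|p_k(x)}$ and $(U^*)^{-1}(D^k)=f(D)^k$, so that the whole statement appears as the image of the known expansion in powers of $D$ under the automorphism $(U^*)^{-1}$ of the umbral algebra. You instead verify the identity directly by duality: both sides lie in $\bC[[t]]\cong(\bC[x])^*$, the associated sequence $\{p_n(x)\}$ is a basis of $\bC[x]$ because $\deg p_n=n$, and pairing the right-hand side against $p_n(x)$ collapses by the biorthogonality \eqref{eqn:f-k-p} to $\corr{h(D)|p_n(x)}$, matching the left-hand side. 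The two arguments rest on the same underlying facts (indeed \eqref{eqn:f-k-p} is itself derived via $U$ and its adjoint), but yours is more self-contained once \eqref{eqn:f-k-p} is granted, and it has the merit of making explicit the convergence of the infinite sum in the order topology on $\bC[[t]]$ --- a point the paper's proof leaves implicit in the continuity of $(U^*)^{-1}$ --- while the paper's version better exhibits the general principle that every statement about associated sequences is the $U$-image of the corresponding statement about monomials.
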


\begin{proof}
By \eqref{eqn:f-expansion} we have
\ben
h(D) & = & (U^*)^{-1}(U^*(h(D))
= (U^*)^{-1}\biggl(\sum_{k=0}^\infty \frac{1}{k!}
\corr{U^*(h(D))|x^k} D^k \biggr) \\
& = & \sum_{k=0}^\infty \frac{1}{k!}
\corr{h(D)|U(x^k)} (U^*)^{-1}(D^k) \\
& = & \sum_{k=0}^\infty \frac{1}{k!}
\corr{h(D)|p_k(x)} f(D)^k.
\een
\end{proof}

\begin{Proposition} (The Polynomial Expansion Theorem \cite[Theorem 2.4.2]{Roman})
Let $f(t)$ be a delta series and let $\{p_n(x)\}$ be the associated sequence of $f(t)$.
Then for any $p(x) \in \bC[x]$,
one has
\be
p(x) = \sum_{k\geq 0} \frac{1}{k!}
\corr{f(D)^k|p(x)} p_k(x).
\ee
\end{Proposition}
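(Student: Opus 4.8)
The plan is to exploit the orthogonality relation \eqref{eqn:f-k-p}, namely $\corr{f(D)^k|p_n(x)} = k!\,\delta_{k,n}$, which expresses precisely the biorthogonality between the functionals $\{\frac{1}{k!}f(D)^k\}_{k\geq 0}$ and the associated sequence $\{p_n(x)\}_{n\geq 0}$. Once this pairing is in hand, the expansion is essentially forced by it.

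First I would observe that $\{p_n(x)\}$ is a basis of $\bC[x]$. Since $f(t)$ is a delta series, the operator $f(D)$ lowers polynomial degree by exactly one, and its associated sequence satisfies $\deg p_n = n$ with nonzero leading coefficient; hence any $p(x) \in \bC[x]$ of degree $N$ admits a unique finite expression $p(x) = \sum_{k=0}^N c_k p_k(x)$. To identify the coefficients I would apply the linear functional $\corr{f(D)^j|\,\cdot\,}$ to both sides. By linearity together with \eqref{eqn:f-k-p}, the right-hand side collapses to a single surviving term, giving $\corr{f(D)^j|p(x)} = \sum_k c_k\,k!\,\delta_{j,k} = j!\,c_j$, so that $c_j = \frac{1}{j!}\corr{f(D)^j|p(x)}$. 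This is exactly the claimed formula. Because $f(D)^k$ lowers degree by $k$, the pairing $\corr{f(D)^k|p(x)}$ vanishes for $k > N$, so the sum terminates and no convergence question arises.

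Alternatively, and in closer parallel to the proof of the Expansion Theorem just given, I would start from the monomial expansion \eqref{eqn:p-expansion} applied to $U^{-1}(p(x))$, then apply the umbral operator $U$ term by term, using $U(x^k) = p_k(x)$ together with the adjoint identity $\corr{D^k|U^{-1}(p(x))} = \corr{(U^{-1})^*(D^k)|p(x)} = \corr{f(D)^k|p(x)}$; this reproduces the same formula without invoking the basis property explicitly. The only point requiring genuine care in either route is the degree bookkeeping that guarantees $\{p_k\}$ is a basis and that the expansion is finite. Beyond that, the statement is an immediate consequence of the orthogonality relation \eqref{eqn:f-k-p}, so I do not anticipate a substantive obstacle.
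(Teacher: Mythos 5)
Your proposal is correct. Your second route is essentially verbatim the paper's proof: the paper applies the umbral operator $U$ to the monomial expansion \eqref{eqn:p-expansion}, replaces $U(x^k)$ by $p_k(x)$, and converts $\corr{D^k|U^{-1}(q(x))}$ into $\corr{f(D)^k|q(x)}$ via the adjoint identity $(U^{-1})^*(D^k) = f(D)^k$; since $U$ is onto, this gives the statement for an arbitrary polynomial. Your primary route is organized differently: you first establish that $\{p_k(x)\}$ is a basis of $\bC[x]$ by the degree argument, write $p(x) = \sum_k c_k p_k(x)$, and then extract $c_j = \frac{1}{j!}\corr{f(D)^j|p(x)}$ by pairing with the functionals $f(D)^j$ and invoking the biorthogonality \eqref{eqn:f-k-p}. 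This is a valid and arguably more transparent argument; what it costs is the explicit verification that $\{p_k\}$ is a basis (which the paper sidesteps by using surjectivity of $U$ instead), and what it buys is that the finiteness of the sum and the uniqueness of the coefficients are manifest, rather than inherited from \eqref{eqn:p-expansion}. Both arguments ultimately rest on the same duality between $\{p_n(x)\}$ and the powers $f(D)^k$, so there is no substantive divergence.
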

\begin{proof}
Applying the umbral operator $U$ on both sides of
\eqref{eqn:p-expansion}, we get
\ben
U(p(x)) & = &
\sum_{k\geq 0} \frac{1}{k!} \corr{D^k|p(x)} U(x^k) \\
& = & \sum_{k\geq 0} \frac{1}{k!} \corr{D^k|U^{-1}U(p(x))} p_k(x) \\
& = & \sum_{k\geq 0} \frac{1}{k!}
\corr{(U^{-1})^*(D^k)|U(p(x))} p_k(x) \\
& = & \sum_{k\geq 0} \frac{1}{k!}
\corr{f(D)^k|U(p(x))} p_k(x).
\een
\end{proof}

By induction one gets:
\be \label{eqn:f-k-pn}
f(D)^k p_n(x) = (n)_k \cdot p_{n-k}(x).
\ee

\begin{Proposition} (\cite[Theorem 2.4.5]{Roman})
A sequence $\{p_n(x)\}$ of polynomials is associated to a delta series $f(t)$  if and only if
\bea
&& \corr{1|p_n(x)}= \delta_{n,0},  \label{eqn:(i)} \\
&& f(D)p_n(x) = np_{n-1}(x). \label{eqn:(ii)}
\eea
\end{Proposition}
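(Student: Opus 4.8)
The plan is to route both implications through the characterization \eqref{eqn:f-k-p} of the associated sequence, namely $\corr{f(D)^k|p_n(x)} = k!\,\delta_{k,n}$ for all $k,n\ge 0$. Since $f$ is a delta series, each $f(D)^k$ has order exactly $k$, so the family $\{f(D)^k\}$ is a (topological) basis of the umbral algebra; hence \eqref{eqn:f-k-p} determines $\{p_n(x)\}$ uniquely, and a polynomial sequence of degree type is the associated sequence of $f(t)$ precisely when it satisfies \eqref{eqn:f-k-p}. Both directions will be proved by comparing against this single identity.

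The forward direction is then immediate from results already in hand. Putting $k=0$ in \eqref{eqn:f-k-p} and using $f(D)^0=1$ together with $0!=1$ yields $\corr{1|p_n(x)} = \delta_{n,0}$, which is \eqref{eqn:(i)}; and \eqref{eqn:f-k-pn} with $k=1$ reads $f(D)p_n(x) = (n)_1\,p_{n-1}(x) = n\,p_{n-1}(x)$, which is \eqref{eqn:(ii)}.

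The content lies in the converse, for which I would exploit the identity $\corr{g(D)|p(x)} = \ev(g(D)p(x)) = (g(D)p)(0)$ established above, which converts the functional pairing into "apply the operator, then evaluate at $0$". First I would pin down $p_0$: by \eqref{eqn:(ii)} with $n=0$ one has $f(D)p_0(x)=0$, and since $f$ is a delta series $f(D)$ lowers degree by exactly one and annihilates only the constants, so $p_0$ is constant, whereupon \eqref{eqn:(i)} forces $p_0(x)=1$. Next I would iterate \eqref{eqn:(ii)} to obtain $f(D)^k p_n(x) = (n)_k\,p_{n-k}(x)$ for $0\le k\le n$, so that $f(D)^n p_n(x) = n!\,p_0(x) = n!$ while $f(D)^k p_n(x)=0$ for $k>n$; as a by-product $f(D)^n p_n$ is a nonzero constant, which forces $\deg p_n = n$ and hence makes $\{p_n(x)\}$ a genuine polynomial basis. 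Finally, applying $\ev$ and inserting \eqref{eqn:(i)} in the form $p_{n-k}(0)=\delta_{n-k,0}$ gives $\corr{f(D)^k|p_n(x)} = (f(D)^k p_n)(0) = (n)_k\,\delta_{n-k,0} = k!\,\delta_{k,n}$, which is exactly \eqref{eqn:f-k-p}.

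I expect the main obstacle to be conceptual rather than computational: making precise that \eqref{eqn:f-k-p} genuinely characterizes the associated sequence, i.e. that pairing against the order-graded family $\{f(D)^k\}$ separates polynomials and determines $\{p_n(x)\}$ uniquely up to the normalizing factors $k!$. The degree count coming out of the iteration step supplies precisely what is needed here, since it guarantees that $\{p_n(x)\}$ spans $\bC[x]$ and is therefore the unique sequence dual to $\{f(D)^k\}$, forcing it to coincide with the associated sequence of $f(t)$.
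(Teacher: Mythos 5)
Your proof is correct and takes essentially the same approach as the paper: both directions are routed through the orthogonality relation \eqref{eqn:f-k-p}, and your converse (iterate \eqref{eqn:(ii)} to get $f(D)^k p_n(x) = (n)_k p_{n-k}(x)$, apply $\ev$, insert \eqref{eqn:(i)}) is the paper's argument verbatim. The only divergence is minor: for \eqref{eqn:(ii)} in the forward direction you quote \eqref{eqn:f-k-pn} at $k=1$ while the paper rederives it by pushing through $U^{-1}$ and the adjoint, and your extra observations ($p_0 = 1$, $\deg p_n = n$, uniqueness of the sequence dual to $\{f(D)^k\}$) are additional care that the paper leaves implicit.
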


\begin{proof}
Suppose that $\{p_n(x)\}$ is associated with $f(t)$.
Taking $k=0$ in \eqref{eqn:f-k-p},
one gets \eqref{eqn:(i)}.
To get \eqref{eqn:(ii)},
one uses \eqref{eqn:f-k-p} again to get:
\ben
\corr{f(D)^{k-1}|f(D)p_n(x)} & = & \ev ( f(D)^{k-1}f(D)p_n(x)) \\
& = & \ev(f(D)^kp_n(x))
= \corr{f(D)^k p_n(x)} = k!\delta_{k,n}.
\een
The left-hand side is also equal to
\ben
\corr{(U^{-1})*(D^{k-1})|f(D)p_n(x)} = \corr{D^{k-1}|U^{-1}(f(D)p_n(x))},
\een
so we have
\be
\corr{D^{k-1}|U^{-1}(f(D)p_n(x))} = k! \delta_{k,n}.
\ee
From this one gets:
\be
U^{-1}(f(D)p_n(x)) = n x^{n-1}.
\ee
It follows that
\be
f(D)p_n(x) = nU(x^{n-1}) = n p_{n-1}(x).
\ee
Conversely,
if \eqref{eqn:(i)} and \eqref{eqn:(ii)} hold,
then
\ben
\corr{f(D)^k|p_n(x)} & = & \ev(f(D)^kp_n(x))
= \ev((n)_k p_{n-k}(x)) \\
& = &  (n)_k \cdot \corr{1|p_{n-k}(x)}
= (n)_k \cdot \delta_{n-k,0} \\
& = & n! \cdot \delta_{n,k}.
\een
So one recovers \eqref{eqn:f-k-p}.
\end{proof}

\subsection{Recursion formula for associated sequences and
Heisenberg commutation relation in umbral calculus}

 \begin{Proposition} ( \cite[Theorem 2.4.6]{Roman})
If $\{p_n(x)\}$ is associated to $f(t)$,   then for any $h(t) \in \bC[[t]]$,
\be
h(D)p_n(x) = \sum_{k=0}^n \binom{n}{k} \corr{h(D)|p_k(x)} p_{n-k}(x).
\ee
\end{Proposition}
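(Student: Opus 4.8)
The plan is to obtain this recursion by combining the Expansion Theorem with the lowering formula \eqref{eqn:f-k-pn}. First I would apply \eqref{eqn:Expansion} to the given series $h(t)$, expanding the translation-invariant operator $h(D)$ in powers of $f(D)$:
\be
h(D) = \sum_{k=0}^\infty \frac{1}{k!} \corr{h(D)|p_k(x)}\, f(D)^k .
\ee
This expansion is legitimate precisely because $f(t)$ is a delta series and $\{p_n(x)\}$ is its associated sequence, which is exactly the standing hypothesis.

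Next I would evaluate both sides on the polynomial $p_n(x)$. Each summand on the right then contains $f(D)^k p_n(x)$, and here I would invoke \eqref{eqn:f-k-pn}, namely $f(D)^k p_n(x) = (n)_k\, p_{n-k}(x)$, to obtain
\be
h(D) p_n(x) = \sum_{k=0}^\infty \frac{(n)_k}{k!} \corr{h(D)|p_k(x)}\, p_{n-k}(x).
\ee
Since $p_n(x)$ has degree $n$ and $f(D)$ is a delta operator lowering degree by one, one has $f(D)^k p_n(x) = 0$ for $k > n$; equivalently the falling factorial $(n)_k$ vanishes for $k > n$. Hence the sum is in fact finite, running only over $0 \le k \le n$.

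The final step is purely arithmetic: on the surviving range $(n)_k/k! = \binom{n}{k}$, which converts the displayed identity into the asserted formula. I do not anticipate any serious obstacle, since the whole argument amounts to the observation that \eqref{eqn:Expansion} writes $h(D)$ in the basis $\{f(D)^k\}$ while \eqref{eqn:f-k-pn} records how that basis acts on the associated sequence. The only point deserving a word of justification is the interchange of the (a priori infinite) sum with the action on $p_n(x)$, and this is harmless precisely because all but finitely many terms annihilate the polynomial $p_n(x)$.
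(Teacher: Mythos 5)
Your proposal is correct and follows exactly the paper's own argument: apply the Expansion Theorem \eqref{eqn:Expansion} to $h(D)$, act on $p_n(x)$ using \eqref{eqn:f-k-pn}, and simplify $(n)_k/k!$ to $\binom{n}{k}$. The extra remark on why the infinite sum truncates is a welcome clarification but does not change the route.
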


\begin{proof}
This follows from \eqref{eqn:Expansion} and \eqref{eqn:f-k-pn} as follows:
\ben
h(D)p_n(x) & = & \sum_{k=0}^\infty \frac{1}{k!}
\corr{h(D)|p_k(x)} f(D)^kp_n(x) \\
& = &  \sum_{k=0}^\infty \frac{1}{k!}
\corr{h(D)|p_k(x)} (n)_k p_{n-k} (x) \\
& = & \sum_{k=0}^n \binom{n}{k} \corr{h(D)|p_k(x)} p_{n-k}(x).
\een
\end{proof}

By taking $h(t) = t$, one gets the following corollary:

\begin{Proposition} ( \cite[Theorem 2.4.9]{Roman})  If $\{p_n(x)\}$  is
associated with $f(t)$,
then
\be \label{eqn:pn-prime}
p_n'(x) = \sum_{k=0}^n \binom{n}{k} \corr{D|p_{n-k}(x)} p_k(x).
\ee
\end{Proposition}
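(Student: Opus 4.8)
The plan is to obtain this statement as an immediate corollary of the preceding Proposition (the umbral analogue of \cite[Theorem 2.4.6]{Roman}), exactly as the lead-in sentence ``By taking $h(t) = t$'' suggests. The previous Proposition asserts that, whenever $\{p_n(x)\}$ is associated to the delta series $f(t)$, for every $h(t) \in \bC[[t]]$ one has
\be
h(D)p_n(x) = \sum_{k=0}^n \binom{n}{k} \corr{h(D)|p_k(x)} p_{n-k}(x),
\ee
so the entire content here is a single specialization.

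First I would set $h(t) = t$, which is indeed a delta series (hence a legitimate choice in the hypothesis of the previous Proposition). Then $h(D) = D = \frac{d}{dx}$ is ordinary differentiation, so the left-hand side becomes $h(D)p_n(x) = D p_n(x) = p_n'(x)$, while on the right-hand side the pairing becomes $\corr{h(D)|p_k(x)} = \corr{D|p_k(x)}$. Substituting both identifications into the displayed formula immediately yields
\be
p_n'(x) = \sum_{k=0}^n \binom{n}{k} \corr{D|p_k(x)} p_{n-k}(x).
\ee

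The second and final step is purely cosmetic: to match the form in which the statement is displayed, I would reindex the sum by replacing $k$ with $n-k$ and using the symmetry $\binom{n}{k} = \binom{n}{n-k}$, which turns the expression into $\sum_{k=0}^n \binom{n}{k} \corr{D|p_{n-k}(x)} p_k(x)$. There is no genuine obstacle here — the result is a one-line corollary, and the only point requiring any care is to confirm that $h(t)=t$ satisfies the standing hypothesis (it is a delta series, being a power series with vanishing constant term and nonzero linear coefficient) and that the relabeling of summation indices is carried out consistently. I expect no technical difficulty beyond this bookkeeping.
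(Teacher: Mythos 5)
Your proposal is correct and is exactly the paper's own argument: the paper derives this Proposition by the single remark ``By taking $h(t)=t$'' applied to the preceding Proposition, and your specialization plus the reindexing $k\mapsto n-k$ is precisely that. (The only superfluous step is checking that $t$ is a delta series — the preceding Proposition requires only $h(t)\in\bC[[t]]$.)
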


\begin{Proposition}
(\cite[Theorem 2.4.8]{Roman})  If $p_n(x)$ is associated to $f(t)$, then
\be \label{eqn:x-pn}
xp_n(x) = \sum_{k=0}^n \binom{n}{k}
\corr{f'(D)|p_{n-k}(x)} p_{k+1}(x).
\ee
\end{Proposition}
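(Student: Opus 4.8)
The plan is to expand $xp_n(x)$ in the associated basis $\{p_k(x)\}$ and read off the coefficients. Applying the Polynomial Expansion Theorem to the polynomial $xp_n(x)$ gives
\be
xp_n(x) = \sum_{k\geq 0}\frac{1}{k!}\corr{f(D)^k|xp_n(x)}\, p_k(x),
\ee
so the whole problem reduces to evaluating the scalars $\corr{f(D)^k|xp_n(x)} = \ev\bigl(f(D)^k(xp_n(x))\bigr)$. The idea, fitting the theme of this subsection, is to push $f(D)^k$ across the multiplication-by-$x$ operator, which I denote $x\cdot$, using the umbral Heisenberg relation.

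The essential input is the commutation identity. From $[D, x\cdot] = \mathrm{id}$ on $\bC[x]$ one obtains by induction $[D^m, x\cdot] = mD^{m-1}$, and hence $[g(D), x\cdot] = g'(D)$ for every power series $g$. Taking $g = f^k$ yields
\be
f(D)^k\,(x\cdot) = (x\cdot)\,f(D)^k + k\,f(D)^{k-1}f'(D).
\ee
Applying both sides to $p_n(x)$ and then evaluating at $x = 0$, the first term drops out because $\ev$ annihilates every polynomial divisible by $x$; thus $\corr{f(D)^k|xp_n(x)} = k\,\ev\bigl(f(D)^{k-1}f'(D)p_n(x)\bigr)$. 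Since $f(D)$ and $f'(D)$ are both functions of $D$ and hence commute, I can first apply $f(D)^{k-1}$ via \eqref{eqn:f-k-pn}, i.e. $f(D)^{k-1}p_n(x) = (n)_{k-1}p_{n-k+1}(x)$, and conclude
\be
\corr{f(D)^k|xp_n(x)} = k\,(n)_{k-1}\,\corr{f'(D)|p_{n-k+1}(x)}.
\ee

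Substituting this back, the scalar factor simplifies as $\frac{k\,(n)_{k-1}}{k!} = \frac{(n)_{k-1}}{(k-1)!} = \binom{n}{k-1}$, so the expansion becomes $\sum_{k\geq 1}\binom{n}{k-1}\corr{f'(D)|p_{n-k+1}(x)}\,p_k(x)$; the sum truncates automatically at $k = n+1$ because $(n)_{k-1}$ vanishes once $k-1 > n$. Reindexing by $m = k-1$ then produces exactly \eqref{eqn:x-pn}. The only genuinely substantive step is the commutation identity $[g(D), x\cdot] = g'(D)$; once that is established, everything else is bookkeeping with \eqref{eqn:f-k-p}, \eqref{eqn:f-k-pn} and the vanishing of $\ev$ on $x\,\bC[x]$. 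I expect the main obstacle to be purely organizational: keeping the operator $f(D)^k$ and the functional pairing $\corr{\cdot|\cdot}$ cleanly separated, and checking the degenerate indices $k=0$ (killed by the factor $k$) and $k=n+1$ (the top-degree term $p_{n+1}$).
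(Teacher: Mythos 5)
Your proposal is correct and follows essentially the same route as the paper: expand $xp_n(x)$ via the Polynomial Expansion Theorem, compute $\corr{f(D)^k|xp_n(x)}$ by commuting $f(D)^k$ past multiplication by $x$ (the paper derives $[f(D),x\cdot]=f'(D)$ term by term and then inducts on $k$, while you invoke $[g(D),x\cdot]=g'(D)$ directly with $g=f^k$ — the same computation), and then use \eqref{eqn:f-k-pn} and the binomial bookkeeping. No gaps; the only difference is cosmetic.
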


\begin{proof}
Write $f(t) = \sum_{k=1}^\infty \frac{b_k}{k!}t^k$.
Then we have:
\ben
f(D) (xp_n(x))
& = &  \sum_{k=1}^\infty \frac{b_k}{k!} D^k (xp_n(x)) \\
& = & x \sum_{k=1}^\infty \frac{b_k}{k!} D^k p_n(x)
+ \sum_{k=1}^\infty \frac{b_k}{k!} \cdot k D^{k-1} p_n(x) \\
& = & x f(D) p_n(x) + f'(D) p_n(x).
\een
By induction we get:
\ben
f(D)^k(xp_n(x)) & = & x f(D)^kp_n(x) + kf'(D)f(D)^{k-1}p_n(x) \\
& = & x f(D)^k p_n(x) + k (n)_{k-1}f'(D) p_{n-k+1}(x).
\een
After evaluating at $x=0$, one gets:
\ben
&& \corr{f(D)^k|xp_n(x)}
= k (n)_{k-1}\corr{f'(D)|p_{n-k+1}(x)}.
\een
By the Polynomial Expansion Theorem,
\ben
xp_n(x) & = & \sum_{k=0}^\infty \frac{1}{k!} \corr{f(D)^k|xp_n(x)} p_k(x) \\
& = & \sum_{k=0}^\infty \frac{1}{k!}\cdot k (n)_{k-1}\corr{f'(D)|p_{n-k+1}(x)} p_k(x) \\
& = & \sum_{k=0}^\infty \binom{n}{k} \corr{f'(D)|p_{n-k}(x)} p_k(x).
\een
\end{proof}

The above two Theorems concern the following two operators on $\bC[x]$: The operator
$D$ defined by $Dp(x) = p'(x)$, and the operator $X$ defined by $Xp(x) = x\cdot p(x)$.
They satisfy the Heisenberg canonical commutation relation:
\be
[D, X] = 1.
\ee
By taking adjoints, we get
\be
[X^*, D^*] = 1.
\ee
These adjoint operators on $\bC[[t]]$ are given as follows:
\begin{align}
 D^*g(t) & = t \cdot g(t), &
 X^*g(t) & = g'(t).
\end{align}

\subsection{Umbral shift and more Heisenberg commutation relations in umbral calculus}

Since we have $p_n(0) = \corr{1|p_n(x)} = 0$ for $n \geq 1$, so one can write
\be
p_n(X) = \sum_{k \geq 1} c_{n,k}x^k, \quad n = 1, 2,\dots.
\ee
Then by \eqref{eqn:pn-prime},
\be
p_n'(x) = \sum_{k=0}^{n-1} c_{n-k,1} p_k(x).
\ee
So after integration one gets a recursion relation:
\be
p_n(x) = \sum_{k=0}^{n-1} c_{n-k,1} \int_0^x p_k(x)dx.
\ee
Therefore,
the sequence $\{p_n(x)\}$ of polynomials of binomial type
is completely determined by the sequence $\{c_{n,1}\}$.
This recovers \cite[p. 697, Corollary 1]{Rota et al}.

By \eqref{eqn:x-pn} we also get a recursion formula:
\be \label{eqn:x-pn-2}
p_{n+1}(x) = \frac{1}{f'(0)} \biggl(xp_n(x) - \sum_{k=0}^{n-1} \binom{n}{k}
\corr{f'(D)|p_{n-k}(x)} p_{k+1}(x) \biggr).
\ee

Recall the generating series of $\{p_n(x)\}$ is
\be \label{eqn:x-Ft}
\sum_{n=0}^\infty \frac{p_n(x)}{n!}t^n
= e^{xF(t)},
\ee
where $F(t) = \sum_{n=1}^\infty \frac{a_n}{n!}t^n$.
Taking $\frac{\pd}{\pd x}$ and then taking $x=0$ on both sides, we get:
\be
\sum_{n=1} \frac{c_{n,1}}{n!} t^n = F(t)
= \sum_{n=1}^\infty \frac{a_n}{n!}t^n.
\ee
It follows that
\be
c_{n,1} = a_n.
\ee
This recovers \cite[p. 697, Corollary 2]{Rota et al}.
In \eqref{eqn:x-Ft}, change $t$ to $D_y = \frac{\pd}{\pd y}$,
let both sides act on $y^n$ and then take $y=0$.
This yields:
\be
p_n(x) = \corr{e^{xF(D_y)}|y^n}.
\ee
From this one can derive a recursion formula as follows:
\ben
p_n(x) & = & \corr{e^{xF(D_y)}|y \cdot y^{n-1}} = \corr{e^{xF(D_y)}| Y (y^{n-1})} \\ \\
& = & \corr{Y^*e^{xF(D_y)}|y^{n-1}}
= \corr{x F'(D_y) e^{xF(D_y)}|y^{n-1}}\\
& = & x \cdot \corr{F'(D_y) e^{xF(D_y)}|U^{-1}(p_{n-1}(y))} \\
& = & x \cdot \corr{(U^{-1})^*(F'(D_y)) (U^{-1})^*(e^{xF(D_y)})|p_{n-1}(y)} \\
& = & x \cdot \corr{F'(f(D_y)) e^{x D_y}|p_{n-1}(y)} \\
& = & x \cdot \corr{F'(f(D_y)) |e^{x D_y}p_{n-1}(y)} \\
& = & x \cdot \corr{\frac{1}{f'(D_y)}|p_{n-1}(x+y)} \\
& = & x\cdot \biggl( \frac{1}{f'(D_y)}p_{n-1}(x+y) \biggr)\biggr|_{y=0} \\
& = & x\cdot \biggl( \frac{1}{f'(D_x)}p_{n-1}(x+y) \biggr)\biggr|_{y=0} \\
& = & x \cdot \frac{1}{f'(D_x)} p_{n-1}(x).
\een
In the above we have used the following facts:
\begin{align*}
(U^{-1})^*D_y & = f(D_y), & F(f(t)) & = t, & F'(f(t)) \cdot f'(t) & = 1.
\end{align*}
So we have recover the following identity in \cite[Corollary 3.6.6]{Roman}:
\be \label{eqn:pn-rec}
p_n(x) = x \cdot \frac{1}{f'(D_x)} p_{n-1}(x).
\ee
The operator $\theta$ on $\bC[x]$ defined by:
\be \label{def:umbral-shift}
\theta_f p_n(x) = p_{n+1}(x)
\ee
is called the {\em umbral shift}.
By \eqref{eqn:pn-rec},
\be
\theta_f = X \circ \frac{1}{f'(D)}.
\ee
By taking adjoint:
\be
\theta_f^* = \frac{1}{f'(t)} \frac{\pd}{\pd t} = \frac{\pd}{\pd f}.
\ee
Note:
\be
f(D)^* g(t) = f(t) \cdot g(t),
\ee
therefore, it is clear that
\be
[\theta_f^*,f(D)^*] = 1,
\ee
and so after taking adjoint,
\be
[f(D), \theta_f] = 1.
\ee
This can be checked directly from \eqref{eqn:(ii)}
and \eqref{def:umbral-shift}.

\subsection{Coherent states in umbral calculus}
\label{sec:Coherent-Umbral}

Let us now interpret
\be
\sum_{n=0}^\infty \frac{p_n(x)}{n!}t^n
= e^{xF(t)}
\ee
from the perspective of quantum mechanics.

First, inspired by \cite[\S 9]{Rota et al},
we introduce a Hermitian product on $\bC[x]$ as follows.
Given a delta series $f(t) = \sum_{n=1}^\infty \frac{a_n}{n!}t^n$,
with associated series $\{p_n(x)\}$,
let $U: \bC[x] \to \bC[x]$ be the umbral operator defined by:
\be
U(x^n) = p_n(x).
\ee
Define
\be
(p(x), q(x)) : = [(U^{-1}p)(f(D_{\bar{x}}))\overline{q(x)}|_{x=0}.
\ee
It is easy to see that
\be
(p_n(x), p_m(x))= \delta_{m,n}n!.
\ee
Furthermore,
\be
(f(D_x)p(x),q(x))=(p(x),\theta_fq(x)).
\ee
Also note
\be
f(D_x)1 = 0,
\ee
so one can take $1$ to be the vacuum vector $\vac$, $a=f(D_x)$ to be the annihilator,
and $a^\dagger = \theta_f$ to be the creator.
Note we have
\begin{align}
a\vac & = 0, [a, a^\dagger] &= 1
\end{align}
and
\be
a e^{xF(t)} =  f(D_x) \sum_{n=0}^\infty \frac{p_n(x)}{n!} t^n
= \sum_{n=0}^\infty \frac{np_{n-1}(x)}{n!}t^n = t \cdot e^{xF(t)}.
\ee
This implies that the family $e^{xF(t)}$ is a family of coherent states indexed by $t$.

\subsection{A brief summary of coherent states}

For the convenience of the reader who does not have a background in quantum mechanics
and coherent states, we summarize some basics  in this Subsection.
We refer to   Berezin-Shubin \cite{Ber-Shu}, Combescure-Robert \cite{Com-Rob}, Hall \cite{Hall}
and Takhtajan \cite{Tak} for proofs and more details.

\subsubsection{Gaussian state as minimum uncertainty state}

In Schr\"odinger representation (position representation),
the state of a quantum system is described by a function $\Psi(x)$ in $x$, such that
$$||\Psi||^2 = \corr{\Psi(x)|\Psi(x)} = \int_{\bR} |\Psi(x)|^2dx = 1.$$
The position and momentum operators are given by the self-adjoint operators:
\begin{align*}
Q & = x\cdot, & P & = - i\hbar \frac{\pd}{\pd x},
\end{align*}
respectively.
They satisfy the canonical commutation relation:
\be
[Q, P] = i\hbar.
\ee
The expectation values of the position and the momentum are defined by
\begin{align*}
x_0 & = \corr{\Psi(x)|Q|\psi(x)},  & p_0 & = \corr{\Psi(x)|P|\psi(x)},
\end{align*}
respectively.
The variances of the position and the momentum are defined by
\begin{align*}
\corr{(\Delta Q)^2} & = \corr{\Psi(x)|(Q-x_0)^2|\psi(x)},  &
\corr{(\Delta P)^2} & = \corr{\Psi(x)|(P-p_0)^2|\psi(x)},
\end{align*}
The uncertainty relation is:
\be
\corr{(\Delta Q)^2} \cdot
\corr{(\Delta P)^2} \geq \frac{\hbar^2}{4}.
\ee
A state is called a {\em minimum uncertainty state} if the equality holds
in the above inequality.

The Gaussian state defined by
\be
\varphi_0(x) = (\pi \hbar)^{-1/4} \exp \biggl( - \frac{x^2}{2\hbar} \biggr)
\ee
is a normalized state with $x_0 = p_0 = 0$.
For this state we have
\ben
&& \corr{\varphi_0(x)|Q^2|\varphi_0(x)}
= \frac{1}{\sqrt{\pi \hbar}} \int_{\bR} x^2 e^{-x^2/\hbar}dx = \frac{\hbar}{2}, \\
&& \corr{\varphi_0(x)|P^2|\varphi_0(x)}
= \frac{1}{\sqrt{\pi \hbar}} \int_{\bR} e^{-x^2/(2\hbar)} (-\hbar^2) \frac{d^2}{dx^2}
e^{-x^/(2\hbar)} dx = \frac{\hbar}{2},
\een
and so we have
\be
\corr{\varphi_0(x)|(\Delta Q)^2|\varphi_0(x)} \cdot
\corr{\varphi_0(x)|(\Delta P)^2|\varphi_0(x)} = \frac{\hbar^2}{4}.
\ee
This means that the Gaussian $\varphi_0(x)$ is a minimum uncertainty state.

\subsubsection{Gaussian state as the ground state  in the system of quantum harmonic oscillator}

The Hamiltonian function of the harmonic oscillator is
\be
H = \frac{p^2}{2} + \frac{x^2}{2}.
\ee
After the canonical quantization
\begin{align}
x & \to \hat{x} = Q= x \cdot, & p & \to \hat{p} = P = -i \hbar \frac{\pd}{\pd x},
\end{align}
one gets the Hamiltonian operator for the quantum harmonic oscillator:
\be
\hat{H} = \half (P^2 + Q^2) = \frac{1}{2}(-\hbar^2 \frac{\pd^2}{\pd x^2} + x^2).
\ee
Define the annihilator $a$ and the creator $a^\dagger$ by
\begin{align}
a & = \frac{1}{\sqrt{2\hbar}}(Q+ iP), & a^\dagger & = \frac{1}{\sqrt{2\hbar}}(Q - iP),
\end{align}
then $a$ and $a^\dagger$ are adjoint to each other,
and
\be
\hat{H} = \hbar \hat{N} + \frac{\hbar}{2},
\ee
where
\be
\hat{N} = a^\dagger a
\ee
is the number operator.
For any normalized $\Psi(x)$,
one has
\ben
&& \corr{\Psi(x)|\hat{H}|\Psi(x)} = \hbar ||a\Psi(x)||^2 + \frac{\hbar}{2} \geq \frac{\hbar}{2},
\een
with equality iff
\be
a \Psi(x) = 0,
\ee
or equivalently,
\be
\hbar \frac{\pd}{\pd x} \Psi(x) = x \cdot \Psi(x).
\ee
It is then clear that the minimum is achieved
when $\Psi(x)$ is equal to $\varphi_0(x)$ up to a phase,
i.e. multiplication by a factor of the form $e^{i\theta}$.

\subsubsection{Coherent state as the generating function of eigenstates  in the system of quantum harmonic oscillator}

Other eigenstates of quantum harmonic oscillator can be obtained by applying the creator
repeatedly:
\be
\varphi_n(x) = (a^\dagger)^n \varphi_0.
\ee
It is easy to see that
\bea
&& a \varphi_n(x) = n \varphi_{n-1}(x), \\
&& (\varphi_m(x),\varphi_n(x)) = m!\delta_{m,n}.
\eea
Physicists prefer to normalize these eigenstates and set:
\be
|n\rangle = \frac{1}{\sqrt{n!}} \varphi_n(x) = \frac{1}{\sqrt{n!}} (a^\dagger)^n\varphi_0(x).
\ee

The exponential generating series of the eigenstates is
\be
\sum_{n=0}^\infty \varphi_n(x) \frac{z^n}{n!}
= e^{za^\dagger} \varphi_0(x).
\ee
Note
\ben
\corr{e^{za^\dagger}\varphi_0(x)|e^{za^\dagger}\varphi_0(x)}
& = & \sum_{m,n=0}^\infty \frac{\bar{z}^mz^n}{m!n!} \corr{\varphi_m(x)|\varphi_n(x)} \\
& = & \sum_{m,n=0}^\infty \frac{\bar{z}^mz^n}{m!n!} \delta_{m,n}
= e^{|z|^2}.
\een
So physicists define the normalized coherent states by:
\be \label{def:z}
|z\rangle:= e^{-|z|^2/2} \sum_{n=0}^\infty \varphi_n(x) \frac{z^n}{n!}
= e^{-|z|^2/2}  e^{za^\dagger} \varphi_0(x).
\ee
It satisfies:
\be
a|z\rangle = z \cdot |z\rangle.
\ee

\subsubsection{Coherent states as minimum uncertainty states}

To understand the properties of $|z\rangle$ one can do the following computations:
\ben
&& \corr{z|Q|z} = \sqrt{\frac{\hbar}{2}}\corr{z|a+a^\dagger|z}
= \sqrt{\frac{\hbar}{2}}(z+\bar{z}), \\
&& \corr{z|P|z} = - i \sqrt{\frac{\hbar}{2}}\corr{z|a-a^\dagger|z}
= -i \sqrt{\frac{\hbar}{2}}(z-\bar{z}).
\een
Similarly,
\ben
&& \corr{z|Q^2|z} = \frac{\hbar}{2}\corr{z|(a+a^\dagger)^2|z}
= \frac{\hbar}{2}\corr{z|a^2+a^\dagger a +a a^\dagger + (a^\dagger)^2|z}\\
& = & \frac{\hbar}{2}\corr{z|a^2+2a^\dagger a +(a^\dagger)^2+1|z}
= \frac{\hbar}{2}[(z+\bar{z})^2+1], \\
&& \corr{z|P^2|z} = - \frac{\hbar}{2}\corr{z|(a-a^\dagger)^2|z}
= -\frac{\hbar}{2}[(z+\bar{z})^2-1].
\een
It follows  that
\ben
&&\corr{(\Delta Q)^2}=\corr{z|Q^2 |z} -\corr{z|Q|z}^2 =  \frac{\hbar}{2},\\
&&\corr{(\Delta P)^2}=\corr{z|P^2 |z} -\corr{z|P|z}^2 = \frac{\hbar}{2},
\een
and so
\ben
&& \corr{z|(\Delta Q)^2|z} \cdot \corr{z|(\Delta P)^2|z} = \frac{\hbar^2}{4}.
\een
This means that $|z\rangle$ is a family of minimum uncertainty states,
with expectation values of the position and the momentum given by
\begin{align} \label{eqn:x0-p0}
x_0 & = \sqrt{\frac{\hbar}{2}}(z+\bar{z}), &
p_0 & = -i\sqrt{\frac{\hbar}{2}}(z-\bar{z}),
\end{align}
respectively.

\subsubsection{Coherent states as Gaussian states acted by Weyl-Heisenberg translations}

Let us now derive an explicit expression for $|z\rangle$.
First, recall that $\varphi_n(x)$ can be expressed in terms of Hermite polynomials:
\be \label{eqn:phi-n}
\varphi_n(x) =(\pi \hbar)^{-1/4} 2^{-n/2}H_n(\frac{x}{\sqrt{h}}) e^{-\frac{x^2}{2\hbar}}
=(\pi \hbar)^{-1/4} \cdot He_n(\frac{\sqrt{2}x}{\sqrt{\hbar}})e^{-\frac{x^2}{2\hbar}},
\ee
where $H_n(x)$ and $He_n(x)$ are the physicists' and probabilists' Hermite polynomials
respectively:
\begin{align*}
H_n(x) &=(-1)^{n}e^{x^{2}}{\frac {d^{n}}{dx^{n}}}e^{-x^2}, &
{\mathit {He}}_{n}(x)=(-1)^ne^{\frac {x^{2}}{2}}{\frac {d^n}{dx^n}}e^{-{\frac {x^2}{2}}}.
\end{align*}
Then one can use the formula for the exponent generating series
of Hermite polynomials:
\begin{align}
\sum_{n=0}^\infty H_n(x) \frac{t^n}{n!} & = e^{2xt-t^2}, &
\sum_{n=0}^\infty He_n(x) \frac{t^n}{n!} & = e^{xt-t^2/2},
\end{align}
together with \eqref{eqn:phi-n} and \eqref{def:z}.
The result is :
\be \label{eqn:z-Explicit1}
|z\rangle
= (\pi \hbar)^{-1/4}e^{-|z|^2/2} \cdot \exp\biggl( -\frac{1}{2\hbar}x^2
+ \frac{\sqrt{2}}{\sqrt{\hbar}} \cdot x z
- \frac{z^2}{2} \biggr).
\ee
By \eqref{eqn:x0-p0},
we have
\be
z = \frac{x_0+ip_0}{\sqrt{2\hbar}}.
\ee
Plugging this into \eqref{eqn:z-Explicit1}, we get:
\be \label{eqn:z-Explicit2}
|z\rangle = (\pi \hbar)^{-1/4} \exp \biggl( - \frac{i}{2\hbar}x_0p_0\biggr)
\exp\biggl(\frac{i}{\hbar} p_0x \biggr)\exp\biggl(-\frac{(x-x_0)^2}{2\hbar} \biggr).
\ee

If one introduces the Weyl-Heisenberg translation operator:
\be \label{eqn:T1}
\hat{T}(z) = e^{-\frac{i}{2\hbar} x_0p_0} e^{ \frac{i}{\hbar}p_0 \cdot Q}
e^{- \frac{i}{\hbar} x_0 \cdot P},
\ee
then one has
\be \label{eqn:z-T-phi}
|z\rangle = \hat{T}(z)\varphi_0.
\ee
Using the Heisenberg commutation relation
\be
[Q, P] = i \hbar,
\ee
one can show that
\be \label{eqn:T2}
\hat{T}(z) = \exp \biggl( \frac{i}{\hbar} (p_0Q - x_0 P)\biggr).
\ee

Another way to obtain \eqref{eqn:z-T-phi} and hence also \eqref{eqn:z-Explicit2}
is to rewrite
Weyl-Heisenberg translations in terms of creator and annihilator:
\be
\hat{T}(z) = \exp \big(\alpha \cdot a^\dagger - \bar{\alpha}\cdot a \big)
= \exp \biggl(-\frac{|\alpha|^2}{2}\biggr)
\exp\big(\alpha \cdot a^\dagger\big) \cdot \exp\big(- \bar{\alpha} \cdot a\big).
\ee
Then because
\be
a\varphi_0 = 0,
\ee
one has
\be
 \hat{T}(z)\varphi_0
 = \exp \biggl(-  \frac{|\alpha|^2}{2}\biggr)
\exp\biggl( \alpha \cdot  a^\dagger\biggr)\varphi_0 = |z \rangle.
\ee

\subsubsection{Coherent states and representation of the Weyl-Heisenberg group}

The following two formulas for the multiplication of the operators $\hat{T}(z)$ with $\hat{T}(z')$
is well-known:
\bea
\hat{T}(z) \hat{T} (z') = \exp \biggl(- \frac{i}{2\hbar}
\sigma(z,z')  \biggr) \hat{T} (z + z'), \\
\hat{T} (z) \hat{T} (z') = \exp \biggl(- \frac{i}{\hbar}
\sigma(z,z') \biggr) \hat{T}(z')\hat{T} (z),
\eea
where for $z = q+ip$, $z' = q'+ip'$, $ \sigma(z,z')$ is the symplectic product:
\be
\sigma(z,z') = q \cdot p' - p \cdot q'.
\ee
These multiplication formulas lead to the definition of the
 Heisenberg group $H_1$.
 It is the space of $(t, z) \in \bR \times \bC$ with the group multiplication
\be
(t,z)(t',z') = (t + t' + \sigma(z,z'),z + z').
\ee
It is a Lie group of dimension $3$.
The Schr\"odinger representation is defined as the following
irreducible unitary representation
of $H_1$ in $L^2(\bR)$:
\be
\rho(t,z) = e^{-it/2} e^{\hat{T} (z)}.
 \ee

\subsubsection{Stone-von Neumann Theorem}

Suppose that $\tilde{Q}$ and $\tilde{P}$ are two self-adjoint operator on a Hilbert space $H$ ,
such that the following commutation relation is satisfied :
\be
[\tilde{Q}, \tilde{P}] = i,
\ee
then one can define a representation of the Heisenberg group $H_1$ as follows:
\be
\tilde{\rho}(t,z) = e^{-it/2} \exp \biggl( \frac{i}{\hbar} (p_0\tilde{Q} - x_0 \tilde{P})\biggr),
\ee
where $x_0$ and $p_0$ are related to $z$ by \eqref{eqn:x0-p0}.
The Stone-von Neumann Theorem states that when the representation $\tilde{\rho}$ is irreducible,
there is a unique unitary map $U: H \to L^2(\bR)$ such that
\be
U \tilde{\rho}(t,z) U^{-1} = \rho(t,z).
\ee

\subsubsection{Weyl quantization}

The Lie algebra of the Heisenberg group $H_1$ is the Heisenberg algebra $\fh_1$,
generated by $e_1, e_2, e_3$, with the following commutation relations:
\begin{align}
[e_1, e_2]& = e_3, &
[e_3, e_1] &= [e_3, e_2] = 0.
\end{align}
By the PBW Theorem,  elements of the form
$e_3^{n_3}e_2^{n_2} e_1^{n_1}$
where each $n_k$ is a non-negative integer, span
the universal enveloping algebra $U\fh_1$ and are linearly independent.
The same statement also holds for elements of the form
$e_3^{n_3}e_1^{n_1} e_2^{n_2}$.
Another choice for the basis of $U\fh_1$ is as follows:
For two expressions $X$ and $Y$, consider the expansion:
\be
(aX+bY)^n = \sum_{k=0}^n \binom{n}{k} C^n_{k,n-k}(X,Y)a^kb^{n-k},
\ee
then elements of the form $$\frac{n_1!n_2!}{(n_1+n_2)!}e_3^{n_3}C^{n_1+n_2}_{n_1,n_2}(e_1,e_2)$$
form a basis of $U\fh_1$.

The three different bases of $U\fh_1$ corresponds to
three  different quantizations: the
$qp$-symbol, the $pq$-symbol and the Weyl symbol.
Given an operator of the form:
\be
f(q,p) = \sum_{m,n\geq 0} c_{m,n} q^mp^n,
\ee
its $qp$-quantization, $pq$-quantization, and Weyl quantization are given by
\bea
&& \hat{f}_{qp}= \sum_{m,n\geq 0} c_{m,n} Q^mP^n, \\
&& \hat{f}_{pq}= \sum_{m,n\geq 0} c_{m,n} P^nQ^m, \\
&& \hat{f}_{Weyl} = \sum_{m,n\geq 0} c_{m,n}
\frac{m!n!}{(m+n)!}C^{m+n}_{m,n}(Q,P)
\eea
respectively.
Conversely,
$f(q,p)$ is called
the $qp$-symbol of $\hat{f}_{qp}$,
the $pq$-symbol of $\hat{f}_{pq}$,
and the Weyl symbol of $\hat{f}_{Weyl}$.

The transformation from $p^n$ to $P^n = (-i\hbar\frac{\pd}{\pd x})^n$
 can be achieved
by Fourier transform and inverse Fourier transform:
\bea
&& \tilde{f}(p)=\frac{1}{\sqrt{2\pi}} \int_{-\infty }^{\infty } f(x)\ e^{-ix p }\,dx, \\
&& f(x)=\frac{1}{\sqrt{2\pi}} \int_{-\infty }^{\infty }\tilde{f}(p)\ e^{ix p }\,dp.
\eea
First note:
\ben
(-i\hbar \frac{\pd}{\pd x})^n \psi(x)
& = & \frac{1}{\sqrt{2\pi}}  \int_{-\infty }^{\infty } (\hbar p)^n
\tilde{\psi}(p)\ e^{ ix p }\,dp \\
& = & \frac{1}{2\pi\hbar} \int_{-\infty}^\infty\int_{-\infty}^\infty p^n
e^{i(x-y)p/\hbar} \psi(y)dydp,
\een
so we have:
\be \label{eqn:hat-f}
\hat{f}_{qp}\psi(x) = \frac{1}{2\pi\hbar} \int_{-\infty}^\infty\int_{-\infty}^\infty f(x,p)
e^{i(x-y)p/\hbar} \psi(y)dydp.
\ee
If we write
\be
\hat{f}_{qp}\psi(x) =  \int_{-\infty}^\infty K(x,y) \psi(y)dy,
\ee
then the kernel function $K(x,y)$ can be obtained from the symbol $f(q,p)$ by Fourier transform:
\be \label{eqn:K-in-f}
K(x,y) = \frac{1}{2\pi \hbar} \int_{-\infty}^\infty
f(x,p)e^{-ip (y-x)/\hbar} dp.
\ee
Conversely, the symbol $f(q,p)$ can be obtained from the kernel function
by inverse Fourier transform:
\be \label{eqn:f-in-K}
f(q,p) = \int_{-\infty}^\infty K(q,y) e^{ip(y-q)/\hbar}dy.
\ee
Given two $qp$-symbols $f_1$ and $f_2$,
consider the composition operator $\hat{f}_1\hat{f}_2$.
Denote by $K$, $K_1$, $K_2$ the kernels of the operators $\hat{f}_{1,qp}\hat{f}_{2,qp}$,
$\hat{f}_1$, $\hat{f}_2$. Then
\be
K(x,y) = \int_{-\infty}^\infty K_1(x,z)K_2(z,y)dz.
\ee
Using \eqref{eqn:K-in-f} and \eqref{eqn:f-in-K},
one can see that $\hat{f}_{1,qp}\hat{f}_{2,qp}$ is $\hat{f}_{qp}$,
where
\ben
f(q,p) = \frac{1}{(2\pi \hbar)^2}
\int_{\bR^4} && \exp \frac{i}{\hbar}(  p(y - q) - p_1(z - q) - p_2 (y-z))\\
&& \cdot f_1(q,p_1)f_2(z, p_2)dp_1 dp_2 dydz.
\een
Using the formula
\be \label{eqn:Fourier-Delta}
\frac{1}{2\pi \hbar} \int_{-\infty}^\infty e^{i(x-q)y/h} dy = \delta(x - q),
\ee
one can perform the integration in $y$ to get:
\ben
&& f(q,p) = \frac{1}{2\pi \hbar}
\int_{\bR^2} \exp\biggl[ -\frac{i}{\hbar}
(p_1-p)(q_1 - q)\biggr]
\cdot f_1(q,p_1)f_2(q_1, p)dp_1 dq_1.
\een
Now note:
\ben
&&  \frac{1}{2\pi \hbar}
\int_{\bR^2} \exp\biggl[ -\frac{i}{\hbar}
(p_1-p)(q_1 - q)\biggr]
\cdot  p_1^kq_1^ldp_1 dq_1 \\
& = & \frac{1}{2\pi \hbar}
\int_{\bR} q_1^l e^{\frac{i}{\hbar}p(q_1-q)} \biggl(
\int_\bR e^{ -\frac{i}{\hbar}p_1(q_1 - q)}
\cdot  p_1^k dp_1 \biggr) dq_1 \\
& = & \frac{1}{2\pi}
\int_{\bR} q_1^l e^{\frac{i}{\hbar}p(q_1-q)} \cdot
2\pi \biggl(\frac{i}{\hbar}\bigg)^k \delta^{(k)}(q_1-q)dq_1 \\
& = &  \frac{1}{2\pi}
\int_{\bR} (-1)^k \frac{\pd^k}{\pd q_1^k} \biggl(q_1^l e^{\frac{i}{\hbar}p(q_1-q)} \biggr)\cdot
2\pi (i\hbar)^k \delta(q_1-q)dq_1 \\
& = & \frac{1}{2\pi}
\int_{\bR} (-1)^k \sum_{j=0}^k (l)_j q_1^{l-j} (\frac{i}{\hbar}p)^{k-j} e^{\frac{i}{\hbar}p(q_1-q)}  \cdot
2\pi (i\hbar)^k \delta(q_1-q)dq_1 \\
& = & (-i\hbar)^k \sum_{j=0}^k (l)_j q^{l-j} (\frac{i}{\hbar}p)^{k-j} \\
& = & (p - i\hbar\frac{\pd}{\pd q_1})q_1^l|_{q_1=q}.
\een
So one gets:
\be
f(q,p) = f_1(q,p-i\hbar\frac{\pd}{\pd q_1})f_2(q_1,p)\biggl|_{q_1=q}.
\ee
From this one gets:
\bea
f(q,p)& = &\sum_{n=0}^\infty \frac{1}{n!}
\biggl(\frac{\hbar}{i} \biggr)^n
\pd_p^nf_1(q,p) \cdot \pd_q^n f_2(q,p) \\
& = & \exp \biggl(-i\hbar
\frac{\pd^2}{\pd p_1\pd q_1} \biggr)
[f_1(q,p_1)f_2(q_1,p)]
 \biggl|_{q_1=q,p_1=p}.
\eea
Note
\ben
&& \frac{1}{(2\pi)^2} \int_{\bR^2} e^{ir Q} e^{isP}\psi(x)
\int q^m e^{-irq}dq \cdot \int p^n e^{-isp} dp \cdot drds \\
 & = &
\frac{1}{(2\pi)^2} \int_{\bR^2} e^{ir Q} e^{isP}\psi(x)
2\pi i^m\delta^{(m)}(r)\cdot 2\pi i^n\delta^{(n)}(s)drds \\
& = & \frac{1}{(2\pi)^2} \int_{\bR^2} e^{ir x} \psi(x+s\hbar) 2\pi i^m \delta^{(m)}(r)
\cdot 2\pi i^n \delta^{(n)}(s)drds \\
& = & (-i)^{m+n}
\int_{\bR^2} \frac{\pd^m}{\pd r^m} \frac{\pd^n}{\pd s^n}
(e^{irx}\psi (x+s\hbar)) \delta(r)\delta(s)  drds \\
& = & x^m \cdot (-\hbar \frac{\pd}{\pd x})^n \psi(x) \\
& = & (\widehat{(q^mp^n)}_{qp}\psi)(x).
\een
So one gets:
\be
(\hat{f}_{qp}\psi)(x) = \frac{1}{(2\pi)^2}
\int_{\bR^4} f(p,q) e^{ir (Q-q)} e^{is(P-p)}\psi(x) dpdq drds.
\ee

The Weyl quantization is given by
\be
(\hat{f}_{Weyl}\psi)(x) = \frac{1}{(2\pi)^2}
\int_{\bR^4} f(p,q) e^{ir (Q-q)+is(P-p)}\psi(x) dpdq drds.
\ee
By \eqref{eqn:T1} and \eqref{eqn:T2},
\be
[e^{i(rQ+sP)}\psi] (x)
= e^{i(rx+\hbar rs/2)}\psi(x +\hbar s),
\ee
whence
\be
\begin{split}
(\hat{f}_{Weyl}\psi)(x) = & \frac{1}{(2\pi)^2}
\int_{\bR^4} \exp \biggl[i(r x + \hbar r s/2)\biggr] \\
& \cdot e^{-i(rq+sp)}f(q,p)\psi(x + \hbar s)dpdq dr ds.
\end{split}
\ee
By changing $s$ to $\frac{1}{\hbar}(y-x)$,
integrating over $r$, then integrating over $q$,
one gets:
\be
(\hat{f}_{Weyl}\psi)(x)
= \frac{1}{2\pi \hbar} \int_{\bR^2}
e^{\frac{i}{\hbar}(x-y)p} f(\frac{x+y}{2},p)
\psi(y)dy dp.
\ee
The kernel of this operator is
\be
K(x,y) = \frac{1}{2\pi \hbar} \int_\bR
e^{\frac{i}{\hbar}(x-y)p} f(\frac{x+y}{2},p)dp.
\ee
Note K(x,y) is obtained as the Fourier transform
$p\mapsto x-y$ of the function
$f(\frac{x+y}{2},p)$.
Using the inversion formula,
one gets:
\be
f(q,p) =
\int_\bR K(q - \xi/2, q + \xi/2)e^{ip\xi /\hbar} d\xi.
\ee
Using the above two formulas,
one can derive the composition formula for Weyl symbols.
The result is:
\be
f(q,p) =
(f_1 *f_2)(q,p) = e^{i\hbar L} [f_1(q_1,p_1)f_2(q_2,p_2)]
\biggl|_{q_1=q_2=q,p_1=p_2=p},
\ee
where
\be
L=\frac{1}{2}\biggl(
\frac{\pd^2}{\pd q_1\pd p_2}
- \frac{\pd^2}{\pd q_2\pd p_1} \biggr).
\ee
This leads to deformation quantization.

By comparing this summary with the formalism of the umbral calculus,
one can see that the umbral calculus can be embedded in the formulism of quantum mechanics.
More precisely,
one can regard the formal power series $\sum_{n=0}^\infty \frac{a_n}{n!}t^n$ as the symbol
of the operator $\sum_{n=0}^\infty \frac{a_n}{n!}D^n$, and so the latter is the quantization
of the former by any of the above three quantization schemes.
In  the next subsection,
we will see that it is more natural to interpret this quantization as the Wick quantization.

\subsection{Segal-Bargmann transform, Wick quantization and umbral calculus}

In Section \ref{sec:Coherent-Umbral} we have interpreted the exponential generating series of polynomial sequences
of binomial type as coherent states of some quantization.
To elaborate this interpretation further,
in this Subsection we will relate umbral calculus to further results about coherent states
in quantum mechanics.

\subsubsection{Overecompleteness of coherent states}

Given $\psi(x) \in L^2(\bR)$,
one has a decomposition:
\be
\psi(x) = \sum_{n=0}^\infty  a_n \varphi_n(x),
\ee
where the coefficients $a_n$ are given by:
\be
a_n = \frac{1}{n!} (\psi(x), \varphi_n(x)).
\ee
Because
\be
(\varphi_n(x), \varphi_z(x)) = \corr{z|\varphi_n(x)} = e^{-|z|^2/2} \bar{z}^n ,
\ee
one gets:
\be
(\psi(x), \varphi_z(x)) =  \corr{z|\varphi(x)}
= e^{-|z|^2/2}\sum_{n=0}^\infty a_n  \bar{z}^n.
\ee
Furthermore,
\ben
\int_\bC \corr{z|\varphi_m(x)} |z\rangle \frac{dz\wedge d\bar{z}}{-2i}
& = & \int_\bC e^{-|z|^2/2}\frac{\bar{z}^m}{m!}\cdot
e^{-|z|^2/2}\sum_{n=0}^n \varphi_n(x)\frac{z^n}{n!} \frac{dz \wedge d\bar{z}}{-2i} \\
& = & \varphi_m(x),
\een
and so
\be \label{eqn:Over-Complete}
\int_\bC \corr{z|\psi(x)} |z\rangle \frac{dz\wedge d\bar{z}}{-2i}
= \psi(x).
\ee
Indeed,
\ben
\int_\bC \corr{z|\psi(x)} |z\rangle \frac{dz\wedge d\bar{z}}{-2i}
& = & \int_\bC \corr{z|\sum_{n=0}^\infty a_n \varphi_n(x)} |z\rangle \frac{dz\wedge d\bar{z}}{-2i} \\
& = & \sum_{n=0}^\infty a_n \int_\bC \corr{z| \varphi_n(x)} |z\rangle \frac{dz\wedge d\bar{z}}{-2i} \\
& = & \sum_{n=0}^\infty a_n \varphi_n(x) = \psi(x).
\een

\subsubsection{Segal-Bargmann space and Segal-Bargmann transform}

The Fourier-Bargmann transform is defined by the following formula:
\be
F^B_u v(z) =: v^\sharp (z) = (2\pi\hbar)^{-1/2} \corr{u_z,v}.
\ee
It is an isometry from $L^2(\bR)$ into $L^2(\bR^2)$:
 the scalar product of two states $\psi$, $\psi' \in  L^2(\bR)$
can be expressed in terms of $\psi^\sharp, (\psi')^\sharp$:
\be
\corr{\psi',\psi}
 = \int_\bC  (\psi')^\sharp(z)\psi^\sharp(z) \cdot \frac{dz \wedge d\bar{z}}{-2i}.
\ee
Using the Fourier-Bargmann transform,
the overcompleteness of coherent states \eqref{eqn:Over-Complete} can be written as:
\be
\psi(x) = \int_{\bC} \psi^\sharp(z)\varphi_z(x) \frac{dz\wedge d\bar{z}}{-2i}.
\ee

Let
\be
\xi = \frac{x_0-ip_0}{\sqrt{2}}.
\ee
The Segal-Bargmann space is the space of holomorphic functions $F$ in $\xi$
such that
\be
||F||_\hbar:= \biggl(\frac{1}{\pi} \int_\bC |F(\xi)|^2 e^{-|\xi|^2/\hbar} \frac{d\xi\wedge d\xi}{-2i}\biggr)^{1/2}
< +\infty.
\ee
This is a Hilbert space in which the space of polynomials in $\xi$ is dense.

The transformation $\psi \mapsto \psi^\sharp_{Hol}(\xi):=\psi^\sharp(z) e^{|z|^2/2}$ is called the
Segal-Bargmann transform
and is denoted by $\cB$.
One can check that:
\bea
&& \cB(x\psi)(\xi) = \frac{1}{\sqrt{2}}(\hbar\pd_\xi + \xi)\cB\psi(\xi), \\
&& \cB(\hbar \pd_x\psi)(\xi) = \frac{1}{\sqrt{2}}(\hbar\pd\xi - \xi)B\psi(\xi), \\
&& \cB [a^\dagger\psi] (\xi) = \xi\cB[\psi](\xi) , \\
&& \cB[a\psi](\xi) = \pd_\xi\cB[\psi](\xi).
\eea
The integral kernel of the Bargmann transform $\cB$ is the Bargmann kernel:
\be
\cB(x,\xi) = (\pi \hbar)^{-3/4}2^{-1/2} \exp\biggl[-\frac{1}{\hbar}
\biggl( \frac{x^2}{2} - \sqrt{2}x \cdot \xi + \frac{\xi^2}{2} \biggr)
\biggr],
\ee
where $x \in  \bR$, $\xi \in  \bC$.
The Bargmann kernel is a generating function for the Hermite
functions:
\be
B(x,\xi) =
\sum
\frac{\xi^n}{((2\pi)^n n!)^{1/2}} \phi_n(x).
\ee
The integral kernel of $\cB^{-1}$ is
\be
\cB^{-1}(\xi,x) = (\pi \hbar)^{-31/4}2^{-n/2} \exp
\biggl[-\frac{1}{\hbar} \biggl( \frac{x^2}{2} - \sqrt{2}x \cdot \bar{\xi} + \frac{\bar{\xi}^2}{2}
\biggr) \biggr].
\ee

\subsubsection{Wick symbol, Wick quantization and Wick $*$-product}

For an operator $A$ in the Wick normal form
\be
\hat{A}_{Wick}= \sum_{k,l} A_{k,l}(a^\dagger)^k a^l,
\ee
define its Wick symbol by
\be
A(\xi, \bar{\xi}) = \sum_{k,l} A_{k,l}z^k\bar{z}^l.
\ee
The operator $\hat{A}_{Wick}$is called the Wick quantization of the Wick symbol $A(\xi, \bar{\xi})$.
The Wick symbol can be obtained by the following formula:
\be
A(\xi,\bar{\xi}) = \frac{(\hat{A}_{Wick}\varphi_\xi, \varphi_\xi)}{(\varphi_\xi, \varphi_\xi)}.
\ee
Using the Segal-Bargmann transform,
one can identify $\hat{W}_{Wick}$ with an operator on the Segal-Bargmann space,
also denoted by $\hat{A}_{Wick}$:
\be
(\hat{A}_{Wick}f)(\xi) = \frac{1}{\pi} \int_\bC
A(\xi, \bar{\eta}) f(\eta) e^{-\bar{\eta}(\eta-\xi)}\frac{d\eta\wedge d\bar{\eta}}{-2i}.
\ee
If $A_1(\xi,\bar{\xi})$ and $A_2(\xi,\bar{\xi})$ are the Wick symbols of operators $\hat{A}_{1,Wick}$
and $\hat{A}_{2,Wick}$, then the Wick symbol of the operator
$\hat{A}_{Wick} =\hat{A}_{1,Wick}\hat{A}_{2,Wick}$ is given by
\be
A(\xi,\bar{\xi}) = \frac{1}{\pi}
\int_\bC A_1(\xi,\bar{\eta})A_2(\eta,\bar{\xi})e^{-(\eta-\xi)(\bar{\eta}-\bar{\xi})}
\frac{d\eta\wedge d\bar{\eta}}{-2i}.
\ee
Write
$A(\xi,\bar{\xi}) = A_1(\xi,\bar{\xi})*_{Wick}A_2(\xi,\bar{\xi})$.
The following formula holds:
\be
A_1(\xi,\bar{\xi})*_{Wick}A_2(\xi,\bar{\xi})
= \exp \biggl( \hbar \frac{\pd^2}{\pd \eta \pd \bar{\eta}} \biggr) (A_1(\xi, \bar{\eta})A_2(\eta, \bar{\xi}))
\biggl|_{\eta=\xi, \bar{\eta} = \bar{\xi}}.
\ee

\subsubsection{Umbral calculus from the point of view of quantum mechanics}

Now we can interpret umbral calculus using Stone-von Neumann Theorem.
In Section \ref{sec:Coherent-Umbral},
given a delta series in the sense of \cite{Roman},
we have constructed an inner product on the space $\bC[\xi]$ of polynomials in $\xi$,
on which the operator $f(D_\xi)$ is interpreted as the annihilator $a$ and its adjoint $\theta_f$
is interpreted as the creator.
The umbral operator $U: \bC[\xi] \to \bC[\xi]$, $\xi^n \mapsto p_n(\xi)$,
can now be interpreted as the intertwining operator from the Segal-Bargmann space
to the above space, as predicted by the Stone-von Neumann Theorem.
As already mentioned in Section \ref{sec:Coherent-Umbral},
the exponential generating series of the associated sequence of polynomials
can be identified with the image of the coherent states in the Segal-Bargmann space
under the map $U$.

One can use the above interpretation to define an umbral $*$-product as follows:
\be
g(x)*h(x) := U(U^{-1}(g(x))\cdot U^{-1}(h(x))).
\ee
More explicitly,
\be
\sum_{k=0}^\infty a_k p_k(x) * \sum_{l=0}^\infty b_l p_l(x)
:= \sum_{k,l=0}^\infty a_k b_lp_{k+l}(x).
\ee
For a special case of this product, see \cite[Appendix]{Dimakis}.

\subsection{Segal-Bargmann transform from the point of view of umbral calculus}

The Segal-Bargmann transform is essentially a linear map
\be
\bC[x] \to \bC[x], He_n(x) \mapsto x^n.
\ee
The inverse map is then given by
\be
x^n \mapsto He_n(x).
\ee
From the point of view of umbral calculus,
this is a Sheffer operator.
We will recall in this Subsection the theory of Sheffer sequences
in umbral calculus.
Since it is very similar to the theory of associated sequences,
we will be very brief and refer the interested reader to
Roman \cite{Roman} for details.

\subsubsection{Sheffer sequences}
The reference for this subsection is \cite[Section 2.3]{Roman}.
Let $f(t)$ be a delta series and let $g(t)$ be an invertible series.
Then there exists a unique sequence $\{s_n(x)\}_{n\geq 0}$  of polynomials satisfying
\be
\corr{g(t)f(t)^k|s_n(x)}  = n! \delta_{n,k}
\ee
for all $n, k  \geq 0$.
This sequence is called the {\em Sheffer sequence}. for the pair
$(g(t),f(t))$.
Similar to the associated sequence,
one has the Expansion Theorem and the Polynomial Expansion Theorem for Sheffer sequences.
There are several ways to characterize the Sheffer sequences:
The following statements are equivalent to each other:
\begin{itemize}
\item[(a)] $\{s_n(x)\}_{n \geq 0}$ is a Sheffer sequence
 for the pair $(g(t), g(t))$.
 \item[(b)]  $\{s_n(x)\}_{n \geq 0}$  satisfies the
Sheffer identity:
\be
s_n(x+y) = \sum_{k=0}^n \binom{n}{k} p_k(y) s_{n-k}(x),
\ee
where $\{p_n(x)\}_{n \geq 0}$ is associated to $f(t)$.
\item[(c)] The exponential generating series of $\{s_n(x)\}_{n\geq 0}$ is given by:
\be \label{eqn:Sheffer-Gen}
\sum_{n=0}^\infty \frac{s_n(x)}{n!}t^n
=\frac{1}{g(F(t))}e^{xF(t)},
\ee
where $F(t)$ is the compositional inverse of $f(t)$.
\item[(d)] For $n \geq 0$,
\be
s_n(x) = \sum_{k=0}^n \corr{f(F(t))^{-1}f(t)^k|x^n} x^k.
\ee
\item[(e)] $\{g(D)s_n(x)\}_{n\geq0}$ is the associated sequence for $f(t)$.
\item[(f)] For $n \geq 0$,
\be
f(D) s_n(x) = n s_{n-1}(x).
\ee
\end{itemize}

By comparing (f) with \eqref{eqn:(i)} and \eqref{eqn:(ii)},
one is led to the problem finding $g(t)$ from the Sheffer sequence.
This can be solved as follows.
By \eqref{eqn:Sheffer-Gen} one has:
\ben
&& \sum_{n=0}^\infty \frac{s_n'(x)}{n!}t^n
=\frac{F(t)}{g(F(t))}e^{xF(t)},
\een
and so
\be
\sum_{n=0}^\infty \frac{s_n'(0)}{n!}f(t)^n
=\frac{t}{g(t)}.
\ee

\subsubsection{Sheffer operators and umbral compositions}

The materials in this subsection is taken from
\cite[Section 3.5]{Roman}.
Let $\{s_n(x)\}$  be the Sheffer sequence for $(g(t),f(t))$. Then the linear operator
$\lambda_{f,g}$ on $\bC[x]$ defined by
\be
\lambda_{g,f} x^n  = s_n(x)
\ee
is called the {\em Sheffer operator}  for $\{s_n(x)\}$  or for $( g ( t ) , f ( t ) )$.
The adjoint operator of a Sheffer operator can be characterized as follows.
Let $\{p_n(x)\}$ be the sequence of  polynomials associated  to $f(t)$,
and denote by $\lambda_f: \bC[x] \to \bC[x]$ the umbral operator defined by $\lambda_fx^n = p_n(x)$.
Then the Sheffer operator $\lambda_{g,f}$ is related to $\lambda_f$ by:
\be
\lambda_{g,f}= \frac{1}{g(D)}\lambda_f.
\ee
For $h(t) \in \bC[[t]]$,
\be
\lambda_{g,f}^*h(t)= \lambda_f^*\frac{1}{g(D)^*}h(t) = \lambda^*_f(\frac{1}{g(t)}h(t))
= \frac{1}{g(F(t))}h(F(t)).
\ee
Conversely, an operator on $\bC[[t]]$ of the form $h(t) \mapsto \frac{1}{g(F(t))}h(F(t))$
is a Sheffer operator,
where $F(t)$ is a delta series inverse to $f(t)$ and $g(t)$ is an invertible series.

From the above characterization of Sheffer operator one can deduce that
the set of all Sheffer operators is a group under composition:
\bea
&& \lambda_{g(t),f(t)}\circ \lambda_{h(t),l(t)}  = \lambda_{g(t)h(f(t)), l(f(t))}, \\
&& \lambda_{g(t),f(t)}^{-1} = \lambda_{1/g(F(t)), F(t)}.
\eea
As corollaries one has the following results.
A  Sheffer operator maps Sheffer sequences to Sheffer
sequences.
If $\{s_(x)\}$  and $\{r_n(x)\}$ are Sheffer sequences,
then the linear operator defined by: $s_n(x)  \mapsto r_n(x)$ is a Sheffer operator.

\subsubsection{Sheffer shifts and canonical commutation relation}

Let $\{s_n(x)\}$  be the Sheffer sequence for $(g(t),  f(t))$.
The linear operator $\theta_{g,f}$ on $\bC[x]$ defined by
\be
\theta_{g,f}s_n(x)  = s_{n+l}(x)
\ee
is called the {\em Sheffer shift} for $\{s_n(x)\}$ or for $(g(t), f(t))$.

If $\{s_n(x)\}$  is Sheffer for $(g(t),  f (t ))$,  then $\{p_n(x)  = g(D)s_n(x)\}$ is associated to $f(t)$.
and so for $n \geq 0$,
\ben
&& \theta_{g,f}g(D)^{-1}p_n(x) = g(D)^{-1}p_{n+1}(x) = g(D)^{-1} \theta_fp_n(x).
\een
Thus,
\be
\theta_{g,f} = g(D)^{-1}\theta_fg(D).
\ee
By a computation using the chain rule,
one gets:
\be
\theta_{g,f} = \biggl[x \cdot - \frac{g'(D)}{g(D)} \biggr] \frac{1}{f'(D)},
\ee
and in particular,
one gets the following recursion formula for the Sheffer sequence:
\be
s_{n+1}(x) = \biggl[x \cdot - \frac{g'(D)}{g(D)} \biggr] \frac{1}{f'(D)}s_n(x).
\ee

It is clear that the following commutation relation is satisfied:
\be
[f(D_x), \theta_{g,f}] = 1.
\ee

\subsubsection{Exponential generating series of Sheffer sequence as coherent states}

As in \S \ref{sec:Coherent-Umbral},
we modify the scalar product introduced in \cite[\S 9]{Rota et al}
to a Hermitian inner product.
For $p(x), q(x) \in \bC[x]$,
define
\be
(p(x),q(x)):= ((\theta_{g,f}^{-1}p)(f(D))g(D)\overline{q(x)}|_{x=0}.
\ee
It is easy to see that
\be
(s_n(x), s_m(x))_{f,g}= \delta_{m,n}n!.
\ee
Furthermore,
\be
(f(D_x)p(x),q(x))_{f,g}=(p(x),\theta_{g,f}q(x))_{f,g}.
\ee
Also note
\be
f(D_x)1 = 0,
\ee
so one can take $1$ to be the vacuum vector, $a=f(D_x)$ to be the annihilator,
and $a^\dagger = \theta_{g,f}$ to be the creator.
Note we have
\ben
a\biggl[ g(t)e^{xF(t)} \biggr] & = & f(D_x) \sum_{n=0}^\infty \frac{s_n(x)}{n!} t^n
= \sum_{n=0}^\infty \frac{ns_{n-1}(x)}{n!}t^n = t \cdot g(t)e^{xF(t)}.
\een
This implies that the family $g(t)e^{xF(t)}$ is a family of coherent states indexed by $t$.

It is clear that all the discussion about the relationship between quantum mechanics
and associated sequences can be generalized to the Sheffer sequences.
We omit the details.

\subsubsection{Segal-Bargmann transform as a Sheffer operator}
Recall the exponential generating series of $He_n(x)$ is
\be
e^{xt-t^2/2} = \sum_{n=0}^\infty \frac{He_n(x)}{n!}t^n,
\ee
so by (c) above, $\{He_n(x)\}_{n\geq 0}$ is the Sheffer sequence for $(f(t),g(t)) = (t, e^{-t^2/2})$.
Therefore the map $He_n(x) \mapsto x^n$ and its inverse $x^n \mapsto He_n(x)$
are both Sheffer operators.

\section{Interpolating Statistics, Deformed Exponential Functions, and Deformed Entropy Functions}
\label{sec:Deformed-Entropy}

As we have discussed above,
the associated sequences in umbral calculus can be regarded as deformed power functions,
and hence their exponential generating series can be regarded as deformed exponential functions.
This leads us to the generalized thermostatics \cite{Naudts11} where another kind of deformed exponential functions
arise.

In this Section we will explain how the interpolating statistics are related
to deformed exponential functions and
deformed entropy functions.
We will show that all interpolating statistics
are deformed exponential functions in the sense of \cite{Naudts11}.

\subsection{The $\phi$-logarithm and $\phi$-exponential functions}
These were introduced by Naudt \cite{Naudts02}.
Fix a strictly positive non-decreasing function $\phi(u)$, defined on the positive
numbers $(0,+\infty)$.
The {\em $\phi$-logarithm function} is defined by:
\be
\ln_\phi(u) = \int_1^u dv \frac{1}{\phi(v)}, \;\;\; u > 0.
\ee
The inverse of the function $\ln_\phi(x)$ is called the
{\em $\phi$-exponential function} and is denoted $\exp_\phi(x)$.

A typical example used by Naudt \cite{Naudts02} is the case of $\phi(u)= u^q$.
In this case the deformed logarithm function is
\be
\ln_q(u) = \int_1^u \frac{1}{v^q} dv = \begin{cases}
\frac{u^{1-q}-1}{1-q}, & \text{if $q\neq 1$}, \\
\log(u), & \text{if $q=1$}.
\end{cases}
\ee
This was introduced in the context of nonextensive statistical physics
in Tsallis \cite{Tsallis88}.

\subsection{All interpolating statistics are $\phi_\bT$-exponential functions}

For our purpose we will take:
\ben
&& \phi_\bT(p) = p - \sum_{n \geq 2} T_{n-1} p^n,
\een
and perform the following formal calculation:
\ben
\widetilde{\ln}_{\phi_\bT}(p): & = & \int^p \frac{du}{\phi(u)}
= \int^p   \frac{du}{u - \sum_{n \geq 2} T_{n-1} u^n} \\
& = & \int^p \sum_{m=0}^\infty \biggl(\sum_{i=1}^\infty T_iu^i \biggr)^m \frac{du}{u} \\
& = & \int^p \sum_{n=0}^\infty \sum_{\sum_i m_ii =n}
\binom{m_1 + \cdots + m_n}{m_1, \dots, m_n} \prod_i T_i^{m_i}\cdot u^{n-1}du \\
& = & \ln(p) +  \sum_{n=1}^\infty \sum_{\sum_i m_ii =n}
\binom{m_1 + \cdots + m_n}{m_1, \dots, m_n} \prod_i T_i^{m_i}\cdot \frac{p^n}{n}.
\een
In other words,
\be \label{eqn:ln-phi}
\widetilde{\ln}_{\phi_\bT}(p) = \ln(p) + \sum_{n \geq 1} a_n(\bT) \frac{p^n}{n},
\ee
where each
\be \label{eqn:a-in-T}
a_n(\bT) = \sum_{\sum_i m_ii =n}
\binom{m_1 + \cdots + m_n}{m_1, \dots, m_n} \prod_i T_i^{m_i}
\ee
is a weighted homogeneous polynomial in $T_1, \dots, T_n$ of degree $n$,
with $\deg T_j = j$.
For example,
\ben
&& a_1 = T_1, \\
&& a_2 = T_2+T_1^2, \\
&& a_3 = T_3+T_2T_1 + T_1^3, \\
&& a_4 = T_4+2T_3T_1+T_2^2+3T_2T_1^2+T_1^4.
\een
Therefore, \eqref{eqn:ln-phi} shows that $\widetilde{\ln}_{\phi_\bT}$ is a formal deformation
of the $\ln$-function parameterized by $\{T_n\}$.

To find an expression for the formal inverse $\widetilde{\exp}_{\phi_\bT}$,
introduce the following series in $p$:
\be
X_{\phi_\bT}(p): = \exp(\tilde{\ln}_\phi(p)) = p (1 + \sum_{n\geq 1} b_n(\bT) p^n ),
\ee
where each $b_n(\bT)$ is a weighted homogeneous polynomial in $a_1, \dots, a_n$,
hence it is a weighted homogeneous polynomial in $T_1, \dots, T_n$:
\be \label{eqn:b-in-a}
b_n = \sum_{\sum_i m_ii=n} \prod_i \frac{a_i^{m_i}}{i^{m_i}m_i!}.
\ee
For example,
\ben
&& b_1 = a_1 = T_1, \\
&& b_2 = \frac{a_2}{2}+\frac{a_1^2}{2!}=\frac{T_2}{2}+T_1^2, \\
&& b_3 = \frac{a_3}{3}+ a_1\frac{a_2}{2}+\frac{a_1^3}{6}
=\frac{T_3}{3}+\frac{7}{6}T_2T_1 + T_1^3.
\een
Now applying the Lagrange inversion,
one can express $p$ as a formal power series in $X_{\Phi_T}$:
\be
p = X_{\phi_T}+\sum_{n=2}^\infty c_n(\bT) X_{\phi_T}^n,
\ee
for example,
\ben
&&c_2 = -b_1 = -a_1 = -T_1, \\
&&c_2 = -b_2+2b_1^2 = -\frac{1}{2}a_2+\frac{3}{2}a_1^2 = -\frac{1}{2}T_2+T_1^2, \\
&&c_3 = -b_3+5b_2b_1-5b_1^3
= -\frac{1}{3}a_3+2a_2a_1-\frac{8}{3}a_1^3
= -\frac{1}{3}T_3+\frac{4}{3}T_2T_1-T_1^3.
\een
So we get:
\be
\widetilde{\exp}_{\phi_\bT}(u) = e^u+\sum_{n=2}^\infty c_n(\bT) e^{nu}.
\ee
It is clearly a deformation of the ordinary exponential function parameterized by
$\{T_n\}$.

factor((1/24)*Chi([1,1,1,1], [4])+(1/6)*Chi([1,1,1,1], [1, 3])+(1/4)*Chi([1,1,1,1], [1, 1, 2])+(1/8)*Chi([1,1,1,1], [2, 2])+(1/24)*Chi([1,1,1,1], [1, 1, 1, 1]))

Now define a space $\Phi$ by:
\bea
&& \Phi = \{ \phi_\bT(p) = p - \sum_{n \geq 2} T_n p^n \}.
\eea
Recall the space of interpolating statistics is defined by
\be
\cS = \{ w(X) = X + \sum_{n \geq 2} w_n X^n\}.
\ee
Given any formal series $\phi_\bT(p) = p - \sum_{n \geq 2} T_n p^n \in \Phi$,
define a formal series $w_{\phi_\bT}$ in $\cS$ as follows:
\be
w_{\phi_\bT} = X + \sum_{n \geq 2} c_n(\bT) X^n.
\ee
Then $\phi_T(p)\mapsto w_{\phi_\bT}(X)$ defines  a map $g: \Phi\to \cS$.

\begin{Proposition}
The map $g$ is a one-to-one map.
\end{Proposition}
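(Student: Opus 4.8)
The plan is to realize $g$ as a composition of invertible operations on formal series, so that injectivity drops out by reversing the construction one step at a time. Unwinding the definitions, $g$ carries $\bT$ to $w_{\phi_\bT}$ through the chain
\[
\bT \;\mapsto\; \phi_\bT \;\mapsto\; \widetilde{\ln}_{\phi_\bT} \;\mapsto\; X_{\phi_\bT} \;\mapsto\; w_{\phi_\bT},
\]
where $X_{\phi_\bT}=\exp\bigl(\widetilde{\ln}_{\phi_\bT}\bigr)$ and the last arrow is passage to the compositional inverse: by construction $w_{\phi_\bT}(X)=X+\sum_{n\ge2}c_n(\bT)X^n$ is exactly the Lagrange inverse of $X=X_{\phi_\bT}(p)$. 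I would show that each arrow admits an explicit left inverse on the appropriate space of formal series, whence $g$ is injective as a composition of injective maps.

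First I would record the normalizations that make every stage well defined. Since $\phi_\bT(p)=p+O(p^2)$, its reciprocal is a Laurent series $1/\phi_\bT(p)=\tfrac1p\bigl(1+O(p)\bigr)$, whose formal antiderivative is $\widetilde{\ln}_{\phi_\bT}(p)=\log p+\sum_{n\ge1}a_n(\bT)\tfrac{p^n}{n}$; exponentiating yields $X_{\phi_\bT}(p)=p\bigl(1+O(p)\bigr)$, a series with vanishing constant term and unit linear coefficient, so its compositional inverse $w_{\phi_\bT}$ exists and is unique. I would then reverse the chain: recover $X_{\phi_\bT}$ from $w_{\phi_\bT}$ by the (involutive) compositional inverse; recover $\widetilde{\ln}_{\phi_\bT}=\log X_{\phi_\bT}$ by the formal logarithm; recover $1/\phi_\bT=(\widetilde{\ln}_{\phi_\bT})'$ by differentiation and hence $\phi_\bT$ by the reciprocal; and finally read off $\bT$ from the coefficients of $\phi_\bT$. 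Each operation is legitimate precisely because of the normalizations above, so $w_{\phi_\bT}=w_{\phi_{\bT'}}$ forces $\phi_\bT=\phi_{\bT'}$ and therefore $\bT=\bT'$.

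As an equivalent computational check I would exploit that the whole construction is triangular for the weight grading $\deg T_j=j$. The examples preceding the statement show $a_n=T_n+(\text{polynomial in }T_1,\dots,T_{n-1})$, then $b_n=\tfrac{a_n}{n}+(\text{lower})$, and Lagrange inversion gives $c_n=-b_{n-1}+(\text{terms in }b_1,\dots,b_{n-2})$; combining these, the coefficient of $X^n$ in $w_{\phi_\bT}$ has the form
\[
c_n(\bT) = -\frac{1}{n-1}\,T_{n-1} + (\text{a polynomial in }T_1,\dots,T_{n-2}).
\]
Because the top-weight variable $T_{n-1}$ enters linearly with the nonzero coefficient $-\tfrac{1}{n-1}$, one solves recursively for $T_1,T_2,\dots$ from the coefficients of $w_{\phi_\bT}$, which again gives injectivity (indeed bijectivity onto $\cS$). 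I expect the only real work to be this bookkeeping of normalizations — checking that $1/\phi_\bT$ lies in $\tfrac1p\bigl(1+O(p)\bigr)$ and that $X_{\phi_\bT},w_{\phi_\bT}$ lie in $p+O(p^2)$ — so that logarithm, reciprocal, and compositional inverse are simultaneously defined and mutually inverse; once these are in place the reversibility of the construction, and hence the injectivity of $g$, is automatic.
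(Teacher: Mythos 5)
Your proposal is correct and follows essentially the same route as the paper: the paper's proof also constructs the explicit inverse $\cS\to\Phi$ by Lagrange-inverting $w$ to get $X(w)$ and then recovering $\phi$ via $\phi(u)=X(u)/X'(u)=1/\tfrac{d}{du}\log X(u)$, i.e.\ by reversing each invertible formal-series operation exactly as in your chain. Your supplementary triangularity check ($c_n=-\tfrac{1}{n-1}T_{n-1}+\text{lower}$) is consistent with the paper's computed examples $c_2=-T_1$, $-\tfrac12 T_2+\cdots$, $-\tfrac13 T_3+\cdots$, but is not needed beyond the first argument.
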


\begin{proof}
Given a formal power series $w(X) = X + \sum_{n \geq 2} w_n X^n$
 in the space $\cS$ of interpolating series,
one can apply Lagrange inversion to get the inverse series
$$X(w) = w + \sum_{n \geq 2} \hat{w}_n w^n.$$
Take $b_n = \hat{w}_{n-1}$.
One can solve \eqref{eqn:b-in-a} to express $\{a_n\}$ in terms of $\{b_n\}$,
and one can solve \eqref{eqn:a-in-T} to express $\{T_n\}$ in terms of $\{a_n\}$.
Therefore,
one can express $\{T_n\}$ in terms of $\{b_n=\hat{w}_{n-1}\}$,
and hence also of $\{w_n\}$.
This gives us an inverse map $\cS \to \Phi$.
More explicitly,
\be \label{eqn:phi-in-X}
\phi(u) = \frac{u+\sum_{n \geq 2} \hat{w}_n u^n}{1+\sum_{n\geq 2} n\hat{w}_n u^{n-1}}
= \frac{X(u)}{\frac{dX(u)}{du}}
= \frac{1}{\frac{d}{du} \log X(u)}.
\ee
\end{proof}

\begin{Proposition}
The following formula for $\log_\phi(p)$ holds when $\phi$ is determined by \eqref{eqn:phi-in-X}:
\be
\log_\phi(p) = \log X(u) - \log X(1).
\ee

\end{Proposition}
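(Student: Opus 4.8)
The plan is to obtain the identity directly from the definition of the $\phi$-logarithm together with the explicit description of $\phi$ furnished by \eqref{eqn:phi-in-X}. Recall that the $\phi$-logarithm is defined by $\log_\phi(p) = \int_1^p \frac{dv}{\phi(v)}$, while \eqref{eqn:phi-in-X} expresses $\phi$ as the reciprocal of a logarithmic derivative, namely $\frac{1}{\phi(v)} = \frac{d}{dv}\log X(v)$. So the integrand is an exact differential and the whole statement should reduce to the fundamental theorem of calculus. Concretely, I would substitute this expression for $1/\phi$ into the defining integral to get
\be
\log_\phi(p) = \int_1^p \frac{d}{dv}\log X(v)\, dv = \log X(p) - \log X(1),
\ee
which is the asserted formula once $u$ and $p$ are read as the same argument.

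The one point demanding care is the formal meaning of the integral. Writing $X(v) = v\bigl(1 + \sum_{n \geq 2}\hat{w}_n v^{n-1}\bigr)$, the logarithmic derivative splits as
\be
\frac{d}{dv}\log X(v) = \frac{1}{v} + \frac{d}{dv}\log\Bigl(1 + \sum_{n \geq 2}\hat{w}_n v^{n-1}\Bigr),
\ee
so the integrand has a simple pole at $v = 0$ from the first term while the second term is a genuine formal power series in $v$. I would integrate the two pieces separately: the $1/v$ term contributes the $\log p - \log 1 = \log p$ part, and the power-series part integrates term by term without difficulty, its antiderivative again being the corresponding piece of $\log X(v)$. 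Since $\log_\phi$ is defined only for $u > 0$, no difficulty arises from the singularity at the origin.

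The hard part, to the extent there is one, is merely bookkeeping: one must check that the lower limit $v = 1$ produces exactly the normalization $-\log X(1)$, and that the term-by-term integration is compatible with the power-series conventions used throughout Section~\ref{sec:Deformed-Entropy}. Once $\phi$ is recognized as $1/(\log X)'$ through \eqref{eqn:phi-in-X}, the entire proposition follows immediately from the fundamental theorem of calculus, with no further input required.
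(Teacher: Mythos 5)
Your proof is correct and is essentially the paper's own argument: the paper likewise substitutes $1/\phi(u) = X'(u)/X(u)$ from \eqref{eqn:phi-in-X} into the defining integral and evaluates $\int_{X(1)}^{X(p)} dX/X$, which is just your application of the fundamental theorem of calculus phrased as a change of variables. The extra remarks on the formal-power-series interpretation are harmless but not needed.
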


\begin{proof}
\ben
\log_\phi(p) & = &\int_1^p \frac{du}{\phi(u)}
= \int_1^p \frac{\frac{d}{du}X(u)}{X(u)} du \\
& = & \int_{X(1)}^{X(p)} \frac{dX}{X} = \log X(u) - \log X(1).
\een
\end{proof}

\subsection{The deduced logarithm and the  $\phi$-entropy}

Under the condition that
\be
c:=\int_0^1 \frac{v}{\phi(v)}dv < + \infty,
\ee
Naudt \cite[\S 10.3]{Naudts11} introduces a function:
\be
\chi(u) = \biggl[ \int_0^{1/u}  \frac{v}{\phi(v)}dv \biggr]^{-1},
\ee
and he calls the  deformed logarithm $\ln_\chi$ associated with $\chi$
the {\em deduced logarithm}.

Naudts \cite[\S 11.1]{Naudts11} defines the $\phi$-entropy by:
\be
H_\phi(\bp) = \sum_{i=1}^n  p_i \ln_\chi(1/p_i),
\ee
where $\bp=p_1, \dots, p_n$, $p_i \geq 0$, $\sum_i p_i = 1$.
After a short calculation:
\be
H_\phi(\bp)
= - \sum_{i=1}^n  p_i \int_1^{p_i}
\frac{1}{u^2} \biggl[\int_0^{u} dv \frac{v}{\phi(v)} \biggr] d u.
\ee

\begin{Proposition}
The partial derivatives of $\phi$-entropy are given by negative the $\phi$-logarithms
up to the constant $c$:
\be \label{eqn:H-phi-grad}
\frac{\pd H_\phi(\bp)}{\pd p_i} = - \log_{\phi}(p_i)-c,
\ee
where $c$ is a constant defined by:
\be
c=\int_0^1\frac{v}{\phi(v)}dv.
\ee
\end{Proposition}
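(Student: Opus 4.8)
The plan is to reduce the statement to a single-variable calculus computation, since the $\phi$-entropy splits as a sum $H_\phi(\bp) = -\sum_{i=1}^n p_i\, G(p_i)$ whose $i$-th summand depends only on $p_i$. Here I abbreviate the inner integral by $\Psi(u) := \int_0^u \frac{v}{\phi(v)}\,dv$, so that $\Psi'(u) = u/\phi(u)$, $\Psi(0)=0$, and $\Psi(1) = c$; and I set $G(p) := \int_1^p \frac{\Psi(u)}{u^2}\,du$. Because each term of the sum involves only the single variable $p_i$ --- the simplex constraint $\sum_i p_i = 1$ plays no role in an unconstrained partial derivative --- differentiating gives
\[
\frac{\pd H_\phi(\bp)}{\pd p_i} = -G(p_i) - p_i\, G'(p_i).
\]

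First I would evaluate the easy term: by the fundamental theorem of calculus $G'(p) = \Psi(p)/p^2$, hence $p_i\,G'(p_i) = \Psi(p_i)/p_i$. The substantive step is to rewrite $G(p)$ itself so that the $\phi$-logarithm appears. For this I integrate by parts in $G(p) = \int_1^p \Psi(u)\,u^{-2}\,du$, taking $u^{-2}\,du$ as the factor to be integrated (antiderivative $-u^{-1}$) and $\Psi(u)$ as the factor to be differentiated ($d\Psi = \frac{u}{\phi(u)}\,du$). The boundary term produces $-\Psi(p)/p + \Psi(1)$, and the remaining integral collapses, after cancelling the factor $u$, to $\int_1^p \frac{1}{\phi(u)}\,du = \log_\phi(p)$. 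Thus
\[
G(p) = -\frac{\Psi(p)}{p} + c + \log_\phi(p).
\]

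Finally I would combine the two computations: substituting this expression for $G(p_i)$ together with $p_i\,G'(p_i) = \Psi(p_i)/p_i$ into the formula for the partial derivative, the two occurrences of $\Psi(p_i)/p_i$ cancel, leaving exactly $-\log_\phi(p_i) - c$, as claimed.

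I do not anticipate a genuine obstacle here; the computation is elementary. The only points requiring care are keeping track of the boundary term at $u=1$ (which is precisely what manufactures the constant $c=\Psi(1)$) and confirming that the inner integral and the integration by parts are legitimate under Naudts' standing hypothesis $c = \int_0^1 \frac{v}{\phi(v)}\,dv < +\infty$, which guarantees that $\Psi(u)$ stays finite near $0$ and that all the integrals above converge.
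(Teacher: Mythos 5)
Your proof is correct and follows essentially the same route as the paper's: apply the product rule to $-p_i G(p_i)$ and integrate $G$ by parts so that the boundary term at $u=1$ produces $c=\Psi(1)$ and the remaining integral collapses to $\log_\phi(p_i)$, after which the two copies of $\Psi(p_i)/p_i$ cancel. Your write-up is in fact cleaner on signs than the paper's displayed computation, which contains a couple of typographical slips ($\int_0^{p_i}$ for $\int_1^{p_i}$ and a dropped minus sign in the final line) but is the same argument.
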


\begin{proof}
We have the following computations:
\ben
\frac{\pd H_\phi(\bp)}{\pd p_i} & = & - \int_1^{p_i}  \frac{1}{u^2}
\biggl[\int_0^{u}\frac{v}{\phi(v)} dv\biggr] d u
-  \frac{1}{p_i}
 \int^{p_i}_0  \frac{u}{\phi(u)}du  \\
& = & \int^{p_i}_1
\biggl[\int_0^{u} \frac{v}{\phi(v)}dv \biggr] d \frac{1}{u}
-  \frac{1}{p_i}
 \int^{p_i}_0 du \frac{u}{\phi(u)}   \\
& = &  -\int_0^{p_i} du \frac{1}{\phi(u)} - \int_0^1\frac{v}{\phi(v)}dv
= \ln_{\phi}(p_i)-c.
\een
\end{proof}

\subsection{All deformed entropy functions are $\phi_\bT$-entropy functions}

Let $\xi(u)$ be defined by
\be
\xi_\phi(u)(u): = \int_0^{u}  \frac{v}{\phi(v)}dv .
\ee
It is related to $\chi(u)$ as follows:
\be
\xi_\phi(u)(u) = \frac{1}{\chi(1/u)}.
\ee
Denote by $\xi_\bT(u)$  the series $\xi(u)$ when $\phi$ is the series $\phi_\bT$.
We have
\ben
\xi_\bT(u) & = &
\int_0^u \frac{v}{v - \sum_{n \geq 2} T_{n-1} v^n}dv \\
& = & \int^u \sum_{n=0}^\infty \sum_{\sum_i m_ii =n}
\binom{m_1 + \cdots + m_n}{m_1, \dots, m_n} \prod_i T_i^{m_i}\cdot v^{n}dv \\
& = & u +  \sum_{n=1}^\infty \sum_{\sum_i m_ii =n}
\binom{m_1 + \cdots + m_n}{m_1, \dots, m_n} \prod_i T_i^{m_i}\cdot \frac{u^{n+1}}{n+1}.
\een

\begin{Proposition} \label{prop:xi-F}
Suppose that $F(X) = \sum_{n=1}^\infty \frac{w_n}{n}X^n$, so that
$X=X(w)$ is obtained by Lagrange inversion from
$w(X) = X \frac{dF(X)}{dX}$, and $\phi(u)$ is determined from $X(u)$ by
\ben
\phi(u) = \frac{1}{\frac{d}{du} \log X(u)},
\een
then we have
\be
\xi_\phi(u) = F(X(u)).
\ee
\end{Proposition}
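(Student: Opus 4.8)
The plan is to prove the identity by differentiation: I would show that $\xi_\phi(u)$ and $F(X(u))$ have equal derivatives in $u$ and agree at $u=0$, so that the two formal power series must coincide.

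First I would unwind the hypothesis on $\phi$. Since $\phi(u) = 1/\frac{d}{du}\log X(u)$, I obtain
\be
\frac{1}{\phi(u)} = \frac{d}{du}\log X(u) = \frac{X'(u)}{X(u)},
\ee
and hence, differentiating the defining integral $\xi_\phi(u) = \int_0^u \frac{v}{\phi(v)}\,dv$,
\be
\frac{d}{du}\xi_\phi(u) = \frac{u}{\phi(u)} = \frac{u\,X'(u)}{X(u)}.
\ee

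The key step is then to compute the derivative of $F(X(u))$ and recognize the same expression. Here I would stress that the argument $u$ plays the role of the weight variable $w$, so that $X(u)$ is the compositional inverse of $w = w(X)$ and therefore $w(X(u)) = u$. From the relation $w(X) = X\,\frac{dF(X)}{dX}$ I read off $\frac{dF}{dX} = w(X)/X$, and the chain rule gives
\be
\frac{d}{du}F(X(u)) = \frac{w(X(u))}{X(u)}\,X'(u) = \frac{u\,X'(u)}{X(u)},
\ee
which matches the derivative of $\xi_\phi$ computed above.

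To finish I would check the boundary value: $\xi_\phi(0)=0$ holds by definition, while $X(u)=u+O(u^2)$ forces $X(0)=0$, and $F(X)=\sum_{n\geq 1}\frac{w_n}{n}X^n$ vanishes at $X=0$, so $F(X(0))=0$ as well. Two formal power series in $u$ with equal derivative and equal constant term are identical, which gives $\xi_\phi(u) = F(X(u))$. I do not expect a genuine obstacle here, since the whole statement reduces to the chain rule together with the Lagrange-inversion identity $w(X(u))=u$. The only delicate point is the bookkeeping of variables—recognizing that $u$ is the weight variable and $X(u)$ is the inverse series determined by $w=w(X)$—after which the two derivatives are manifestly equal.
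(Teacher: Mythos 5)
Your proof is correct and is essentially the paper's argument read in the opposite direction: the paper performs the change of variables $X=X(w)$ inside the integral $\int_0^u v\,\frac{d\log X(v)}{dv}\,dv$ and uses $\frac{w(X)}{X}=\frac{dF}{dX}$ to integrate directly to $F(X(u))$, whereas you differentiate both sides and match via the same chain rule and the same identity $w(X(u))=u$. The two presentations use identical ingredients, so there is nothing substantive to add.
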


\begin{proof}
This is proved by a straightforward computation:
\ben
\xi_\phi(u)(u) & = & \int_0^{u}  \frac{v}{\phi(v)}dv
= \int_0^u v \cdot \frac{d \log X(v)}{dv} dv \\
& = & \int_0^u \frac{w}{X(w)}dX(w) = \int_0^{X(u)} \frac{w(X)}{X} dX
= \int_0^{X(u)} \frac{d F(X)}{dX} dX \\
& = & F(X(u)).
\een
\end{proof}

We formally define
\be
\tilde{H}_{\phi_\bT}(p): = \int^p \frac{\xi_\bT(u)}{u^2}du.
\ee
By a formal calculations,
we have:
\ben
\tilde{H}_{\phi_\bT}(p)
& = & - p
 (\ln(p) + \sum_{n=1}^\infty \sum_{\sum_i m_ii =n}
\binom{m_1 + \cdots + m_n}{m_1, \dots, m_n} \prod_i T_i^{m_i}\cdot
\frac{p^{n}} {n(n+1)} ).
\een

Define a space of deformed entropy by:
\bea
&& \cH = \{ H_{\bs}(p) = -p( \ln(p) + \sum_{n=1}^\infty s_n \frac{p^n}{n(n+1))} \}.
\eea
Then one can define a map $f: \Phi \to \cH$
by sending $\phi(\bT) = - \sum_{n \geq 2} T_{n-1} p^n$ to
$H_{\bs}(p) = -p( \ln(p) + \sum_{n=1}^\infty s_n \frac{p^n}{n(n+1)} )$,
where
\be \label{eqn:S-in-T1}
s_n = \sum_{\sum_i m_ii =n}
\binom{m_1 + \cdots + m_n}{m_1, \dots, m_n} \prod_i T_i^{m_i}, \;\;\; n \geq 1.
\ee

\begin{Proposition}
The map $f: \Phi \to \cH$ is a one-to-one correspondence.
\end{Proposition}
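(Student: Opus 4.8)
The plan is to reduce the statement to the invertibility of the single coefficient transformation \eqref{eqn:S-in-T1}. Both $\Phi$ and $\cH$ are tautologically coordinatized by sequences: the assignment $\bT = (T_1, T_2, \dots) \mapsto \phi_\bT$ is a bijection onto $\Phi$, since the $T_i$ are literally the coefficients of $\phi_\bT$, and likewise $\bs = (s_1, s_2, \dots) \mapsto H_\bs$ is a bijection onto $\cH$, since each $s_n$ is recovered from $H_\bs$ as the coefficient of $-p^{n+1}/(n(n+1))$. Under these identifications the map $f$ becomes precisely the polynomial transformation $\{T_i\} \mapsto \{s_n\}$ defined by \eqref{eqn:S-in-T1}, so it suffices to prove that this transformation is a bijection of the space of sequences onto itself. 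In particular well-definedness of $f$ into $\cH$ is automatic, because $\cH$ consists of all $H_\bs$ with arbitrary $\bs$.

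The key step is to recognize \eqref{eqn:S-in-T1} as the coefficient identity for reciprocal formal power series, which is the very manipulation already used in the excerpt to compute $\widetilde{\ln}_{\phi_\bT}$ and $\xi_\bT$. Setting $S(u) = 1 + \sum_{n \geq 1} s_n u^n$ and $T(u) = \sum_{i \geq 1} T_i u^i$, I would observe that \eqref{eqn:S-in-T1}, together with the convention $s_0 = 1$ matching $a_0 = 1$ in \eqref{eqn:a-in-T}, is exactly
\be
S(u) = \frac{1}{1 - T(u)},
\ee
since expanding $(1 - T(u))^{-1} = \sum_{m \geq 0} T(u)^m$ and collecting the coefficient of $u^n$ reproduces the multinomial sum in \eqref{eqn:S-in-T1}.

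Finally I would invert this relation. The forward direction is immediate: from $\{T_i\}$ one reads off the $s_n$ from the expansion of $(1 - T(u))^{-1}$. For the inverse, given any $\{s_n\}$ form $S(u) = 1 + \sum_{n \geq 1} s_n u^n$; because its constant term is $1$, it is invertible in $\bC[[u]]$, so one may set
\be
T(u) = 1 - \frac{1}{S(u)}
\ee
and read off the unique $\{T_i\}$. This gives a two-sided inverse and hence the bijection. I do not expect a serious obstacle here; the only point requiring care is the index bookkeeping together with the observation that $s_0 = 1$ always holds, which is precisely what guarantees $S(u)$ is invertible. Equivalently, one can make the inversion transparent by the triangular structure of \eqref{eqn:S-in-T1}: the term with $m_n = 1$ contributes $s_n = T_n + (\text{polynomial in } T_1, \dots, T_{n-1})$, so the $T_n$ can be solved recursively in terms of $s_1, \dots, s_n$, yielding both existence and uniqueness of the preimage.
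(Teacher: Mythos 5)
Your proposal is correct and follows essentially the same route as the paper: the paper likewise reduces the statement to solving the system \eqref{eqn:S-in-T1}, rewrites it as $1+\sum_{n\geq 1}s_np^n = (1-\sum_{n\geq 1}T_np^n)^{-1}$, and inverts by taking the reciprocal series. Your additional remarks on the coordinatizations of $\Phi$ and $\cH$ and on the triangular structure of the system only make explicit what the paper leaves implicit.
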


\begin{proof}
For any deformation of the standard Boltzmann-Gibbs-Shannon entropy
of the form
$H_{\bs}(p) = -p( \ln(p) + \sum_{n=1}^\infty s_n \frac{p^n}{n(n+1)} )$,
one needs to find a unique  sequence of weighted homogeneous polynomial
$T_n = T_n(s_1, \dots, s_n)$ of degree $n$,
where $\deg s_n = n$,
such that $H_\bs(p) = \widetilde{H}_{\phi_\bT}(p)$.
Indeed,
one needs to solve the sequence \eqref{eqn:S-in-T1} of equations.
These are equivalent to
\be \label{eqn:S-in-T}
1 + \sum_{n \geq 1} s_n p^n = \frac{1}{1 - \sum_{n \geq 1} T_n p^n}.
\ee
Its solution is clearly:
\be
1 - \sum_{n \geq 1} T_n p^n = \frac{1}{1 + \sum_{n \geq 1} s_n p^n}.
\ee
\end{proof}

\subsection{Principle of maximum entropy applied to $\phi$-entropy}

Let us formally extremize the modified $\phi$-entropy functions
\be
\tilde{H}_{\phi}(\bp)
=   - \sum_{i=1}^n p_i \int^{p_i}  \frac{1}{v^2}
\biggl[\int^{v} du \frac{u}{\phi(u)} \biggr] d v
\ee
under the constraints
\bea
&& p_1 + \cdots + p_n = 1, \\
&& p_1 E_1 + \cdots + p_n E_n = E,
\eea
by the method of Lagrange multiplier.
The solution is given by
\ben
\frac{\pd}{\pd p_i} \tilde{H}_{\phi}(\bp) =  a_1 + bE_i, \;\;\; i =1, \dots, n,
\een
for some constants $a_1$ and $b$.
By \eqref{eqn:H-phi-grad},
\be
-\log_\phi(p_i) = a+bE_i,
\ee
where $a=a_1+c$.
We  let $X_i = e^{- (a+bE_i)}$.
So the solution is given by
\be
p_i = \exp_{\phi}(-(a+bE_i)), \;\;\;\; i=1, \dots, n,
\ee
and equivalently,
\be
X_i =\exp(\ln_\phi(p_i)).
\ee

\subsection{Generalized statistics as critical points of $\phi_\bT$-entropy}

Now we have the following commutative diagrams of one-to-one correspondences:
\be \label{eqn:S-Phi-H}
\xymatrix{
\cS \ar[dr]_{h = f g} \ar[r]^{g}
                & \Phi \ar[d]^{f}  \\
                & \cH             }
\ee
As a corollary,
we have: Every generalized statistics can be obtained by applying the Principle
of Maximum Entropy to the $\phi_\bT$-entropy function
for a unique $\phi_T \in \Phi$,
hence also to a unique deformed entropy $H_\bs(p) \in \cH$.

\subsection{Generalized Boson-Fermion correspondence on the spaces $\Phi$ and $\cH$}

Recall we have defined an involution $\sigma:\cS \to \cS$ in \S \ref{sec:Dual-Stat}.
Now using the diagram \eqref{eqn:S-Phi-H},
the involution $\sigma$ induces an involution
$\tau: \Phi\to \Phi$ and an involution  $\rho: \cH \to \cH$,
so that the following diagrams commute:
\ben
\xymatrix{
& \cS  \ar[r]^{g} & \Phi \ar[d]^{f} \\
\cS   \ar[r]_{g} \ar[ru]^{\sigma}
                & \Phi \ar[d]_{f} \ar[ru]^\tau & \cH \\
                & \cH      \ar[ru]_\rho       }
\een
We call these involutions  the
generalized Boson-Fermion correspondence on the spaces $\Phi$ and $\cH$ respectively.

\subsection{Entropy as  $\phi$-entropy}

In \S \ref{sec:Entropy} we have defined the entropy of the one-particle partition function
as the negative of the Legendre transformation of the free energy:
\be
H(X) = F(X) - \log X \cdot X \frac{dF(X)}{dX}.
\ee
In the above we have also considered the $\phi$-entropy $H_{\phi}(p)$:
\be
H_{\phi}(p) = -p \int_1^p \frac{1}{u^2} \biggl[\int_0^u \frac{v}{\phi(v)}dv\biggr] du,
\ee
where
\be
\phi(w) = \frac{X}{\frac{dX}{dw}} = X \frac{dw}{dX},
\ee
\be
w = X\frac{dF}{dX}.
\ee
The following is the main result of this paper:

\begin{Theorem} \label{thm:Main}
The entropy function $H(X)$ is related to the $\phi$-entropy function $H_{\phi}(p)$ as follows:
\be
H(X) = H_{\phi}(w(X)) + [F(X(1))-\log X(1)] \cdot w(X),
\ee
\be \label{eqn:H-F}
H_{\phi}(p) = F(X(p))
- p \cdot \log X(p) - p [F(X(1))-\log X(1)].
\ee
\end{Theorem}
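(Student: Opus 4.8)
The plan is to prove the second identity \eqref{eqn:H-F} directly and then to deduce the first as an immediate consequence. The two are in fact equivalent: setting $p = w(X)$ in \eqref{eqn:H-F} and using that $X(p)$ is the compositional inverse of $w(X)$, so that $X(w(X)) = X$, turns \eqref{eqn:H-F} into
\[
H_{\phi}(w(X)) = F(X) - w(X)\log X - w(X)\,[F(X(1)) - \log X(1)].
\]
Since $H(X) = F(X) - \log X \cdot w(X)$ by definition, rearranging this identity gives exactly the first displayed formula of the theorem. Thus it suffices to establish \eqref{eqn:H-F}.

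To prove \eqref{eqn:H-F}, I would start from the definition
\[
H_{\phi}(p) = -p\int_1^p \frac{1}{u^2}\,\xi_\phi(u)\,du,
\qquad
\xi_\phi(u) = \int_0^u \frac{v}{\phi(v)}\,dv,
\]
and invoke Proposition \ref{prop:xi-F}, which identifies the inner integral as $\xi_\phi(u) = F(X(u))$. The main computation is then an integration by parts in $\int_1^p u^{-2}F(X(u))\,du$, with $dv = u^{-2}\,du$, i.e. $v = -u^{-1}$. The one point requiring attention is differentiating $F(X(u))$: by the chain rule together with the relation $w = X\,\frac{dF}{dX}$ (so $\frac{dF}{dX} = w/X$) and the inversion identity $w(X(u)) = u$, one finds
\[
\frac{d}{du}F(X(u)) = \frac{dF}{dX}\Big|_{X(u)}\cdot X'(u) = \frac{u}{X(u)}\,X'(u).
\]
Consequently the boundary terms contribute $-\tfrac{1}{p}F(X(p)) + F(X(1))$, while the leftover integral collapses to $\int_1^p \frac{X'(u)}{X(u)}\,du = \log X(p) - \log X(1)$. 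Multiplying the sum by $-p$ and collecting terms reproduces \eqref{eqn:H-F} verbatim.

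I do not expect a genuine obstacle here. The only nontrivial analytic input, namely the identification $\xi_\phi(u) = F(X(u))$, is already furnished by Proposition \ref{prop:xi-F}, and everything else is a formal integration by parts valid at the level of formal power series. The steps that demand the most care are the chain-rule differentiation of $F(X(u))$ through the inverse-function relation $w(X(u)) = u$, and the bookkeeping of the two boundary terms at $u = 1$ and $u = p$; these are precisely what generate the constant $F(X(1)) - \log X(1)$ appearing in both formulas, so tracking them correctly is the crux of matching the stated expressions.
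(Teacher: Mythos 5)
Your proposal is correct and follows essentially the same route as the paper: both invoke Proposition \ref{prop:xi-F} to replace $\xi_\phi(u)$ by $F(X(u))$, integrate by parts against $u^{-2}\,du$, and use $dF/dX = w/X$ together with $w(X(u))=u$ to collapse the remaining integral to $\log X(p)-\log X(1)$. The only addition is your explicit derivation of the first identity from \eqref{eqn:H-F} via $p=w(X)$, which the paper leaves implicit; that step is fine.
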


\begin{proof}
The proof is similar to the proof of \eqref{eqn:H-phi-grad}:
\ben
H_{\phi}(p) & = & p \int_1^p   F(X(u)) d\frac{1}{u} \\
& = & p \cdot \frac{F(X(u))}{u}  \biggl|_1^p
- p \int_1^p \frac{1}{u} dF(X(u)) \\
& = & F(X(p)) - p F(X(1))
- p\int_1^p \frac{1}{w} \frac{dF(X(w))}{dX(w)} \cdot dX(w) \\
& = & F(X(p)) - p F(X(1))
- p\int_{X(1)}^{X(p)} \frac{1}{X} \cdot dX \\
& = & F(X(p)) - p F(X(1))
- p \cdot \log X(p) + p \log X(1).
\een
\end{proof}

\begin{Corollary}
The following identity holds:
\be
\frac{d}{dp} H_{\phi}(p) = - \log X(p) - [F(X(1))-\log X(1)].
\ee
\end{Corollary}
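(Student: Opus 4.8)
The plan is to prove the Corollary by directly differentiating the closed-form expression \eqref{eqn:H-F} for $H_{\phi}(p)$ that was just established in Theorem \ref{thm:Main}. Writing $A := F(X(1)) - \log X(1)$, which is a constant independent of $p$, that formula reads $H_{\phi}(p) = F(X(p)) - p \log X(p) - pA$. Differentiating the last term in $p$ merely returns $-A$, so the whole content of the Corollary lies in showing that the $p$-dependent contributions of the first two terms cancel, leaving only $-\log X(p) - A$.

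The one identity that does the work is the relation $w = X \frac{dF}{dX}$, i.e. $F'(X) = w(X)/X$. First I would differentiate $F(X(p))$ by the chain rule to get $F'(X(p)) X'(p)$. Since $X(p)$ is by construction the compositional inverse of $w = w(X)$, so that $w(X(p)) = p$, this gives $F'(X(p)) = p/X(p)$ and hence $\frac{d}{dp} F(X(p)) = \frac{p X'(p)}{X(p)}$. Next I would differentiate $p \log X(p)$ by the product rule to obtain $\log X(p) + \frac{p X'(p)}{X(p)}$.

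Collecting the three pieces yields $\frac{d}{dp} H_{\phi}(p) = \frac{p X'(p)}{X(p)} - \log X(p) - \frac{p X'(p)}{X(p)} - A$. The two copies of $\frac{p X'(p)}{X(p)}$ cancel, leaving $-\log X(p) - A = -\log X(p) - [F(X(1)) - \log X(1)]$, which is exactly the asserted identity. There is no real obstacle here; the only step requiring attention is the chain-rule evaluation $F'(X(p)) = p/X(p)$, since it is precisely the interplay between $w$, $F$, and the inverse function $X(\cdot)$ that produces the clean cancellation.

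As a cross-check, and an alternative route worth recording, I would observe that the Corollary also drops out of the gradient formula \eqref{eqn:H-phi-grad} in its one-variable form, $\frac{d}{dp} H_{\phi}(p) = -\log_\phi(p) - c$. Substituting the earlier Proposition's expression $\log_\phi(p) = \log X(p) - \log X(1)$ together with the identification $c = \int_0^1 \frac{v}{\phi(v)} dv = \xi_\phi(1) = F(X(1))$ supplied by Proposition \ref{prop:xi-F} gives $-\log X(p) + \log X(1) - F(X(1))$, reproducing the same right-hand side and confirming the computation above.
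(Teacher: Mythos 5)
Your proof is correct and is exactly the argument the paper intends: the Corollary is stated as an immediate consequence of differentiating formula \eqref{eqn:H-F} from Theorem \ref{thm:Main}, with the cancellation driven by $F'(X(p)) = w(X(p))/X(p) = p/X(p)$, just as you compute. Your cross-check via \eqref{eqn:H-phi-grad}, the formula $\log_\phi(p)=\log X(p)-\log X(1)$, and $c=\xi_\phi(1)=F(X(1))$ is also consistent with the paper's earlier propositions.
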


\section{Conclusions and Prospects}

\label{sec:Conclusions}

In this paper we first establish a link between interpolating statistics with umbral calculus.
This link inspires us to reexamine umbral calculus from the point of view of mathematical physics.
An unexpected outcome is that this link also leads to a connection of interpolating statistics
with generalized statistical mechanics, this not only opens the door of applications
of generalized statistical mechanics to fractional quantum Hall effects,
but also  enables us to understand the generalized entropy functions as the Legendre
transformations of the free energy of interpolating statistics.

Originally we are led to this work by spectral curves in Eynard-Orantin topological recursions,
which we understand as the genus zero one-point functions in various Gromov-Witten type theory.
The theory of interpolating statistics are concerned with one-particle partition functions,
so this leads to our definition of the spectral curves associated with interpolating statistics.
The original motivation to use umbral calculus is to find more examples
of interpolating statistics.
Fortunately  some familiarity with Ramanujan's second notebook enables us to identify
their spectral curves in the setting of Eynard-Orantin topological recursions.
We have recorded some examples in Section \ref{sec:Abel} and Section \ref{sec:Gould-Special}.

As a result,
the role of umbral calculus has changed in several aspects.
First, in the beginning it is used as a technical tool to find more examples,
but soon it turns out the connections with interpolating statistics and generalized entropy
suggest to reexamine the umbral calculus from the point of view of quantum mechanics
and statistical physics. We hope this interaction with mathematical physics
will lead to new progresses in this important branch of combinatorics.

Secondly,
because for some examples  the spectral curves also arise in topological string theory,
so it is natural to expect to find deeper connections between the objects in this paper
and string theory.
A string theoretical interpretation of either interpolating statistics, or generalized
entropy, or umbral calculus, is of course very desirable.
We hope to address this in future work.

Thirdly, umbral calculus can be regarded as belonging to formal algebraic geometry,
e.g. to formal group laws.
The latter is related complex cobordism theory and Hirzebruch genera.
Hence through this chain of connections,
one sees that interpolating statistics are related to deep theories in algebraic topology.
This partly reflects that the topological nature of  fractional quantum Hall effect.
We will elaborate on this point in subsequent work.

We remark that it seems to be natural to use generalized statistical physics
to study interpolating statistics,
in particular the fractional quantum effect.
This is because fractional quantum Hall effect is an example of
topological orders.
Microscopically, topological orders correspond to patterns of long-range quantum entanglement.
On the other hand, as Tsallis remarked in \cite{Tsallis01}
 nonextensive statistical mechanics seem to be
more suitable to describe long-range interactions.

In the community of cybernetics and information  at least 25
different entropy functions  have been advanced
(see e.g. Taneja \cite{Taneja}).
These functions are introduced for various specific purposes, e.g., image processing.
Except for the Shannon entropy, the other 24 examples in {\em loc. cit.} are different from
the examples in the following Appendix.
It is not clear at present whether the examples in this paper are useful in that area.

Finally,
the examples in the Appendix lead us to connections with operads and cluster algebras.
We will report such connections in subsequent work.

\vspace{.2in}
{\bf Acknowledgements}.
The author is partly supported by NSFC grants 11661131005
and 11890662. The author thanks Professor Sen Hu and Professor Guowu Meng for
introducing him to fractional quantum Hall effects.
This is an expanded version of a manuscript that has a limited circulation.
Some of the results were reported in a colloquium talk at Peking University in 2018.
The author thanks Professor Hunjun Fan for the invitation and the audience for their interest.

\begin{appendix}

\section{Examples of Interpolating Statistics from  Umbral Calculus}

\label{sec:Examples}

In this Appendix we present some examples from umbral calculus.
Most of them can be found in Chapter 3 of Ramanujan's second notebook \cite{Ram},
or Chapter 4 of Roman's book on umbral calculus \cite{Roman}.
They are reproduced here because of the following two reasons.
On the one hand we want to provide some extra information about
their related spectral curves;
it is a surprise that some of them reproduce
the spectral curves that appear in the literature on Eynard-Orantin topological recursions.
On the other hand,
we will present the calculations related to $\phi$-logarithm, $\phi$-exponential, and $\phi$-entropy.
The motivation is to illustrate the relationship between
umbral calculus and generalized entropy
studied in \S \ref{sec:Deformed-Entropy}
by concrete examples.

We will also present some examples not in Roman \cite{Roman}.
One of them comes from the dilogarithm,
and three others are obtained by averaging the Acharya-Swamy
statistics.
In the former example,
a famous identity of Euler for dilogarithm naturally arises,
and the $\phi$-entropy is very close to the Rogers dilogarithm.
In the latter examples,
several integer sequences of rich combinatorial significance appear in the computations.
We also include the Gentile statistics and an example motivated by it.
Some of our examples can be identified with the examples
in the Appendix to Taylor \cite{Taylor} where some
some more interesting examples can be found.

Let us recall the notations.
If $\{\gamma_n(x)\}$ is a polynomial sequence of binomial type,
then
\be
\sum_{n=0}^\infty \gamma_n(x) \frac{X^n}{n!}
= \exp (x F(X))
\ee
for a formal power series $F(X) = X + O(X^2)$.
The relationship between the free energy $F(X)$ and $\{\gamma_n\}$ is given in the following
two identities:
\be \label{eqn:gamma-n}
\gamma_n(x) = n! \cdot \sum_{k=1}^n \frac{x^k}{k!} (F(X)^k)|_{X^n},
\ee
where $(\cdot)_{X^n}$ means the coefficient of $X^n$,
and
\be
F(X) = \sum_{n=0}^\infty \gamma_n'(0) \frac{X^n}{n!}.
\ee

Furthermore,
let $f(Y)$ be the compositional inverse series of $F(X)$,
i.e.,
\begin{align*}
f(F(X)) & = X, & F(f(Y)) & = Y.
\end{align*}
Then one has
\be
e^{xY} = \sum_{n=0}^\infty \gamma_n(x) \frac{f(Y)^n}{n!}
\ee
and the following recursion relations are satisfied:
\bea
&& f(D) \gamma_n(x) = n \cdot \gamma_{n-1}(x), \\
&& \gamma_n(x) = x [f'(D)]^{-1}\gamma_{n-1}(x), \;\;\; n \geq 1,
\eea
where  $D=d/dx$.

To find the series $\phi$, we first find:
\be
w(X) = X\frac{d}{dX}F(X),
\ee
then we apply Lagrange inversion to get:
\be
X = X(w).
\ee
Next apply \eqref{eqn:phi-in-X} to get $\phi$:
\be
\phi(u) = \frac{1}{\frac{d}{du} \log X(u)}
= \frac{X(u)}{\frac{d}{du} X(u)}.
\ee
The $\phi$-logarithm is computed from $\phi(u)$ by the following formula:
\be
\ln_{\phi}(p)  =  \int^p_1 \frac{du}{\phi(u)} ,
\ee
the result is
\be
\ln_{\phi}(p)   = \log \frac{X(p)}{X(1)}.
\ee
The $\phi$-exponential function is the inverse function of $\ln_{\phi}$.
To get the $\varphi$-entropy,
we first compute
\be
\xi_{\phi}(u) = \int^u_0 \frac{vdv}{\phi(v)},
\ee
the result is
\be
\xi_{\phi}(u) = F(X(u)).
\ee
Next, the $\phi$-entropy can be computed as follows:
\be
H_{\phi}(p) = -p \cdot \int_1^p \frac{\xi(u)}{u^2}du
= -p \cdot \int_1^p \frac{F(X(u))}{u^2}du.
\ee
The result is
\be
H_{\phi}(p) = F(X(p))
- p \cdot \log X(p) - p [F(X(1))-\log X(1)].
\ee

\subsection{The Boltzmann-Gibbs statistics}
In this case,
\begin{align*}
\gamma_k^{BG}(n) & = x^n, &
F^{BG}(X) & = X, &
w^{BG}(X) & = X, &
f^{BG}(Y) & = Y.
\end{align*}
The spectral curves are given by:
\begin{align*}
z & = e^X, &
X & = Y.
\end{align*}
For this example,
\begin{align*}
w^{BG}(X) & = X & X^{BG}(w) & = w,
\end{align*}
it follows that
\begin{align*}
\phi^{BG}(p) & = p, & \xi^{BG}(u) & = u.
\end{align*}
The corresponding entropy is given by:
\ben
H^{BG}(p) = -p \int_1^p \frac{1}{u^2} \cdot u du = -p \log p.
\een

\subsection{The  Fermi-Dirac statistics}
In this case we have
\ben
&& z^x = (1+X)^x, \\
&& \gamma_n^{FD}(x) = (x)_n = x(x-1)(x-2) \cdots (x-(n-1)), \\
&& F^{FD}(X) = \ln (1+X) = \sum_{n=1}^\infty \frac{(-1)^{n-1}}{n} X^n, \\
&& f^{FD}(Y) = e^Y - 1.
\een
The spectral curves are given by:
\begin{align*}
z & = 1+ X, &
X & = e^Y - 1.
\end{align*}
The identity $z^x \cdot z^y = z^{x+y}$ is equivalent to
the Chu-Vandermonde identity:
\be
\sum_{i+j=n} \binom{x}{i} \binom{y}{j} = \binom{x+y}{n}.
\ee

In this case we have
\begin{align*}
w^{FD}(X) & = \frac{X}{1+X}, &
X^{FD}(w) = \frac{w}{1-w}.
\end{align*}
From this we compute that:
\ben
\phi^{FD}(u) & = & \frac{1}{\frac{d}{du} \log X^{FD}(u)} = u(1-u),  \\
\xi^{FD}(u) & = &  \int^u \frac{v}{v(1-v)}dv = -\log (1-u)
= F^{FD}(X^{FD}(u)).
\een
The $\phi^{FD}$-logarithm is
\ben
\ln_{\phi^{FD}}(p) = \int^p \frac{1}{u(1-u)}du =  \log \frac{p}{1-p}
= \log X^{FD}(p).
\een
The $\phi^{FD}$-exponential is given by:
\be
p = \frac{e^X}{e^X+1} = \frac{1}{1+e^{-X}}.
\ee
The $\phi^{FD}$-entropy is:
\ben
H^{FD}(p)
& = &  p \int_1^p \frac{1}{u^2} \log (1-u) du \\
& = & -p \log p - (1-p) \log(1-p).
\een

\subsection{The Bose-Einstein statistics}
In this case we have:
\ben
&& z^x = \frac{1}{(1-X)^x}, \\
&& \gamma_n^{BE}(x) = x^{(n)} = x(x+1)(x+2) \cdots (x+(n-1)), \\
&& F^{BE}(X) = -\ln (1-X) = \sum_{n=1}^\infty \frac{1}{n} X^n, \\
&& f^{BE}(Y) = 1- e^{-Y}.
\een
The spectral curves are given by:
\begin{align*}
z & = \frac{1}{1-X}, &
X & = 1 - e^{-Y}.
\end{align*}
The identity $z^x \cdot z^y = z^{x+y}$ is equivalent to
the Chu-Vandermonde identities with $x, y$ changed to $-x, -y$ respectively:
\be
\sum_{i+j=n} (-1)^i \binom{-x}{i}\cdot (-1)^j \binom{-y}{j} = (-1)^n\binom{-(x+y)}{n}.
\ee

In this case we have:
\begin{align*}
w^{BE}(X) & = \frac{X}{1-X}, &
X^{BE}(w^{BE}) & = \frac{w^{BE}}{1+w^{BE}}.
\end{align*}
So we have
\begin{align*}
\phi^{BE}(p) & = p(1+p), & \xi^{BE}(u) = \log (1+u).
\end{align*}
The $\phi^{BE}$-logarithm is
\ben
\ln_{\phi^{BE}}(p) = \int^p \frac{1}{u(1+u)}du =  \log \frac{p}{1+p}= \log X^{BE}(p).
\een
The $\phi^{BE}$-exponential is given by:
\be
p = \frac{e^X}{1- e^X} = \frac{1}{e^{-X}-1}.
\ee
The $\phi^{BE}$-entropy is:
\ben
H^{BE}(p)
& = &  -p \int_1^p \frac{1}{v^2} \log (1+v) dv \\
& = & -p \log p + (1+p) \log(1+p)-2\log 2.
\een

\subsection{The Acharya-Swamy statistics}
In this case one has:
\ben
&& \gamma_n^{AS}(x; \epsilon) = x(x+\epsilon)(x+2\epsilon)
\cdots (x+(n-1)\epsilon), \\
&& F^{AS}(X; \epsilon) = \frac{1}{\epsilon}\ln (1 + \epsilon X)
= \sum_{n=1}^\infty \frac{(-1)^{n-1}}{n} \epsilon^{n-1} X^n, \\
&& f^{AS}(Y; \epsilon) = \frac{e^{\epsilon Y}-1}{\epsilon}.
\een
The spectral curves are given by:
\begin{align*}
z & = (1+\epsilon X)^{1/\epsilon}, &
X & = \frac{e^{\epsilon Y}-1}{\epsilon}.
\end{align*}
By taking the logarithmic derivative of $F^{AS}$ we get:
\begin{align*}
w^{AS}(X;\epsilon) & = \frac{X}{1+\epsilon X},
\end{align*}
By taking the Lagrange inversion one gets:
\begin{align*}
X^{AS} &= \frac{w^{AS}}{1-\epsilon w^{AS}}.
\end{align*}
A combinatorial interpretation of this by counting trees is given by Parker's Theorem
(see e.g. Gessel \cite{Gessel}).

 From these we have the following computations:
\begin{align*}
\phi^{AS}(p) & = \frac{1}{\frac{d}{dp} \log X_{\phi^{AS}}(p)}
= p(1-\epsilon p), &
\xi^{AS}(u) & = - \frac{1}{\epsilon} \log (1-\epsilon u).
\end{align*}
\ben
\ln_{\phi^{AS}}(p)& = & \int^p \frac{1}{u(1-\epsilon u)}du =  \log \frac{p}{1- \epsilon p} =\log X^{AS}(p), \\
p & = & \frac{e^X}{1+\epsilon e^X} = \frac{1}{e^{-X}+\epsilon}, \\
H^{AS}(p) & = &  \frac{p}{\epsilon} \int_1^p \frac{1}{v^2} \log (1-\epsilon v) dv \\
& = & -p \log p - \frac{1}{\epsilon}(1-\epsilon p) \log(1-\epsilon p)
+ \frac{1}{\epsilon}(1-\epsilon) \log(1-\epsilon).
\een

\subsection{The  Gentile statistics}

Historically the first statistics that was proposed to interpolate between the Bose-Einstein
and the Fermi-Dirac statistics is the Gentile statistics.
It is a family of  one-particle partition functions:
\be
z_{(p)}(x) = 1+ X + X^2 + \cdots + X^p,
\ee
with the following property:
\begin{align}
z_{(1)}(X) & = z^{FD}(X), &
\lim_{p\to +\infty} z_{(p)}(X) & = z^{BE}(X).
\end{align}
One can compute $w_n = p_n(\bX)$ as follows:
\ben
F & = & \log(1+ X + \cdots + X^p)
= \log \frac{1-X^{p+1}}{1-X} \\
& = & \sum_{n=1}^\infty \frac{1}{n} (X^n - X^{(p+1)n}),
\een
and so
\ben
w(X) & = & \sum_{n=1}^\infty (X^n - (p+1)X^{(p+1)n})
= \frac{X}{1-X} - (p+1) \frac{X^{p+1}}{1-X^{p+1}}.
\een

\subsection{An example motivated by Gentile statistics}
\label{sec:Gentile-Mot}
We take
\be
F(X) = X + X^2 + \cdots = \frac{X}{1-X}.
\ee
Its inverse series is
\be
f(X) = X - X^2 + \cdots = \frac{X}{1+X}.
\ee
The exponential generating series of the conjugate series of $F(X)$ is
\be
\sum_{n=0}^\infty \gamma_n(x) \frac{X^n}{n!}
= \exp \frac{xX}{1-X}.
\ee
The following are the first few examples of $\gamma_n(X)$:
\ben
&& \gamma_1(x) = x, \\
&& \gamma_2(x) = x^2+2x, \\
&& \gamma_3(x) = x^3+6x^2+6x, \\
&& \gamma_4(x) = x^4 + 12x^3 + 36 x^2+ 24x,
\een
for the combinatorial meaning of the coefficients, see A089231,
A111596, A066667, A008297 on \cite{Sloane}.
They are related to the Lah numbers and the Laguerre polynomials
of order $-1$.
See Taylor's thesis \cite[Section 3.4 and p. 97]{Taylor} for its relation  to
counting permutations.

For this example,
we have
\ben
&& w(X) = \frac{X}{(1-X)^2},
\een
and so
\ben
X & = & \frac{1+2w-\sqrt{1+4w}}{2w} \\
& = & w-2 w^2+5 w^3-14 w^4+42 w^5-132 w^6+429 w^7-1430 w^8+\cdots,
\een
where the coefficients of the series expansion are the Catalan numbers up to signs.
\ben
\phi(w) & = &  X \frac{dw}{dX} = \frac{X(1+X)}{(1-X)^3}
= w \sqrt{1+4w} \\
& = & w+2 w^2-2 w^3+4 w^4-10 w^5+28 w^6-84 w^7+\cdots.
\een
Up to signs, the coefficients are A002420 on \cite{Sloane}.
\ben
\xi_\phi(w) & = & \int_0^w \frac{w}{\phi(w)} dw
= \int^w_0 \frac{1}{\sqrt{1+4w}}dw
= \int_0^X \frac{w}{X} dX \\
& = & \frac{\sqrt{1+4w}-1}{2} =F(X).
\een
\ben
H_\phi(p) & = & -p \int_1^p \frac{\xi_\phi(w)}{w^2} dw
= -p \int^w_1 \frac{\sqrt{1+4w}-1}{2w^2}dw \\
& = & \frac{1-2p-\sqrt{1+4p}}{2}  - p \log\frac{1+2p-\sqrt{1+4p}}{2p}
+ cp
\een
for some constant $p$.

\subsection{The exponential polynomials}
\label{sec:Exponential}

For this sequence one has:
\ben
&& F(X) = e^X-1, \\
&& f(Y) = \log (1+Y),
\een
a combinatorial interpretation by counting trees is given by Drake's Theorem \cite{Drake}
(see also Gessel \cite{Gessel}).
The sequence is given by
\ben
&& \gamma_n(x) =\sum_{k=0}^n S(n,k) x^k,
\een
where $S(n,k)$ are the Stirling numbers of the second kind:
\be
S(n,k) = \frac{1}{k!} \sum_{j=0}^k (-1)^{k-j} \binom{k}{j} j^n.
\ee
The spectral curves are given by:
\begin{align*}
z & = e^{e^X-1}, &
Y & = e^X-1, & X & = \log(1+Y) .
\end{align*}
In this case,
\ben
w & = X\frac{dF}{dX} = Xe^X,
\een
By Lagrange inversion formula:
\ben
&& X = \sum_{n=1}^\infty \frac{(-1)^{n-1}n^{n-1}}{n!} w^n.
\een
This is essentially Cayley's famous formula for counting rooted trees.

Some other relevant computations are:
\ben
\phi(w) & = & \frac{1}{\frac{d}{dw} \log X}
= X \frac{dw}{dX} = Xe^X +X^2e^X \\
& = & w\biggl(1+ \sum_{n=1}^\infty \frac{(-1)^{n-1}n^{n-1}}{n!} w^n\biggr).
\een

\ben
\xi_\phi(w) & = & \int_0^w \frac{w}{\phi(w)} dw
= \int^w_0 \frac{w}{X\frac{dw}{dX}}dw
= \int_0^X \frac{w}{X} dX \\
& = & \int^X_0 e^X dX = e^X - 1 =F(X) \\
& = & \sum_{n=1}^\infty (-1)^{n-1}\frac{(n-1)^{n-1}}{n!}w^n.
\een

\ben
\log X = \log w - X
\een
\ben
H(p)& = & -p\int_0^p \frac{1}{w^2} (e^{X(w)}-1)dw \\
& = & - p \log p
+ \sum_{n=2}^\infty (-1)^{n}
\frac{(n-1)^{n-2}}{n!}p^n.
\een

\subsection{Abel polynomials and Lambert series}
\label{sec:Abel}

In 1826 Abel   proved the following deep generalization
of the binomial identity (cf. \cite[\S 3.1]{Comtet}):
\be
(x+y)^n = \sum_{k=0}^n \binom{n}{k} x(x-k a)^{k-1}(y+k a)^{n-k}.
\ee
This leads to the sequence of Abel polynomials:
\ben
&& \gamma_n^{Abel}(x) = x(x-na)^{n-1}, \\
&& f^{Abel}(Y) = Y e^{aY}, \\
&& F^{Abel}(X) = \sum_{n=1}^\infty \frac{(-an)^{n-1}}{n!} X^n, \\
&& w^{Abel}(X) = \sum_{n=1}^\infty \frac{(-an)^{n-1}}{(n-1)!} X^n.
\een
This example is related to Cayley's formula for rooted trees and
Entry 13 of Ramanujan's second notebook \cite{Ram}.
The spectral curves are:
\begin{align*}
z & = \exp \sum_{n=1}^\infty \frac{(-an)^{n-1}}{n!} X^n, &
X & = Y e^{aY}.
\end{align*}
Spectral curves of this form
coincide exactly with the spectral curves in the studies
of Eynard-Orantin topological recursions of Hurwitz numbers \cite{Bou-Mar, Eyn-Mul-Saf}
and its generalizations \cite{Bou et al}.

\ben
X & = & w^{Abel} \cdot \exp \sum_{n=1}^\infty \frac{n+1}{n}a^n (w^{Abel})^n \\
& = & \frac{w^{Abel}}{1-aw^{Abel}}\exp \biggl(\frac{1}{1-aw^{Abel}}-1\biggr).
\een

\ben
\phi^{Abel}(p) & =& \frac{p}{1 + \sum_{n=1}^\infty (n+1) a^n p^n} = p(1-ap)^2,\\
\xi^{Abel}(u) & = & \int_0^u
\frac{v}{v(1-a v)^2}dv =\frac{1}{a(1-a u)}
-\frac{1}{a} = \frac{u}{1-au}, \\
\log^{Abel}(p) & = & \int_1^p \frac{1}{v(1-av)^2} dv = \log \frac{p}{1-ap}+\frac{1}{1-ap},\\
H^{Abel}(p)
& = &  -p
\int_1^p \frac{1}{u^2} \biggl[ \frac{1}{a(1-a u)}
-\frac{1}{a}\biggr] du \\
& = & -p \log p + p \log(1-a p)-p\log(1-a).
\een

For $e^{xf^{Abel}(X)}=\sum_{n=0}^\infty \frac{\gamma^\vee(X)}{n!}X^n$,
see Taylor \cite[\S 3.3 and p. 96]{Taylor}.
They are related to A059297 on \cite{Sloane}.

\subsection{Gould polynomials}

After a change of variables from the original definition in Roman \cite{Roman},
Gould polynomials are defined to be
the following sequence of polynomials:
\be \label{Gould}
\gamma_n^{Gould}(x;a, b) = x \prod_{j=1}^{n-1} (x-an-jb).
\ee
They form   the sequence of polynomials of binomial type
associated with
\be
f^{Gould}(Y; a, b) = e^{aY}\frac{e^{bY} -1}{b}, \;\;\; b \neq 0,
\ee
whose compositional inverse series is
\be
F^{Gould}(X; a, b) = \sum_{k=1}^\infty (-1)^{k-1}
\prod_{j=1}^{k-1} (ak+jb) \cdot \frac{X^k}{k!}.
\ee
In particular,
\ben
&& \sum_{n=0}^\infty x \prod_{j=1}^{n-1} (x-an-bj) \cdot X^n \\
& = & \exp \biggl(x
\sum_{k=1}^\infty (-1)^{k-1}
\prod_{j=1}^{k-1} (ak+jb)
\cdot \frac{X^k}{k!} \biggr).
\een
The spectral curves are
\be
z = \exp  \sum_{k=1}^\infty (-1)^{k-1}
\prod_{j=1}^{k-1} (ak+jb) \cdot \frac{X^k}{k!}
\ee
and
\be
X = e^{aY}\frac{e^{bY} -1}{b}.
\ee
The latter can be rewritten as follows:
\be \label{eqn:GouldSpec}
b X e^{-aY} - e^{bY} + 1 = 0.
\ee
One can compare it with Entry 14 in Chapter 3 of Ramanujan's second notebook
\cite[(14.4)]{Ram}.
One also has:
\be
w^{Gould}(X; a, b) = \sum_{k=1}^\infty (-1)^{k-1}
k \prod_{j=1}^{k-1} (ak+jb) \cdot \frac{X^k}{k!}.
\ee
It satisfies the following equation:
\be
X = \frac{w (1- aw)^{a/b}}{(1-(a+b)w)^{(a+b)/b}}.
\ee
This generalizes Wu's formula ($a=\alpha$, $b=-1$):
\ben
X = \frac{w}{(1- \alpha w)^\alpha [1 + (1-\alpha) w]^{1-\alpha}}
\een
for Haldane-Wu statistics \cite{Haldane, Wu}.
We then have:
\ben
\phi^{Gould}(p) & = & p(1- a p)(1-(a+b)p), \\
\xi^{Gould}(u) & = & \int_0^u \frac{v}{v(1-av)(1-(a+b)v)}dv \\
& = & \frac{1}{b} (\log(1-au)-\log(1-(a+b)u)), \\
H^{Gould}(p) & = & -p \int^p \frac{1}{u^2} \cdot \frac{1}{b} (\log(1-au)-\log(1-(a+b)u))du \\
& = & -p\log p+ \frac{1}{b}\biggl[
(1-ap)\log (1-ap) \\
&& -(1-(a+b)p)\log(1-(a+b)p) \biggr].
\een

\subsection{Specializations of the spectral curve associated with Gould polynomials}
\label{sec:Gould-Special}

Let us consider the following specializations of the spectral curve \eqref{eqn:GouldSpec}.

Case 1. By taking $a=0$, $b = \epsilon$,
one gets
\be
X = \frac{e^{\epsilon Y}-1}{\epsilon},
\ee
hence one recovers the spectral curve associated with the Acharya-Swamy statistics.

Case 2. By taking $b \to 0$, one gets
\be
X = Y e^{aY},
\ee
and one recovers the Lambert curve and its generalizations mentioned in \S \ref{sec:Abel}.

Case 3. When $a=g-1$, $b=1$,
one gets the curve:
\be
X = e^{gY} - e^{(g-1)Y}.
\ee
This is the spectral curve for the framed topological vertex \cite{Bou-Mar, Chen, Zhou}.

Case 4. When $b=-2a$, one gets
\be
-2aX = e^{-aY} - e^{aY}.
\ee
After suitable change of coordinates,
one gets the Catalan curve:
\be
z= w - \frac{1}{w}
\ee
that appears in Eynard-Orantin topological recursions of several geometric problems.

\subsection{Mittag-Leffler polynomials}

For this sequence one has:
\ben
f^{ML}(X) & = & \frac{e^X-1}{e^X+1}, \\
F^{ML}(X) & = & \log\biggl( \frac{1+X}{1-X}\biggr), \\
\gamma^{ML}_n(x) & = & \sum_{k=0}^n \binom{n}{k}\binom{n-1}{n-k} 2^k (x)_k.
\een
The exponential generating series of Mittag-Leffler polynomials is
\be
\sum_{n=0}^\infty \frac{\gamma_n(x)}{n!}t^n = \biggl(\frac{1+t}{1-t} \biggr)^x
\ee
By taking $t=X$ and $x=1$, one sees that the spectral curves are:
\begin{align*}
z & = \frac{1+X}{1-X}, &
Y & = \log\biggl(\frac{1+X}{1-X}\biggr), &
X & = \frac{e^Y-1}{e^Y+1}.
\end{align*}

For simplification of notations we take
$$F^{ML}(X)=\log \biggl(\frac{1+X/2}{1-X/2}\biggr).$$
Then we have
\ben
w^{ML} & = &\frac{X}{1-\frac{X^2}{4}},
\een
and so
\ben
X^{ML}(w) & = &\frac{2(\sqrt{1+w^2}-1)}{w}
= 2\sum_{n=1}^\infty (-1)^{n-1} \frac{(2n-2)!}{n!(n-1)!}
(w/2)^{2n-1},
\een
where the coefficients $\frac{(2n-2)!}{n!(n-1)!}$
are the Catalan numbers,
so $X^{ML}$ is essentially a generating function
of the Catalan numbers.
\ben
\phi^{ML}(p)& = & \frac{1}{p\frac{d}{dp} \log X^{ML}(p)} =p\sqrt{p^2+1} \\
& = &
p \biggl( 1 + 2 \sum_{n=1}^\infty (-1)^{n-1}
\frac{(2n-2)!}{n!(n-1)!} (p/2)^{2n} \biggr).
\een

\ben
\xi^{ML}(u) & = & \int_0^u \frac{v}{\phi^{ML}(v)}dv
= \int_0^u \frac{dv}{\sqrt{v^2+1}}
= \log (u+\sqrt{1+u^2}) \\
& = & 2\sum_{n=1}^\infty (-1)^{n-1} \frac{(2n-2)!}{(2n-1)\cdot(n-1)!^2} (u/2)^{2n-1}.
\een

\ben
\log_{\phi^{ML}}(p)
& = & \log X^{ML}(p) = \log \biggl( \frac{2(\sqrt{1+p^2}-1)}{p}\biggr)\\
& = & \log p
+ \sum_{n=1}^\infty \frac{(-1)^n}{2^{2n}} \frac{\binom{2n}{n}}{2n} p^{2n}.
\een

\ben
H^{ML}(p)
& = & - p \int^p \frac{\xi^{ML}(u)}{u^2}du = -p\int^p \frac{\log (u+\sqrt{1+u^2})}{u^2}du \\
& = & p \biggl(-\log p + \log\frac{(1+\sqrt{1+p^2})}{2}+\frac{\log (p+\sqrt{1+p^2})}{p}-1\biggr) \\
&= & -p\log p +  \sum_{n=2}^\infty \frac{(-1)^{n-1}}{2^{2n-2}}
 \frac{(2n-2)!}{(2n-1)(2n-2)\cdot(n-1)!^2} u^{2n-2}.
\een

\subsection{Bessel polynomials}
For this sequence,
\ben
&& f^{Bessel}(t) = t - t^2/2, \\
&& F^{Bessel}(t) = 1- \sqrt{1-2t}, \\
&& \gamma_n^{Bessel}(X) = \sum_{k=1}^n \frac{(2n-k-1)!}{(k-1)!(n-k)!}\biggl(\frac{1}{2}\biggr)^{n-k}x^k.
\een
The exponential generating function is
\ben
&& \sum_{n=0}^\infty \frac{\gamma_n(x)}{n!}t^n = e^{x(1-(1-2t)^{1/2})},
\een
and so the spectral curves are
\begin{align*}
z& = e^{1-(1-2X)^{1/2}}, & Y & = 1- (1-2X)^{1/2}, & X & = Y-\frac{Y^2}{2}.
\end{align*}
We have:
\ben
w^{Bessel} & = & \frac{X}{\sqrt{1-2X}}, \\
X^{Bessel}(w) & = & w (\sqrt{1+w^2}-w), \\
\phi^{Bessel}(p) & = & p\sqrt{1+p^2}(\sqrt{1+p^2}+p).
\een
\ben
\xi^{Bessel}(u)
& = & \int_1^v \frac{v}{v\sqrt{1+v^2}(\sqrt{1+v^2}+v)}dv \\
& = & u + 1-\sqrt{1+u^2}.
\een
One again encounters Catalan numbers in their expansions.
We also have:
\ben
\log_{\phi^{Bessel}}(p)
& = & \log X(p) = \log[p (\sqrt{1+p^2}-p)] \\
& = & \log p -
\sum_{n=0}^\infty \frac{(-1)^n}{2^{2n}}
\frac{\binom{2n}{n}}{2n+1} p^{2n+1}.
\een

\ben
H^{Bessel}(p)
& = & -p\int^p \frac{u + 1-\sqrt{1+u^2}}{u^2}du \\
&= & -p\log p -\sqrt{1+p^2}+ p\log(p+\sqrt{1+p^2})+1 \\
& = & -p \log p +
\sum_{n=0}^\infty \frac{(-1)^n}{2^{2n}}
\frac{\binom{2n}{n}}{(2n+1)(2n+2)} p^{2n+2}.
\een

For $e^{xf^{Bessel}(X)}=\sum_{n=0}^\infty \frac{\gamma^\vee(X)}{n!}X^n$,
see Taylor \cite[\S 3.2 and p. 96]{Taylor}.

\subsection{Mott polynomials}

The Mott polynomials (cf. \cite[p. 251]{Erdelyi} and \cite[\S 4.12]{Roman}) are defined by:
\be \label{eqn:Mott}
\exp \biggl[ x \cdot \frac{(1-t^2)^{1/2}-1}{t}\biggr]
= \sum_{n=0}^\infty  g_n(x) t^n,
\ee
where $g_n(x)$ are explicitly given by:
\ben
&& g_n(x) = (-x/2)^n(n-1)!
\sum_{l=0}^{[n/2]} \frac{x^{-2l}}{l!(n-l)!(n-2l-1)!}.
\een
hence they are the associated polynomials
with
\ben
&& F (t) = \frac{(1-t^2)^{1/2}-1}{t},
\een
and so
\ben
&& f(t) = \frac{-2t}{1+t^2}.
\een
In \eqref{eqn:Mott}, change $t$ to $-2t$,
\be \label{eqn:Mott2}
\exp \biggl[ x \cdot \frac{1-(1-4t^2)^{1/2}}{2t}\biggr]
= \sum_{n=0}^\infty  \frac{\gamma_n(x)}{n!} t^n,
\ee
where for $n \geq 0$,
\ben
&& \gamma_n(x) = n!(n-1)!
\sum_{l=0}^{[n/2]} \frac{x^{n-2l}}{l!(n-l)!(n-2l-1)!}
\een
form the associated sequence with
\ben
&& F^{Mott}(t) = \frac{1-(1-4t^2)^{1/2}}{2t}, \\
&& f^{Mott}(t) = \frac{t}{1+t^2}.
\een
The spectral curves are:
\begin{align*}
z &= \exp \biggl[ \frac{1-(1-4X^2)^{1/2}}{2X}\biggr], &
Y & = \frac{1-(1-4X^2)^{1/2}}{2X}, &
X & = \frac{Y}{1+Y^2}.
\end{align*}

\be \label{eqn:w-Mott}
w^{Mott}(X)  =  \frac{1-\sqrt{1-4X^2}}{2X\sqrt{1-4X^2}}
= \sum_{n=1}^\infty \binom{2n-1}{n} X^{2n-1}.
\ee
By Cardano formula,
\ben
X^{Mott}(w) & = & (\frac{1}{6w}(1+18w^2+3^{3/2}w\sqrt{1+11w^2-w^4})^{1/3} \\
& + & \frac{1+3w^2}{6w}(1+18w^2+3^{3/2}w\sqrt{1+11w^2-w^4})^{-1/3}-\frac{1}{3w}.
\een
The first few terms of the series expansion are:
\ben
X^{Mott}(w)=w-3w^3+17w^5-119w^7+929w^9-7755w^{11}+\cdots
\een
The coefficients $1,-3,17,-119,929, -7755, \dots$ do not form a sequence on \cite{Sloane}.
The $\phi$-function in this case
\ben
\phi^{Mott}(w) = \frac{1}{\frac{d}{dw}\log X^{Mott}(w)}
\een
has a complicated explicit expression.
The first terms of its series expansion are:
\ben
\phi^{Mott}(w) & = & w+6w^3-14w^5 +78w^7-542w^9+4214w^{11}+\cdots.
\een
The sequence of coefficients $1,6,14,78,542,4214,\dots$ does not appear on \cite{Sloane}.
The expression of $\phi^{Mott}$ in terms of $X=X^{Mott}$ is given explicitly as follows:
\ben
\phi^{Mott}(w)
& = & \frac{X^{Mott}}{\frac{d X^{Mott}(w)}{dw}}
= X\frac{d w^{Mott}}{dX} \\
& = & \frac{2}{(1-4X^2)^{3/2}}+\frac{1}{2X^2}\biggl(1- \frac{1}{\sqrt{1-4X^2}}\biggr)\\
& = & \sum_{n=0}^\infty (2n+1)\binom{2n+1}{n} X^{2n+1} \\
& = & 1+9X^2+50X^4+245X^6+1134X^8+5082X^{10}+\cdots.
\een
The coefficients  $1,9,50,245,1134,\dots$ are the sequence A001818 on OEIS \cite{Sloane}.
They are related to the Catalan numbers as follows:
\ben
(2n+1)\binom{2n+1}{n} = (2n+1)^2 \cdot \frac{1}{n+1}\binom{2n}{n}.
\een
The $\xi$-function in this case is:
\ben
\xi^{Mott}(w) & = & \int^w \frac{w}{\phi^{Mott}(w)}dw
= \int^w w \cdot \frac{\frac{dX}{dw}}{X} dX \\
& = & \int^w \frac{w}{X}dX = \int^u  \frac{1-\sqrt{1-4X^2}}{2X^2\sqrt{1-4X^2}}dX \\
& = & \frac{1-\sqrt{1-4X^2}}{2X} = \sum_{n=0}^\infty \frac{1}{n+1}\binom{2n}{n} X^{n+1},
\een
where the coefficients $\frac{1}{n+1}\binom{2n}{n}$ are the Catalan numbers.
To expand $\xi^{Mott}(w)$ to a power series in $w$,
we find:
\be \label{eqn:xi-Mott-in-w}
\xi^{Mott}(w)  =  w \sqrt{1-X(w)^2}.
\ee
Write $\sqrt{1-X^2}=Y$. Then by \eqref{eqn:w-Mott}
\ben
w^2=\frac{1-Y}{(1-Y^2)Y^2}.
\een
By Cardano formula we find
\ben
Y & = & \frac{1}{3w}(-w^3+3\sqrt{-3w^4+33w^2+3}+18w)^{1/3} \\
& + & \frac{w^2-3}{3w}(-w^3+3\sqrt{-3w^4+33w^2+3}+18w)^{-1/3}
-\frac{1}{3} \\
& = & 1-2w^2+10w^4-66w^6+498w^8-4066w^{10}+34970w^{12}-\cdots.
\een
The coefficients $1,2,10,66,498,\dots$ are
the sequence A027307 on OEIS \cite{Sloane}.
These are the 3-Schr\"oder numbers according to
Yang-Jiang \cite{Yang-Jiang},
i.e.,
the number of paths from $(0,0)$ to $(3n,0)$
that stay in first quadrant (but may touch horizontal axis)
and where each step is $(2,1)$, $(1,2)$ or $(1,-1)$.
Therefore,
\eqref{eqn:xi-Mott-in-w} and \eqref{eqn:w-Mott}
establish a relation between the Catalan numbers
and the 3-Schr\"oder numbers. See also Drake's Example 1.6.9 \cite{Drake}.

The entropy function in this case is
\ben
H^{Mott}(p)
& = & -p \int^p \frac{1}{w^2} \xi^{Mott}(w) dw \\
& = & -p \int^p \frac{1}{w^2} \frac{1-\sqrt{1-4X^2(w)}}{2X(w)} dw \\
& = & - p \int^{X(p)} \frac{1}{\biggl(\frac{1-\sqrt{1-4X^2}}{2X\sqrt{1-4X^2}}\biggr)^2}
\frac{1-\sqrt{1-4X^2}}{2X}
d \frac{1-\sqrt{1-4X^2}}{2X\sqrt{1-4X^2}} \\
& = & -p\log X(p) + p \sqrt{1-4X^2(p)}.
\een
One can directly check that
\ben
&& H^{Mott}(p) = F^{Mott}(X) - \log X \cdot X \frac{d F^{Mott}(X)}{d X}.
\een

\subsection{Conjugate sequence of dilogarithm function}

In this Subsection we present an example not included in Roman \cite{Roman}.
Consider the dilogarithm function:
\be
F(t) = \Li_2(t)= \sum_{n=1}^\infty \frac{t^n}{n^2}.
\ee
Although the series is only convergent for $|t| <1$,
but it can be rewritten as an integral:
\be
- \int_0^t \frac{\log (1-z)}{z}dz
= \int_0^t \frac{dz}{z} \int_0^z \frac{dw}{1-w},
\ee
one can use these integral to extend the definition of $\Li_2(t)$
to a multivalued function on $\bC-\{1\}$.

The conjugate sequence $\{\gamma_n(x)\}$ for the series $F(X)$ is defined by
\ben
\sum_{n=0}^\infty \gamma_n(x) \frac{t^n}{n!}
= \exp \biggl(x \sum_{n=1}^\infty \frac{t^n}{n^2}\biggr).
\een
The following are the first few terms of $\gamma_n$:
\ben
&& \gamma_1(x) = x, \\
&& \gamma_2(x) = x^2 + \frac{1}{2}x, \\
&& \gamma_3(x) = x^3 + \frac{3}{2}x^2+\frac{2}{3}x, \\
&& \gamma_4(x) = x^4 + 3x^3 +\frac{41}{12}x^2 + \frac{3}{2}x.
\een

To find the series $\phi$, we first find:
\be
w(X) = X\frac{d}{dX}F(X)= \sum_{n=1}^\infty \frac{X^n}{n} = \log \frac{1}{1-X},
\ee
then we apply Lagrange inversion to get:
\be
X = X(w) = 1- e^{-w}.
\ee
Next apply \eqref{eqn:phi-in-X} to get $\phi$:
\be
\phi(u) = \frac{1}{\frac{d}{du} \log X(u)}
= \frac{X(u)}{\frac{d}{du} X(u)} = e^u -1.
\ee
The $\phi$-logarithm is computed from $\phi(u)$ by the following formula:
\be
\ln_{\phi}(p)  =  \int^p_1 \frac{du}{\phi(u)} = \int_1^p \frac{u}{e^u-1}du
= \log \frac{e^p-1}{e^p} - \log \frac{e-1}{e},
\ee
this is again just $\log X(p)-\log X(1)$.
To get the $\phi$-entropy,
we first compute
\ben
\xi_{\phi}(p)
& = & \int_0^p \frac{u}{e^u-1}du
= p\log(1-e^{-p})-\sum_{n=1}^\infty \frac{e^{-np}}{n^2}
+ \sum_{n=1}^\infty \frac{1}{n^2} \\
& = & p \log (1-e^{-p}) - \Li_2(e^{-p}) + \frac{\pi^2}{6}.
\een
On the right-hand side of the last equality we see dilogarithm function again.
By Proposition \ref{prop:xi-F} we know that
\be
\xi_\phi(p) = F(X(p)) = \Li_2(1-e^{-p}).
\ee
They match each other by the following well-known identity due to Euler \cite{Lewin}:
\be
\Li_2(z) + \Li_2(1-z)=\frac{\pi^2}{6}-\log z\log(1-z).
\ee
Using
\ben
&& \frac{u}{e^u-1} = \sum_{n=0}^\infty \frac{B_n}{n!}u^n,
\een
one also has
\be \label{eqn:Bernoulli}
\xi_\phi(p) = \sum_{n=0}^\infty \frac{B_n}{(n+1)!}p^{n+1}.
\ee
By taking $p=1$ we get:
\be
\Li_2(1-e^{-1}) = \sum_{n=0}^\infty \frac{B_n}{(n+1)!}.
\ee
Next, the $\phi$-entropy can be computed as follows:
\ben
H_{\phi}(p) & = &  -p \int_1^p \frac{\xi_\phi(u)}{u^2}du
= -p \int_1^p \frac{\Li_2(1-e^{-u})}{u^2}du \\
& = & p \cdot \frac{\Li_2(1-e^{-u})}{u}\biggr|_1^p
- p \cdot \int_1^p \frac{1}{u} d \Li_2(1-e^{-u}) \\
& = &  \Li_2(1-e^{-p}) - p \cdot \Li_2(1-e^{-1})
- p \cdot \int_1^p \frac{1}{u}  \frac{udu}{e^u-1} \\
& = &  \Li_2(1-e^{-p}) - p \cdot \Li_2(1-e^{-1})
- p \cdot \log (1-e^{-u})|_1^p \\
& = & \Li_2(1-e^{-p}) - p \cdot \Li_2(1-e^{-1})
- p \cdot \log (1-e^{-p})
+ p \cdot \log (1-e^{-1}).
\een
This can be expanded as follows.
\ben
&& \frac{d}{dp} \log \frac{1-e^{-p}}{p}
= \frac{1}{e^p-1} - \frac{1}{p}
= \sum_{n=1}^\infty \frac{B_n}{n!} p^{n-1},
\een
and so
\ben
&& \log \frac{1-e^{-p}}{p}
= \sum_{n=1}^\infty \frac{B_n}{n!} \frac{p^n}{n}.
\een
By taking $p=1$ we get:
\ben
&& \log (1-e^{-1})
= \sum_{n=1}^\infty \frac{B_n}{n!} \frac{1}{n}.
\een
Hence we have
\ben
H_\phi(p) & = & -p \log p +
\sum_{n=0}^\infty \frac{B_n}{(n+1)!}p^{n+1}
- \sum_{n=1}^\infty \frac{B_n}{n!} \frac{p^{n+1}}{n} \\
& - & p \cdot [\Li_2(1-e^{-1})- \log (1-e^{-1})]  \\
& = &  -p \ln p - p \sum_{n=1}^\infty \frac{B_n}{n!}
\biggl(\frac{p^n}{n(n+1)}
- \frac{1}{n(n+1)}\biggr).
\een
On the other hand,
by \eqref{eqn:Bernoulli} we have:
\ben
H_{\phi}(p) & = &  -p \int_1^p \frac{\xi_\phi(u)}{u^2}du
= -p \ln p - p \sum_{n=1}^\infty \frac{B_n}{n!}
\biggl(\frac{p^n}{n(n+1)}
- \frac{1}{n(n+1)}\biggr) .
\een
It is a match.
In this case we have
\ben
H(X) & = & F(X) - w(X) \log X
= \Li_2(X) - \log X\cdot \log \frac{1}{1-X}.
\een
This is very close to Rogers dilogarithm:
\be
L(z) = \Li_2(z)+\half \log z \log (1-z).
\ee
Now
\ben
H(X(p)) & = & \Li_2(1-e^{-p})
- p \log (1-e^{-p}).
\een
So we get:
\ben
H(X(p)) = H_{\phi}(p) + p [\Li_2(1-e^{-1})- \log (1-e^{-1})].
\een
This matches with \eqref{eqn:H-F}.

\subsection{Averaged Acharya-Swamy statistics}

In this Subsection we present three more examples not included in Roman \cite{Roman}.
For the first example,
we take:
\ben
Y=F(X) = \frac{1}{2\epsilon} (\log (1+\epsilon X)
- \log (1- \epsilon X)),
\een
when $\epsilon=\frac{1}{2}$ this reduces to the case of Mittag-Leffler polynomials.
We have
\ben
X = f(Y) = \frac{e^{\epsilon Y}-e^{-\epsilon Y}}{e^{\epsilon Y}+e^{-\epsilon Y}},
\een
\ben
w(X) & = & X \frac{d}{dX} F(X)
= \frac{X}{2}\biggl(\frac{1}{1+\epsilon X}
+ \frac{1}{1-\epsilon X} \biggr)
= \frac{X}{1-\epsilon^2X^2}.
\een
From this we solve for $w$ to get:
\ben
X(w) & = & \frac{\sqrt{1+4\epsilon^2w^2}-1}{2\epsilon^2w}.
\een
It has the following series expansion:
\ben
X(w) = \sum_{n=0}^\infty (-1)^n\frac{1}{n+1}\binom{2n}{n} \epsilon^{2n}w^{2n+1},
\een
the coefficients $\frac{1}{n+1}\binom{2n}{n}$ are the Catalan numbers.
Next apply \eqref{eqn:phi-in-X} to get $\phi$:
\be
\phi(u) = \frac{1}{\frac{d}{du} \log X(u)}
= \frac{X(u)}{\frac{d}{du} X(u)}
= u\sqrt{1+4\epsilon^2u^2}.
\ee
It has the following expansion:
\ben
\phi(u) & = & u + 2u \sum_{n=0}^\infty (-1)^n \frac{1}{n+1}\binom{2n}{n} \epsilon^{2n+2}
u^{2n+2},
\een
where we again see the Catalan numbers.
The $\phi$-logarithm is computed from $\phi(u)$ by the following formula:
\ben
\log_{\phi}(p) & = & \int^p_1 \frac{du}{\phi(u)}
= \int_1^p \frac{1}{u\sqrt{1+4\epsilon^2u}}du \\
& = &  \log \frac{\sqrt{1+4\epsilon^2p^2}-1}{2\epsilon^2p}
-\log \frac{\sqrt{1+4\epsilon^2}-1}{2\epsilon^2},
\een
this is again just $\log X(p)-\log X(1)$.
It has the following expansion:
\ben
\log_{\phi}(p) & = & \log p
-\log \frac{\sqrt{1+4\epsilon^2}-1}{2\epsilon^2}
+ \sum_{n=1}^\infty (-1)^n \frac{1}{n}\binom{2n-1}{n} \epsilon^{2n}p^{2n},
\een
the coefficients $\binom{2n-1}{n}$ are the sequence A001700  on \cite{Sloane}.
Note
\ben
&& \frac{1}{n}\binom{2n-1}{n} = \frac{1}{2n} \binom{2n}{n},
\een
where $\binom{2n}{n}$ are the sequence A000984 on \cite{Sloane}.
To get the $\phi$-entropy,
we first compute
\ben
\xi_{\phi}(p)
& = &   \int^p_0 \frac{udu}{\phi(u)}
= \int_0^p \frac{u}{u\sqrt{1+4\epsilon^2u}}du
= \frac{1}{2\epsilon}\log (\sqrt{1+4\epsilon^2p^2}+2\epsilon p).
\een
One can check that it coincides with $F(X(p))$.
It has the following expansion:
\ben
\xi_{\phi}(p)
= \sum_{n=0}^\infty \frac{(-1)^n}{2n+1} \binom{2n}{n} \epsilon^{2n}p^{2n+1},
\een
the coefficients $\binom{2n}{n}$ are A000984 on \cite{Sloane}.
Next, the $\phi$-entropy can be computed as follows:
\ben
H_{\phi}(p) & = &  -p \int_1^p \frac{\xi_\phi(u)}{u^2}du
= -p \int_1^p \frac{1}{2\epsilon u^2}\log (\sqrt{1+4\epsilon^2u^2}+2\epsilon u)du .
\een
We can again check \eqref{eqn:H-F} in this case.
We have the following expansion:
\ben
H(X(p)) & = & F(X(p))-p \log X(p) \\
& = & \sum_{n=0}^\infty \frac{(-1)^n}{2n+1} \binom{2n}{n} \epsilon^{2n}p^{2n+1} \\
& - & p \log p - p \cdot \sum_{n=1}^\infty (-1)^n \frac{1}{2n}\binom{2n}{n} \epsilon^{2n}p^{2n} \\
& = & -p \log p
- \sum_{n=1}^\infty (-1)^n \frac{1}{2n(2n+1)}\binom{2n}{n} \epsilon^{2n}p^{2n+1} .
\een

Another way to take the average of Acharya-Swamy statistics is as follows:
\ben
Y=F(X) = \half \biggl(\frac{1}{\epsilon}\log (1+\epsilon X)
+ \epsilon \log (1+\frac{1}{\epsilon} X)\biggr),
\een

\ben
w(X) & = & X \frac{d}{dX} F(X)
= \frac{X}{2}\biggl(\frac{1}{1+\epsilon X}
+ \frac{1}{1+\frac{1}{\epsilon} X} \biggr) .
\een
From this we solve for $w$ to get:
\ben
X(w) & = &
\frac{1-(\epsilon+\epsilon^{-1})w
-\sqrt{1+(\epsilon-\epsilon^{-1})^2w^2}}{2w-(\epsilon+\epsilon^{-1})}.
\een
It has the following series expansion:
\ben
X(w) & = & w+sw^2+w^3+s(-s^2+2)w^4+(-s^2+2)w^5\\
& + & s(2s^4-6s^2+5)w^6+ (2s^4-6s^2+5)w^7 \\
& + & s(-5s^6+20s^4-28s^2+14)w^8 \\
& + & (-5s^6+20s^4-28s^2+14)w^9 \\
& + & (14s^9-70s^7+135s^5-120s^3+42s)w^{10} \\
& + & (14s^8-70s^6+135s^4-120s^2+42)w^{11} \\
& + & (-42s^{11}+252s^9-616s^7+770s^5-495s^3+132s)w^{12}
+\cdots,
\een
where $s$ is defined by
$$s= \frac{\epsilon+\epsilon^{-1}}{2}.$$
The coefficients appear as A094385, A094385, A157491  or A062991  on \cite{Sloane}.
Next apply \eqref{eqn:phi-in-X} to get $\phi$:
\ben
\phi(u) &= & \frac{1}{\frac{d}{du} \log X(u)} \\
& = & u\sqrt{1+4(s^2-1)u^2}
+ \frac{s}{2(s^2-1)}(\sqrt{1+4(s^2-1)u^2}-4(s^2-1)u^2-1).
\een
It has the following expansion:
\ben
\phi(u) & = & u + 2[(s^2-1)u^3-(s^2-1)^2u^5 +2(s^2-1)^3u^7 -5(s^2-1)^4u^9 \\
& +& 14(s^2-1)^5u^{11} + \cdots] \\
& + & s [-u^2-(s^2-1)u^4+2s(s^2-1)^2u^6-5(s^2-1)^3u^8\\
& + & +14(s^2-1)^4u^{10}
+\cdots],
\een
where the coefficients are the Catalan numbers.
The function $\frac{1}{\phi(u)}$
has the following expansion:
\ben
\frac{1}{\phi(u)}
& = & \frac{1}{u}+s+(-s^2+2)u+(-2s^3+3s)u^2+(3s^4-8s^2+6)u^3\\
& + & (6s^5-15s^3+10s)u^4+(-10s^6+36s^4-45s^2+20)u^5 \\
& + & (-20s^7+70s^5-84s^3+35s)u^6\\
&+&(35s^8-160s^6+280s^4-224s^2+70)u^7\\
&+&(70s^9-315s^7+540s^5-420s^3+126s)u^8\\
&+&(-126s^{10}+700s^8-1575s^6+1800s^4-1050s^2+252)u^9\\
&+&(-252s^{11}+1386s^9-3080s^7+3465s^5-1980s^3+462s)u^{10}
+ \cdots,
\een
to unravel the coefficients,
we note
\ben
\frac{s-u}{\phi(u)}
&= &\frac{s}{u}+s^2-1-s(s^2-1)u-2(s^2-1)^2u^2+3s(s^2-1)^2u^3\\
&+&6(s^2-1)^3u^4-10s(s^2-1)^3u^5-20(s^2-1)^4u^6\\
& + & 35s(s^2-1)^4u^7 + 70(s^2-1)^5u^8-126s(s^2-1)^5u^9 \\
&-&252(s^2-1)^6u^{10} +\cdots,
\een
the coefficients $1,1,2,3,6,10,20,35,70,126,\dots$ are the sequence A001405 on \cite{Sloane},
they are $\binom{n}{[n/2]}$ for $n=0,1,2, \dots$.
Using also
\begin{align*}
\xi_{\phi}(p)
& =    \int^p_0 \frac{udu}{\phi(u)}, &
H_{\phi}(p) & =   -p \int^p \frac{\xi_\phi(u)}{u^2}du,
\end{align*}
after integrations we get the expansion for $\log_{\phi}(p)$,
$\xi_\phi$ and $H_\phi$ from the above expansion for
$\frac{1}{\phi(u)}$.

Yet another way take the average of Acharya-Swamy statistics is as follows:
\ben
Y=F(X) = \half \biggl(\log (X+\epsilon )
+ \log (X+\frac{1}{\epsilon})\biggr),
\een
\ben
X = f(Y) = \frac{1}{2} \biggl(\epsilon +\frac{1}{\epsilon}
- \sqrt{\biggl(\epsilon -\frac{1}{\epsilon}\biggr)^2 + 4e^{2Y}}\biggr)
\een
\ben
w(X) & = & X \frac{d}{dX} F(X)
= \frac{X}{2}\biggl(\frac{1}{X+\epsilon}
+ \frac{1}{X+\frac{1}{\epsilon}} \biggr) .
\een
From this we solve for $w$ to get:
\ben
\frac{X(w)}{\epsilon+\epsilon^{-1}} & = &
\frac{(1-2w)
-\sqrt{1
-4\biggl(\frac{\epsilon-\epsilon^{-1}}
{\epsilon+\epsilon^{-1}}\biggr)^2(w-w^2)}}{4(w-1)}.
\een
It has the following series expansion:
\ben
\frac{2X(w)}{\epsilon+\epsilon^{-1} }
& = & (1-t)w+(1-t^2)w^2+(1+t^2-2t^3)w^3 \\
& +& (1+ 4t^3-5t^4)w^4+(1-2t^3+15t^4-14t^5)w^5 \\
& +& (1-15t^4+56t^5-42t^6)w^6 \\
&+&(1+5t^4-84t^5+210t^6-132t^7)w^7\\
&+&(1-420t^6+56t^5+792t^7-429t^8)w^8 \\
&+&(1-14t^5+420t^6-1980t^7 +3003t^8-1430t^9)w^9
+\cdots,
\een
where $t$ is defined by
$$t= \biggl(\frac{\epsilon-\epsilon^{-1}}
{\epsilon+\epsilon^{-1}}\biggr)^2,$$
the coefficients $14,420,1980,3003,1430$ etc.
are the sequence A117434 on \cite{Sloane}.
Next apply \eqref{eqn:phi-in-X} to get $\phi$:
\ben
\phi(u) = \frac{1}{\frac{d}{du} \log X(u)}
= \frac{-1}{2t}(1-4t(u-u^2)-\sqrt{1-4t(u-u^2)}).
\een
It has the following expansion:
\ben
\phi(u) & = & u-(t+1)u^2-(2t^2-2t )u^3
-(5t^3-6t^2+t) u^4 \\
& -& (14t^4-20t^3+6t^2)u^5
-(42t^6-70t^5+30t^3-2t^2)u^6 \\
& - & (132t^6-252t^5+140t^4-20t^3)u^7 \\
& - & (429t^7-924t^6+630t^5-140t^4+5t^3)u^8 \\
& - & (1430t^8-3432t^7+2772t^6-840t^5+70t^4)u^9
+\cdots,
\een
where the coefficients $1430,3432,2772,840,70$ etc
are the sequence A068763.
In this case $X(1) = \infty$
we weil define the $\phi$-logarithm   $\log_{\phi}(p)$
to be just $\log X(p)$.
The function $\frac{1}{\phi(u)}$
has the following expansion:
\ben
\frac{1}{\phi(u)}
& = & \frac{1}{u}+(t+1)+(3t^2+1)u+(10t^3-3t^2+1)u^2 \\
& + & (35t^4-20t^3+1)u^3+(126t^5-105t^4+10t^3+1)u^4\\
& + & (462t^6-504t^5+105t^4+1)u^5 \\
& + & (1716t^7-2310t^6+756t^5-35t^4+1)u^6\\
& + & (6435t^8-10296t^7+4620t^6-504t^5+1)u^7\\
& + & (24310t^9-45045t^8+25740t^7-4620t^6+126t^5+1)u^8
+ \cdots,
\een
the coefficients $1,3,10,126,\dots$ are the sequence A001700  on \cite{Sloane},
they are $\binom{2n-1}{n}$ for $n=1,2, \dots$.
The coefficients $3,20,105,504,2310,\dots$ are the sequence
A000917 on \cite{Sloane}. They are
$(2n-1)!/(n!(n-2)!)$ for $n=2,3,4,\dots$.
The sequence of coefficients $10,105,756,4620,25740,\dots$
are not on \cite{Sloane}.
However we note
\ben
\frac{1}{\phi(u)}
& = & \frac{1-4t(u-u^2)+\sqrt{(1-4t(u-u^2)}}{2u(1-u)(1-4t(u-u^2))},
\een
hence
\ben
\frac{2u(1-u)}{\phi(u)}
& = &  1 +  \frac{1}{\sqrt{1-2 \cdot t^{1/2} \cdot (2t^{1/2}u)  + (2t^{1/2} u)^2}}\\
& = &
2+t\cdot (2u) +\frac{1}{2}(-t+3t^2)(2u)^2
+\frac{1}{2}(-3t^2+5t^3)(2u)^3 \\
& + & \frac{1}{8}(3t^2-30t^3+35t^4)(2u)^4
+ \frac{1}{8}(15t^3-70t^4+63t^5)(2u)^5 \\
& + & \frac{1}{16}(-5t^3+105t^4-315t^5+231t^6)(2u)^6\\
&+& \frac{1}{16}(-35t^4+315t^5-693t^6+429t^7)(2u)^7 \\
& +& \frac{1}{128}(35t^4-1260t^5+6930t^6-12012t^7+6435t^8)(2u)^8 \\
& +& \frac{1}{128}(-9240t^6+630t^5+36036t^7-51480t^8+24310t^9)(2u)^9+\cdots,
\een
which is essentially the generating function of the Legendre polynomials.
Again after integrations
we get the expansion for $\log_{\phi}(p)$,
$\xi_\phi$ and $H_\phi$ from the above expansion for
$\frac{1}{\phi(u)}$.

\subsection{Bell polynomials and universal interpolating statistics}

The incomplete Bell polynomials are defined as follows:
\be
\begin{split}
& B_{n,k}(t_1, \dots, t_{n-k+1}) \\
= & \sum \frac{n!}{m_1!\cdots m_{n-k+1}!} \biggl(\frac{t_1}{1!}\biggr)^{m_1}
\cdots \biggl(\frac{t_{n-k+1}}{(n-k+1)!}\biggr)^{m_{n-k+1}},
\end{split}
\ee
where the sum is taken over all sequences $m_1, m_2, m_3, \dots, m_{n-k+1}$
of non-negative integers such that these two conditions are satisfied:
\ben
&& m_1 + m_2   + \cdots + m_{n - k + 1}   = k ,  \\
&& m_{1}+2m_{2}+\cdots +(n-k+1)m_{n-k+1}=n.
\een
From this definition it is easy to see that when
\be \label{eqn:F-Bell}
F^{Bell}(X) = \sum_{j=1}^\infty t_j \frac{X^j}{j!},
\ee
one has
\be
\exp (xF^{Bell}(X)) = 1 + \sum_{n=1}^\infty \frac{X^n}{n!}
\sum_{k=1}^n B_{n,k}(t_1, \dots, t_{n-k+1}) x^k.
\ee
Therefore,
if one takes $t_1=1$,
then
\be
\gamma^{(n)}(x; \bt) = \sum_{k=1}^n B_{n,k}(t_1, \dots, t_{n-k+1}) x^k
\ee
defines a polynomial sequence of binomial type.
The spectral curves in this case are
\begin{align}
z &= \exp (X+ \sum_{j > 1} \frac{t_j}{j!} X^j), &
Y& = X+ \sum_{j > 1} \frac{t_j}{j!} X^j.
\end{align}
These give  the indication that
the Bell polynomials give the universal interpolating statistics
and the spectral curves can be understood as
the universal spectral curve.
All other examples are given by choosing $t_j$'s in a suitable way.

In this case,
the Lagrange inversion formula that finds the coefficient $s_j$ in the inverse series
\be
f^{Bell}(X) = \sum_{j=1}^\infty s_j \frac{X^j}{j!}
\ee
from \eqref{eqn:F-Bell} has different combinatorial interpretations involving
summations over trees
and has appeared in the literature in many times.
For example,
in Chen's thesis \cite{Che} in 1990,
it is interpreted as a summation over all Schr\"oder trees on $n$ vertices.
In his 2008 thesis,
Drake \cite[Example 1.4.7]{Drake} interpreted it as a summation over phylogenetic trees.
This was rediscovered in a 2016 paper by Engbers-Galvin-Smyth \cite[Lemma 4.2]{Eng-Gal-Smy}.
This is a special case of the  general formulas for Lagrange inversion in terms of
summation over trees published by Wright \cite{Wright} in 1989.

Wright's formula was rediscovered by Ginzburg and Kapranov \cite[Theorem 3.3.9]{Gin-Kap}
in the setting of Kozsul duality of operads.
This suggests a possible connection of this work with the theory of operads.
Indeed,
Gingzburg and Kapranov defined  the generating series of a $k$-linear operad
$\cP(n)$ by
\be
g_{\cP}(x) = \sum_{n=1}^\infty \dim \cP(n)\cdot \frac{x^n}{n!}.
\ee
Denote by $\cA s$, $\cC om$, $\cL ie$ the associative operad, the commutative operad,
and the Lie operad£¬ respectively.
Then by \cite[Example (3.1.12)(a)]{Gin-Kap}:
\begin{align*}
g_{\cA s}(x) & = \frac{x}{1-x}, & g_{\cC om}(x)& = e^x -1, &
g_{\cL ie}(x) & = - \log (1-x).
\end{align*}
We recognize that $g_{\cA s}(X) = w^{BE}(X)$ is the weight function for the Bose-Einstein
statistics,
it is also the free energy of the example in \S \ref{sec:Gentile-Mot};
$g_{\cC om}(x) = e^x -1$ as the free energy associated with the exponential
polynomials (cf. \S \ref{sec:Exponential}); $g_{\cL ie}(x)$ as
the free energy function of the Bose-Einstein statistics.
By \cite[Theorem 3.3.9]{Gin-Kap},
if $\cP$ is an admissible dg-operad and $\cQ$ is its dual,
then
\be
g_{\cQ}(-g_{\cP}(-x)) = x.
\ee
Note $\cA s$ is self-dual, and $\cC om$ and $\cL ie$ are dual to each other.

\end{appendix}


\begin{thebibliography}{999}


\bibitem{Ach-Swa}
R. Acharya, P. N. Swamy,
{\em Statistical mechanics of anyons},
J. Phys. A Math. Gen. 27 (1994), 7247 -7263.

\bibitem{Aro-Sch-Wil}
D. Arovas, J. R. Schrieffer, F. Wilczek,
{\em Fractional statistics and the quantum Hall effect},
Phys. Rev. Lett. 53 (1984), 722-723.



\bibitem{Ram}
B. C. Berndt, R. J. Evans,  B. M. Wilson,
{\em Chapter 3 of Ramanujan¡¯s Second Notebook},
Adv. Math. 49(1983), 123-169.

\bibitem{Ber-Shu}
F. A. Berezin, M. A. Shubin,
The Schr\"odinger equation.
Mathematics and its Applications (Soviet Series), 66.
Kluwer Academic Publishers Group, Dordrecht, 1991.

\bibitem{Berry}
 M. V. Berry,
{\em Quantal Phase Factors Accompanying Adiabatic Changes}.
Proceedings of the Royal Society A. 392 (1984): 45-57.

\bibitem{Bou-Mar}
V. Bouchard, M. Mari\~no,
{\em Hurwitz numbers, matrix models and enumerative geometry}.
From Hodge theory to integrability and TQFT $tt^*$-geometry, 263-283,
Proc. Sympos. Pure Math., 78, Amer. Math. Soc., Providence, RI, 2008.

\bibitem{Bou et al}
V. Bouchard, D. Hern\'andez Serrano, X. Liu, M.  Mulase,
{\em Mirror symmetry for orbifold Hurwitz numbers}.
J. Differential Geom. 98 (2014), no. 3, 375-423.

\bibitem{Cha-Sri}
S. Chaturvedi, V. Srinivasan,
{\em Interpolations between Bose and Fermi statistics},
Physica A 246 (1997) 576-586.

\bibitem{Chen}
L. Chen, {\em Bouchard-Klemm-Marino-Pasquetti Conjecture for
$\bC^3$}, arXiv:0910.3739 (2009).

\bibitem{Che}
W. Y. C. Chen, {A general bijective algorithm for trees}.
Proc. Nat. Acad. Sci. U.S.A. 87 (1990), no. 24, 9635-9639.

\bibitem{Com-Rob}
M. Combescure, D. Robert,
Coherent states and applications in mathematical physics.
Theoretical and Mathematical Physics. Springer, Dordrecht, 2012.


\bibitem{Comtet}
L. Comtet, Advanced combinatorics,
The art of finite and infinite expansions.
D. Reidel Publishing Company, 1974.

\bibitem{Drake}
B. Drake,
{\em An inversion theorem for labeled trees and some limits of
areas under lattice paths},
PhD Thesis,
Brandeis University, 2008.

\bibitem{Eng-Gal-Smy}
J. Engbers, D. Galvin, C. Smyth,
{\em Restricted Stirling and Lah number matrices and their inverses}.
J. Combin. Theory Ser. A 161 (2019), 271-298.

\bibitem{Eyn-Mul-Saf}
B. Eynard, M. Mulase, B. Safnuk,
{\em The Laplace transform of the cut-and-join equation and
the Bouchard-Mari\~no conjecture on Hurwitz numbers}.
Publ. Res. Inst. Math. Sci. 47 (2011), no. 2, 629-670.
\bibitem{Gessel}
I. Gessel,
{\em The Carlitz-Scoville-Vaughan Theorem
and its generalizations},
slides for talk at Joint Mathematics Meeting,
San Diego, January 12, 2013

\bibitem{Gibbs}
J.W. Gibbs, Elementary principles in statistical mechanics developed
with special reference to the rational foundation of thermodynamics.
Dover, 1960.

\bibitem{Gin-Kap}
V. Ginzburg, M. Kapranov,
{\em Koszul duality for operads}, Duke Math. J.
76 (1994), 203-272.

\bibitem{Haldane}
F. D. M. Haldane,
{\em ``Fractional Statistics" in arbitrary dimensions: A generalization of the Pauli Principle},
Phys. Rev. Lett. 67 (1991), 937-940.

\bibitem{Hall}
B. C. Hall,
Quantum theory for mathematicians. GTM, 267.
Springer, New York, 2013.

\bibitem{Hal}
B.I. Halperin,
{\em Statistics of quasiparticles and the hierarchy of
fractional quantized Hall states}, Phys. Rev. Lett 52 (1984), 1583-1586.

\bibitem{Hav-Cha}
J.  Havrda, F. Charv\'at
{\em Quantification method of classification processes.
Concept of structural $\alpha$-entropy},
Kybernetika, Vol. 3 (1967), No. 1, 30-35.

\bibitem{Huang}
K. Huang,
Statistical mechanics. Second edition. John Wiley \& Sons, Inc., New York, 1987.

\bibitem{Jaynes1}
E. T. Jaynes,
{\em Information Theory and Statistical Mechanics}. Physical Review. Series II (1957)
106 (4): 620-630.

\bibitem{Jaynes2}
E. T. Jaynes,
{\em Information theory and statistical mechanics. II}.
Phys. Rev. (2)  108  (1957), 171-190.

\bibitem{Joni-Rota}
S. A. Joni, G.-C. Rota,
{\em Coalgebras and bialgebras in combinatorics}.
Umbral calculus and Hopf algebras (Norman, Okla., 1978),
pp. 1-47, Contemp. Math., 6, Amer. Math. Soc., Providence, R.I., 1982.

\bibitem{Khare}
A. Khare, Fractional Statistics and Quantum Theory. World Scientific, 2005.

\bibitem{Kung-Rota-Yan}
J. P.S. Kung,
G.-C. Rota, C. HH Yan,
Combinatorics: the Rota way.
Cambridge University Press, Cambridge, 2009.

\bibitem{Laughlin}
H. B. Laughlin,
{\em Anomalous quantum Hall effect: An incompressible
quantum fluid
with fractionally charged excitations},
Phys. Rev. Lett. 50(1983), 1395-1398.


\bibitem{Lei-Myr}
J. M. Leinaas, J. Myrheim,
{\em On the theory of identical particles}. Il Nuovo Cimento B. 37 (1) (1977): 1-23.

\bibitem{Lewin}
L. Lewin, Polylogarithms and associated functions. With a foreword by
A. J. Van der Poorten. North-Holland Publishing Co., New York-Amsterdam, 1981.

\bibitem{Naudts02}
J. Naudts,
{\em Deformed exponentials and logarithms in generalized
 thermostatistics}. Phys. A  316  (2002),  no. 1-4, 323-334.

\bibitem{Naudts11}
J. Naudts, Generalised Thermostatistics. Springer Verlag, 2011.


\bibitem{Roman}
S. Roman,
The umbral calculus. Pure and Applied Mathematics, 111.
Academic Press, Inc., New York, 1984.


\bibitem{Roman-Rota}
S.M. Roman, G.-C. Rota,
{\em The umbral calculus}.
Advances in Math.  27  (1978),  no. 2, 95-188.


\bibitem{Rota et al}
G.-C. Rota, D. Kahaner, A. Odlyzko,
{\em On the foundations of combinatorial theory. VIII.
Finite operator calculus}.
J. Math. Anal. Appl.  42  (1973), 684¨C760.


\bibitem{Sloane}
N. J. A. Sloane, The On-Line Encyclopedia of Integer Sequences, http://oeis.org, 2013.

\bibitem{Tak}
L.A. Takhtajan, Quantum mechanics for mathematicians.
Graduate Studies in Mathematics, 95.
AMS, Providence, RI, 2008.

\bibitem{Taneja}
I. J. Taneja,
{\em On generalized information measures and
their applications},
Adv. Electronics and Electron Phys. Vol. 76 (1989), 327-413.

\bibitem{Taylor}
J. Taylor,
Formal group laws and hypergraph colorings,
PhD thesis, University of Washington, 2016.


\bibitem{Tsallis88}
C. Tsallis,
{\em Possible generalization of Boltzmann-Gibbs statistics}.
J. Statist. Phys.  52  (1988),  no. 1-2, 479-487.

\bibitem{Tsallis94}
C. Tsallis,
{\em What are numbers that experiments provide?},
Quimica Nova 17 (6) 1994, 468-471.

\bibitem{Tsallis01}
C. Tsallis,
{\em Nonextensive statistical mechanics and thermodynamics:
 historical background and present status}.
 Nonextensive statistical mechanics and its applications
 (Okazaki, 1999),  3-98, Lecture Notes in Phys., 560, Springer, Berlin, 2001.

\bibitem{Tsallis09}
C. Tsallis, Introduction to nonextensive statistical mechanics. Springer
Verlag, 2009.

\bibitem{Tsui-Stormer-Gossard}
D. C., Tsui, H. L.Stormer, A. C. Gossard,
{\em Two-dimensional magnetotransport in the extreme quantum limit}.
Phys. Rev. Lett. 48 (22) (1982): 1559-1562.

\bibitem{Wil}
F. Wilczek,
{\em Quantum mechanics of fractional-spin particles}. Physical Review Letters. 49 (14) (1982): 957¨C959.

\bibitem{Wen}
X.-G. Wen,  {\em Topological orders in rigid states}, Int. J. Mod. Phys. B. 4 (2) (1989): 239.

\bibitem{Wright}
D. Wright,
{\em The tree formulas for reversion of power series}.
J. Pure Appl. Algebra 57 (1989), no. 2, 191-211.

\bibitem{Wu}
Y.-S. Wu,
{\em Statistical distribution for generalized ideal gas of fractional-statistics particles},
Phys. Rev. Lett. 73 (1994), 922-925.

\bibitem{Yang-Jiang}
S.-L. Yang, M.-Y. Jiang,
{\em The m-Schr\"oder paths and m-Schr\"oder numbers}, Disc. Math. (2021) Vol. 344, Issue 2, 112209.


\bibitem{Zhou}
J. Zhou, {\em Local mirror symmetry for one-legged topological vertex}, arXiv:0910.4320 (2009).

\end{thebibliography}
\end{document}